\documentclass[reqno,12pt]{amsart}
\usepackage[a4paper,height=20cm,top=26mm,bottom=26mm,left=26mm,right=26mm]{geometry}

\numberwithin{equation}{section}

\newtheorem{thm}{Theorem}[section]
\newtheorem{cor}[thm]{Corollary}
\newtheorem{lem}[thm]{Lemma}
\newtheorem{prop}[thm]{Proposition}
\theoremstyle{definition}

\newtheorem{example}[thm]{Example}

\newtheorem{conjecture}[thm]{Conjecture}
\theoremstyle{remark}
\newtheorem{remark}[thm]{Remark}

\usepackage{graphicx,color}
\usepackage{here}

\def\C{{\mathbb C}}
\def\R{{\mathbb R}}
\def\Q{{\mathbb Q}}
\def\Z{{\mathbb Z}}

\def\ve{{\varepsilon}}
\def\bw{\mathbf{w}}
\def\bx{\mathbf{x}}
\def\by{\mathbf{y}}
\def\bz{\mathbf{z}}
\def\bX{\mathbf{X}}
\def\bY{\mathbf{Y}}
\def\bZ{\mathbf{Z}}
\def\bW{\mathbf{W}}

\def\bp{\mathbf{p}}
\def\bq{\mathbf{q}}

\def\b1{\bar 1}
\def\bt{\bar 2}

\def\ddo{\text{\scriptsize$\frac{1}{3}$}}
\def\ddt{\text{\scriptsize$\frac{2}{3}$}}
\def\ddf{\text{\scriptsize$\frac{4}{3}$}}
\def\ddfi{\text{\scriptsize$\frac{5}{3}$}}
\def\dds{\text{\scriptsize$\frac{7}{3}$}}

\def\val{\mathrm{val}}
\def\trop{\mathrm{trop}}

\begin{document}
\title[SCA associated with infinite reduced words]{Soliton cellular automata associated with infinite reduced words}

\author[Max Glick]{Max Glick}
\address{Max Glick, Department of Mathematics, The Ohio State University, Columbus, OH 43210, USA.}
\email{glick.107@osu.edu}
\thanks{}

\author[Rei Inoue]{Rei Inoue}
\address{Rei Inoue, Department of Mathematics and Informatics,
   Faculty of Science, Chiba University,
   Chiba 263-8522, Japan.}
\email{reiiy@math.s.chiba-u.ac.jp}
\thanks{R.~I. was partially supported by JSPS KAKENHI Grant Number
26400037 and 16H03927.}

\author[Pavlo Pylyavskyy]{Pavlo Pylyavskyy} 
\address{Pavlo Pylyavskyy, School of Mathematics, University of Minnesota, 
Minneapolis, MN 55414, USA.}
\email{ppylyavs@umn.edu}
\thanks{P.~P. was partially supported by NSF grants DMS-1148634, DMS-1351590, and Sloan Fellowship.}

%\date{\today}
\date{December 23, 2017, revised on July 24, 2018}

\subjclass[2000]{}

\keywords{}

\begin{abstract}
We consider a family of cellular automata $\Phi(n,k)$ 
associated with infinite reduced elements on the affine symmetric group
$\hat S_n$, which is a tropicalization of the rational maps introduced in
\cite{GP16}. 
We study the soliton solutions for $\Phi(n,k)$ and explore 
a `duality' with 
the $\mathfrak{sl}_n$-box-ball system.
\end{abstract}

\maketitle

\section{Introduction}

\subsection{Soliton cellular automata}

A soliton cellular automaton (SCA) is a cellular automaton which has solitonic solutions. The first example, and a beautiful one at that, of SCA is the \emph{box-ball system} (BBS) discovered by Takahashi and Satsuma in 1990 \cite{TS}, which is a dynamical system of finitely many balls in an infinite number of boxes arranged in a line.
We will present its concrete definition in \S \ref{sec:BBS}, and here we simply show a typical time evolution of BBS:\begin{align*}
t=0: & ~\cdots 11222111121111111111111 \cdots
\\
t=1: & ~\cdots 11111222112111111111111 \cdots
\\
t=2: & ~\cdots 11111111221221111111111 \cdots
\\
t=3: & ~\cdots 11111111112112221111111 \cdots
\\
t=4: & ~\cdots 11111111111211112221111 \cdots
\end{align*}
where $1$ and $2$ respectively denote an empty box and a box occupied by 
a ball. One can observe here the notion of soliton that 
(i) a soliton (a sequence of balls) moves to the right with a constant velocity
proportional to the size of the soliton (the length of the sequence), and
(ii) a bigger soliton eventually passes a smaller one, after a scattering
with resulting `shifts' of their locations. 
The shifts caused by scattering provide evidence that the BBS is a nonlinear system.

Though the original definition of BBS seems to be far from known integrable 
systems, the piecewise-linear equation which describes the system turned out to be related to
the piecewise-linear limit (ultradiscretization or tropicalization) of 
the discrete KdV equation \cite{TTMS}.
Another remarkable property of BBS is that its initial value problem is 
independently solved 
by using completely different mathematics, crystal base theory 
\cite{FOY,HHIKTT} and tropical geometry \cite{IT}. 
(Also see \cite{IKT12} for a review and a list of references on these topics.)
In any of these strategies, the {\em tau-function} plays an important role
in describing solutions \cite{Hirota-book}.

We are interested in methods to construct SCA and to study their solutions 
applying combinatorics, representation theory and tropical geometry.
%also to find intriguing link among these mathematics. 
To this end, we start with the discrete soliton equations 
introduced in \cite{GP16}, and study the corresponding SCA.

\subsection{The Coxeter discrete KdV}

In \cite{GP16}, two of the authors introduced a new method to
construct dynamical models of discrete space-time coordinates,
associated to a pair of reduced words in the affine symmetric group 
$\hat{S}_n$.  
Let $s_i~(i=0,\ldots,n-1)$ be the generators of $\hat{S}_n$ with relations:
\begin{align*}
&s_i^2 = 1,
\\
&s_i s_{i+1} s_i = s_{i+1} s_i s_{i+1},
\\
&s_i s_j = s_j s_i \quad |i-j| > 1  \mod n,
\end{align*}  
where we take the indices $i$ of $s_i$ modulo $n$.
Consider a pair $u, v$ of reduced elements in $\hat S_n$ such that $v u$ is also reduced. 
Fix reduced decompositions of $u$ and $v$,
$u = s_{i_1} s_{i_2} \cdots s_{i_l}$ and 
$v = s_{j_1} s_{j_2} \cdots s_{j_m}$,
and assign each $s_i$ with a real variable $a$ as $s_i(a)$.
The dynamics called the `Coxeter discrete KdV' is defined as the rational transformation of the $\ell+m$
variables assigned to $vu$, 
induced by `moving $v$ to the right of $u$' with the use of 
the Lusztig relations:
\begin{align}
\label{eq:L-1}
&s_i(a) s_{i+1}(b) s_i(c) = s_{i+1}(bc/(a+c)) s_i(a+c) s_{i+1}(ab/(a+c)),
\\
\label{eq:L-2}
&s_i(a) s_j(b) = s_j(b) s_i(a) \quad |i-j| > 1  \mod n.
\end{align} 
Originally these relations were introduced by Lusztig 
to study the totally positive parts of algebraic groups and the canonical 
bases of quantum groups \cite{L94,BFZ96}. 
In the network model which offers a strong tool in \cite{GP16},
\eqref{eq:L-1} is depicted as in Figure \ref{fig:YBmove} and called 
the Yang-Baxter move \cite{TP12}.

\begin{figure}[ht]
\unitlength=1mm
\begin{picture}(120,30)(0,5)
\multiput(10,10)(35,0){2}{\line(1,0){5}}
\put(25,10){\line(1,0){10}}
\multiput(10,20)(35,0){2}{\line(1,0){5}}
\multiput(10,30)(25,0){2}{\line(1,0){15}}

\multiput(15,10)(20,0){2}{\line(1,1){10}}
\multiput(15,20)(20,0){2}{\line(1,-1){10}}
\put(25,20){\line(1,1){10}}
\put(25,30){\line(1,-1){10}}

\multiput(45,10)(0,10){3}{\vector(1,0){5}}

\put(7,9){\scriptsize $i$}
\put(2,19){\scriptsize $i+1$}
\put(2,29){\scriptsize $i+2$}

\put(20,15){\circle*{1}} \put(16,14){\scriptsize $a$}
\put(30,25){\circle*{1}} \put(26,24){\scriptsize $b$}
\put(40,15){\circle*{1}} \put(36,14){\scriptsize $c$}

%%%%%%%%%%%%%
\put(54,19){$\longleftrightarrow$}
%%%%%%%%%%%%%

\multiput(65,30)(35,0){2}{\line(1,0){5}}
\put(80,30){\line(1,0){10}}
\multiput(65,20)(35,0){2}{\line(1,0){5}}
\multiput(65,10)(25,0){2}{\line(1,0){15}}

\multiput(70,30)(20,0){2}{\line(1,-1){10}}
\multiput(70,20)(20,0){2}{\line(1,1){10}}
\put(80,20){\line(1,-1){10}}
\put(80,10){\line(1,1){10}}

\multiput(100,10)(0,10){3}{\vector(1,0){5}}

\put(75,25){\circle*{1}} \put(66,24){\small $\frac{bc}{a+c}$}
\put(85,15){\circle*{1}} \put(76,14){\scriptsize $a+c$}
\put(95,25){\circle*{1}} \put(86,24){\small $\frac{ab}{a+c}$}

\end{picture}
\caption{The Yang-Baxter move}
\label{fig:YBmove}
\end{figure}
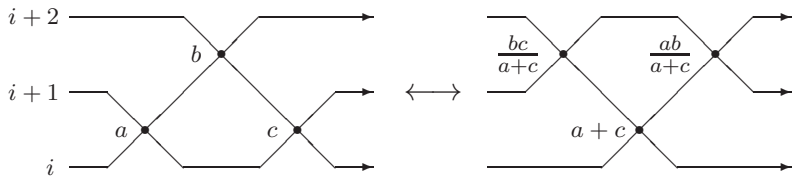

The important property of the rational transformation studied in \cite{GP16} 
is that 
they have soliton solutions.
By applying the network model 
to a pictorial representation of reduced elements in $\hat S_n$,
we identify
{\it vertex variables} and {\it chamber variables} of the network
respectively with the dynamical variables and the tau-functions
of the model.
This enables us to reduce the rational transformation to a bilinear equation,
and the multi-soliton solutions are obtained.

\subsection{Main results}

In this paper, we focus on a family of dynamical system $\phi(n,k) ~(
k=0,1,\ldots,n-2)$ for $n > 2$
given by $u=s_1 s_2 \ldots \cdots s_{n-1}$ and 
$v = s_k s_{k-1} \cdots s_{0} s_{n-1}\cdots s_{k+2}$.
As in the general case, the rational transformations are subtraction free because they are built out of Lusztig moves \eqref{eq:L-1} and hence can be written using only the other three arithmetic operations.  Hence $\phi(n,k)$ has a {\it tropicalization} (or, piecewise-linear version) obtained by replacing $(+, \times, \div)$ with $(\min, + ,-)$.  We study the tropicalization $\Phi(n,k)$ of $\phi(n,k)$, 
and mainly consider the soliton solutions for $\Phi(n,k)$ on $\Z$.

Our main results are that a form of each soliton for $\Phi(n,k)$ on $\Z$
is parametrized by $(n-2)$ positive integers
(Theorem \ref{thm:solitons}),
and that the evolution rule of the solitons has a `duality' with  
that of a well-known soliton cellular 
automaton called the $\mathfrak{sl}_n$-box-ball system \cite{FOY,HHIKTT}
(Conjecture \ref{thm:BBS}, which is a theorem for $n=3,4$).
The second result is clarified via 
the combinatorial $R$-matrix acting on the product of crystals corresponding 
to the symmetric tensor representation of $U_q'(\hat{\mathfrak{sl}}_n)$.
%
%and that the evolution rule of solitons is  
%related to the combinatorial $R$-matrix 
%acting on the product of crystals corresponding 
%to the symmetric tensor representation of $U_q'(\hat{\mathfrak{sl}}_n)$
%(Conjecture \ref{thm:BBS}, which is a theorem for $n=3,4$).
%From the second result, a `duality' with a well-known soliton cellular 
%automaton called the $\mathfrak{sl}_n$-box-ball system \cite{FOY,HHIKTT} is clarified. 
%
Additionally, we find that the rational map $\phi(n,k)$  
is a limit of another $R$-matrix action (Proposition \ref{prop:phi-formula}), namely
the geometric $R$-matrix acting on the product of 
geometric versions of the above (symmetric) crystal and its dual.

We remark that the network model \cite{TP12}
has a potential to be a useful tool to
study integrable rational maps. Once we can formulate the map using 
a network, we may get not only the information of bilinear equations
as in \cite{GP16}, 
but also the combinatorial information of the Lax form.   
For an example of the latter application, see \cite{ILP}.

\bigskip 
This paper is organized as follows:
in Section 2, following \cite{GP16} we define the dynamical system
and introduce $\phi(n,k)$ and $\Phi(n,k)$. We give the notion of soliton
for $\Phi(n,k)$ in \S \ref{subsec:soliton},
and state the first main result at Theorem \ref{thm:solitons}. 
In Section 3, the explicit formula for $\phi(n,k)$ and the relation to
the geometric $R$-matrix is shown (Proposition \ref{prop:phi-formula}).
In Section 4, by making use of the pictorial representation of reduced 
words in $\hat S_n$,
the tau-function and the bilinear equation for the model are obtained.
Sections 5 and 6 are devoted to computing soliton solutions
for $\Phi(n,k)$ in different two ways. 
We compute the tropicalization of the geometric solutions for $\phi(n,k)$ obtained in \cite{GP16} in Section 5, and see that almost all geometric solutions vanish 
in tropicalization except for the simplest ones.  
In Section 6, we naively solve the tropical bilinear equations
and prove Theorem \ref{thm:solitons}.
%In the case of $n=3$ we also obtain the formula of two-soliton solutions.  
In Section 7, we study the duality between the soliton solutions for $\Phi(n,k)$ and the $\mathfrak{sl}_n$-box-ball system. After a brief introduction of 
the box-ball system,
we present Conjecture \ref{thm:BBS}.
We explain a strategy to prove it in \S \ref{subsec:strategy}, and 
give the proof in the cases of $n=3$ and $4$ in \S \ref{subsec:n=3} and 
\S \ref{subsec:n=4} respectively. 
In the last section, we present other interesting numerical phenomena for $\Phi(n,k)$,
including negative solitons, pulsars, and relaxations of solitons and pulsars.
We add an Appendix to explain the basics of the tropical semifield used in this paper.

%%%%%%%%%%%%%%%%%%%%%%%%%%%%%%%%%%
\section{Description of the model}
%%%%%%%%%%%%%%%%%%%%%%%%%%%%%%%%%% 
%{Dynamics in infinite reduced words}

\subsection{Discrete dynamical system in infinite reduced words}

Following \cite{GP16}, we introduce a discrete dynamical system 
associated to a pair of reduced words.

Define an automorphism $\rho$ on the affine symmetric group $\hat{S}_n$ 
by $\rho(s_i) = s_{i+1}$.
Let $\eta$ be a homomorphism from $\hat{S}_n$ to the symmetric group $S_n$ 
given by $\eta(s_i) = (i,i+1)$ (as usual, $i$ is considered modulo $n$).
We call a reduced element $g \in \hat{S}_n$ a {\it glide}, when  
$$
\eta(g) = 
\begin{pmatrix}
1 & 2 & \cdots & n-k & n-k+1 & \cdots & n \\
1+k & 2+k & \cdots & n & 1 & \cdots & k
\end{pmatrix}  
$$ 
holds for some $k=0,1,\ldots,n-1$. We call this $k$ the {\it offset} of 
the glides.

Fix the expressions of two glides
$u = s_{i_1} s_{i_2} \cdots s_{i_l}$ and 
$v = s_{j_1} s_{j_2} \cdots s_{j_m}$, and assume that $v u$ is  reduced.
Let $k_1$ and $k_2$ be the {\it offsets} of $u$ and $v$ respectively.
By \cite[Lemma 2.1]{GP16}, we have
\begin{align}\label{eq:uv}
v u = \rho^{k_2}(u) \rho^{-k_1}(v).
\end{align}
Consider an (semi-)infinite word 
$$
u \cdot \rho^{-k_1}(u) \cdot \rho^{-2 k_1}(u)
\cdot \rho^{-3 k_1}(u) \cdots.
$$
Put $v$ on the left side of the infinite word,
and move it to the right using \eqref{eq:uv}:  
\begin{align}\label{eq:v-evolution}
\begin{split}
&v \cdot u \cdot \rho^{-k_1}(u) \cdot \rho^{-2 k_1}(u) \cdot \rho^{-3 k_1}(u)
\cdots
\\
& \qquad = \rho^{k_2}(u) \cdot \rho^{-k_1}(v) \cdot \rho^{-k_1}(u) \cdot \rho^{-2 k_1}(u) \cdot \rho^{-3 k_1}(u) \cdots
\\
& \qquad = \cdots
\\
& \qquad
= \rho^{k_2}(u) \cdot \rho^{-k_1+k_2}(u) \cdot \rho^{-2 k_1 +k_2}(u) \cdot \rho^{-3 k_1+k_2}(u)\cdots.
\end{split}
\end{align}
At the first line of \eqref{eq:v-evolution},
we assign positive variables $\bz_0 = (z_{0,1},\ldots,z_{0,m})$ to $v$ as 
$s_{j_1}(z_{0,1}) \cdots s_{j_m}(z_{0,m})$, and 
$\by_i = (y_{i,1},y_{i,2},\ldots,y_{i,l})$ to 
$\rho^{- i k_1}(u)$ as $s_{i_1 - i k_1}(y_{i,i_1}) \cdots s_{i_l - i k_1}(y_{i,i_l})$ for $i \in \Z_{\geq 0}$.
By using the Lusztig relations,
we define a rational transformation of the parameters 
corresponding to \eqref{eq:v-evolution};  
\begin{align}\label{eq:zy-evol}
\begin{split}
&(\bz_0, \by_0, \by_1, \by_2,\by_3,\ldots)
\\ 
&\quad \mapsto (\by'_0, \bz_1, \by_1, \by_2,\by_3, \ldots)
\\ 
&\quad \mapsto \cdots 
\\ 
&\quad \mapsto (\by'_0, \by'_1, \by'_2 ,\by'_3,\ldots, \bz_\infty),
\end{split}
\end{align}
where $(\bz_i,\by_i)$ is transformed into $(\by_i',\bz_{i+1})$ corresponding to 
$\rho^{-i k_1}(v) \cdot \rho^{-i k_1}(u) =
\rho^{-i k_1+k_2}(u) \cdot \rho^{-(i+1) k_1}(v)$.
We call each $\by_i$ a {\it state}, and each $\bz_i$ a {\it carrier}.

A {\it commuting pair} for $u$ and $v$ is a choice of
$\bw = (w_1,\ldots,w_l)$ and $\bz = (z_1,\ldots,z_m)$ satisfying
\begin{align*}
&s_{j_1}(z_1) s_{j_2}(z_2) \cdots s_{j_m}(z_m) \cdot
s_{i_1}(w_1) s_{i_2}(w_2) \cdots s_{i_l}(w_l)
\\
& \quad = 
s_{i_1+k_2}(w_1) s_{i_2+k_2}(w_2) \cdots s_{i_l+k_2}(w_l) \cdot 
s_{j_1-k_1}(z_1) s_{j_2-k_1}(z_2) \cdots s_{j_m-k_1}(z_m).
\end{align*}
We call such $\bw$ and $\bz$ a {\it vacuum state}  and 
an {\it initial carrier} respectively.

We fix a commuting pair $(\bw,\bz)$, we  
set $\bz_0 = \bz$, and we assume $\lim_{i \to \infty} \by_i = \bw$. 
For $t > 0$, define $(\by_i^t)_{i \in \Z_{\geq 0}}$ inductively by 
\begin{align}\label{eq:def-system}
(\bz, \by_0^t, \by_1^t,\by_2^t,\by_3^t, \ldots) \mapsto 
(\by_0^{t+1}, \by_1^{t+1},\by_2^{t+1},\by_3^{t+1}, \ldots).
\end{align}
Empirically, at the end of each step the final carrier $\bz_{\infty}$ equals the initial one $\bz$ and the new states again satisfy $\lim_{i \to \infty} \by_i^{t+1} = \bw$.

\begin{remark}
For the sake of exposition, we have simplified the system from \cite{GP16} wherein the state sequence $\ldots, \by_{-1}, \by_0, \by_1, \by_2, \ldots$ is bi-infinite and assumed to approach $\bw$ in both directions.  The description above corresponds to the special case in which $\by_i = \bw$ for all $i < 0$.  To properly define the general system one must insert the initial carrier farther and farther left and take a limit.  We ignore this issue, because our focus will be on the tropicalization of the system for which it is consistent to assume for each $t$ that only finitely many $\by_i^t$ differ from $\bw$. 
\end{remark}

\subsection{Dynamical system $\phi(n,k)$}

Let us focus on the case that both $u$ and $v$ have length $n-1$ as
\begin{align}\label{eq:Phi-uv}
\begin{split}
&u = s_1 s_2 s_3 \cdots s_{n-1},
\\
&v = v(k) = s_k s_{k-1} \cdots s_{0} s_{n-1}\cdots s_{k+2};
~k=0,1,\ldots,n-2,
\end{split}
\end{align}
whose offsets are $k_1 = 1$ and $k_2 = n-1$ respectively.

\begin{lem}\label{lem:comm-pair}
The following pair $(\bw,\bz)$ is a commuting pair for $(u,v(k))$:
\begin{align*}
&\bz = (\alpha_k-\alpha_{k+1}, \alpha_{k-1}-\alpha_{k+1},\ldots,
\alpha_{1}-\alpha_{k+1}, \alpha_{n}-\alpha_{k+1},\alpha_{n-1}-\alpha_{k+1},\ldots, \alpha_{k+2}-\alpha_{k+1}),
%&\bz = (\alpha_1-\alpha_2, \alpha_n-\alpha_2,\alpha_{n-1}-\alpha_2,\ldots,
%\alpha_3-\alpha_2),
\\
&\bw = (\alpha_n-\alpha_1,\alpha_n-\alpha_2,\ldots,\alpha_n-\alpha_{n-1}),
\end{align*}
where we assume $\alpha_{k+1} < \alpha_i < \alpha_n$ for $i=\{1,\ldots,n-1\} \setminus \{k+1\}$, and the indices $i$ of $\alpha_i$ are taken modulo $n$. 
\end{lem}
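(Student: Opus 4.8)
The plan is to verify the defining identity of a commuting pair directly, organizing the Lusztig bookkeeping via a matrix lift. First I would lift the generators by $s_i(a)\mapsto x_i(a):=I+aE_{i,i+1}$ for $i=1,\dots,n-1$ and $s_0(a)\mapsto x_0(a):=I+a\,\zeta\,E_{n,1}$ (all indices mod $n$, $\zeta$ a formal variable), and check that these $n\times n$ matrices over $\C[\zeta,\zeta^{-1}]$ satisfy the affine Lusztig relations \eqref{eq:L-1}, \eqref{eq:L-2}. For \eqref{eq:L-1} with indices in $\{1,\dots,n-1\}$ this is the classical $GL_n$ identity; the two cases involving the affine node (triples $\{0,1\}$ and $\{0,n-1\}$) are a one-line computation with the elementary matrices $E_{n,1},E_{1,2},E_{n-1,n}$; and \eqref{eq:L-2} holds since the relevant elementary matrices commute. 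Because Lusztig moves are subtraction-free and, along a fixed reduced word, the map from positive parameters to the product matrix is injective (an affine analogue of Lusztig's parametrization; cf.\ the network formalism of \cite{GP16}), it suffices to establish the single matrix identity
\[
X_v(\bz)\,X_u(\bw)\;=\;X_{\rho^{k_2}(u)}(\bw)\,X_{\rho^{-k_1}(v)}(\bz),
\]
where $X_g(\cdot)$ denotes the product of the corresponding $x_i$'s along the given reduced word; indeed, both sides are then positive Lusztig factorizations of the same matrix along the reduced word $\rho^{k_2}(u)\cdot\rho^{-k_1}(v)$, which forces the parameters to agree.

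Next I would compute the four matrices in closed form. Here $k_1=1$ and $k_2=n-1$, so $\rho^{k_2}(u)=s_0s_1\cdots s_{n-2}$ and $\rho^{-k_1}(v(k))$ is the cyclically decreasing word starting at $s_{k-1}$ (i.e.\ $v(k-1)$). Since $w_r=\alpha_n-\alpha_r$, the product $X_u(\bw)=\prod_{i=1}^{n-1}x_i(\alpha_n-\alpha_i)$ is upper unitriangular with $(i,j)$-entry $\prod_{r=i}^{j-1}(\alpha_n-\alpha_r)$ for $i\le j$, and $X_{\rho^{k_2}(u)}(\bw)=x_0(\alpha_n-\alpha_1)x_1(\alpha_n-\alpha_2)\cdots x_{n-2}(\alpha_n-\alpha_{n-1})$ has the same shape cyclically shifted, carrying a single $\zeta E_{n,1}$ wraparound. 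On the other side, plugging $\bz$ into $v(k)$ assigns the weight $\alpha_j-\alpha_{k+1}$ to each letter $s_j$ (reading $s_0=s_n$, $\alpha_0:=\alpha_n$), so $X_{v(k)}(\bz)$ factors as (upper unitriangular)$\cdot\bigl(I+(\alpha_n-\alpha_{k+1})\zeta E_{n,1}\bigr)\cdot$(upper unitriangular) with explicit product entries, and $X_{v(k-1)}(\bz)$ is the analogue with the $\alpha$-indices shifted by one. It then remains to multiply out both sides and match entries: I expect the products $\prod(\alpha_n-\alpha_r)$ and $\prod(\alpha_j-\alpha_{k+1})$ to telescope, with the hypothesis $\alpha_{k+1}<\alpha_i<\alpha_n$ used only to guarantee positivity---the matrix identity itself being a formal identity in the $\alpha_i$ and $\zeta$.

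The step I expect to be the main obstacle is this final entry-by-entry comparison, carried out uniformly in $k$: keeping the cyclic conventions consistent (notably $s_0$ versus $s_n$, and $\alpha_0$ versus $\alpha_n$), tracking on each side exactly which entry the lone $\zeta$-wraparound feeds into, and confirming that the upper-triangular blocks recombine identically. An alternative, matrix-free route is to run the actual sequence of Yang--Baxter moves that realizes \eqref{eq:uv} (the proof of \cite[Lemma 2.1]{GP16}) on the parametrized word, propagating $\bz$ through the $O(n^2)$ crossings of $u$ via \eqref{eq:L-1}; this is more elementary but is a longer, $k$-dependent induction, which I would use mainly as a check for small $n$. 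Conceptually, the reason the stated formula is correct is the chamber ansatz: the $\alpha_i$ are precisely the chamber variables of the infinite staircase network built from $u\cdot\rho^{-k_1}(u)\cdot\rho^{-2k_1}(u)\cdots$ with every state equal to $\bw$, and since Yang--Baxter moves preserve chamber variables, this \emph{vacuum} configuration is automatically time-invariant, which is exactly the content of the lemma.
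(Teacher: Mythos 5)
Your route is genuinely different from the paper's, and it is viable, but note what each costs. The paper's proof is one line: it applies the wire ansatz of \cite[Section 4]{GP16} to the wiring diagram of $v(k)u$ in Figure \ref{fig:uv} (with $\alpha_i$ replaced by $-\alpha_i$ on the $i$-th wire). The whole content is the single local check that the assignment ``vertex weight $=$ difference of the labels of the two wires crossing there'' is preserved by the Lusztig moves \eqref{eq:L-1}--\eqref{eq:L-2}; reading off the crossings of $v(k)$ and of $u$ gives exactly the stated $\bz$ and $\bw$, and invariance under the moves is precisely the commuting-pair identity, uniformly in $n$ and $k$. Amusingly, this is essentially your closing ``conceptual'' paragraph, except that the $\alpha_i$ are wire labels, not chamber variables (the chamber variables are the tau-functions, and the enriched move in Figure \ref{fig:eYBmove} does change the central chamber), so if you verify that one-crossing invariance you can promote that paragraph from intuition to a complete proof shorter than your matrix computation. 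Your main route --- lifting $s_i(a)\mapsto I+aE_{i,i+1}$, $s_0(a)\mapsto I+a\zeta E_{n,1}$, checking the affine Lusztig relations, proving the closed matrix identity and then invoking injectivity of the positive parametrization along the reduced word $\rho^{k_2}(u)\rho^{-k_1}(v)$ --- is logically sound, and the identity does hold (I checked $n=3$, $k=1$, where the products telescope as you predict). It carries two debts the wire ansatz avoids: the injectivity of the affine Lusztig parametrization is not proved in this paper and needs a genuine citation (the cylindric-network boundary measurements of \cite{TP12}, or Lam--Pylyavskyy's total positivity in loop groups), and the entry-by-entry verification for general $n,k$, which you yourself flag as the main obstacle, still has to be written out with the cyclic conventions kept straight. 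What your approach buys is an explicit loop-group identity independent of the network formalism; what the paper's buys is locality --- one Yang--Baxter check settles all $n$ and $k$ at once.
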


\begin{proof}[Proof of Lemma \ref{lem:comm-pair}]
To the wiring diagram of $v(k) u$ introduced at Figure \ref{fig:uv} 
in \S \ref{subsec:tau-phi}, 
we apply the {\it wire ansatz} in \cite[Section 4]{GP16} by 
replacing $\alpha_i$ with $-\alpha_i$ on the $i$-th wire.
\end{proof}

We write $\phi(n,k)$ for the dynamical system \eqref{eq:def-system}
given by the two glides 
$u$ and $v(k)$ of \eqref{eq:Phi-uv} together with the commuting pair
of Lemma \ref{lem:comm-pair}.

\begin{example}\label{ex:n=3}
The system $\phi(3,1)$ corresponds to the words $u=s_1s_2$, $v = s_1s_0$.  The equation \eqref{eq:uv} reads
\begin{displaymath}
vu = s_1s_0s_1s_2 = s_0s_1s_0s_2 = \rho^2(u)\rho^{-1}(v)
\end{displaymath}
which follows from a single braid move.  By \eqref{eq:L-1} the weights evolve according to
%The dynamics of $\phi(3,1)$ is given by  
\begin{align}\label{eq:phi(31)}
\begin{split}
((z_{i,1}^t,z_{i,2}^t),(y_{i,1}^t,y_{i,2}^t))
~\mapsto~ &((y_{i,1}^{t+1},y_{i,2}^{t+1}),(z_{i+1,1}^t,z_{i+1,2}^t))
\\
&=  
\left(\left(\frac{z_{i,2}^t y_{i,1}^t}{z_{i,1}^t+y_{i,1}^t},z_{i,1}^t+y_{i,1}^t
\right),
\left(\frac{z_{i,1}^t z_{i,2}^t}{z_{i,1}^t+y_{i,1}^t},y_{i,2}^t \right)
 \right).  
\end{split}
\end{align}
The commuting pair is $\bz = (\alpha_1-\alpha_2,\alpha_3-\alpha_2)$,
$\bw =(\alpha_3-\alpha_1,\alpha_3-\alpha_2)$ and one can check easily from the formula that $(\bz,\bw) \mapsto (\bw,\bz)$ does hold.
%The full system inputs $(\by_i^t)_{i \in \Z}$ such that
%$\by_i = \bw$ for $i \leq j \ll 0$ and uses \eqref{eq:phi(31)} (with $\bz_j^t = \bz$) to calculate the $\by_i^{t+1}$.  
The full system inputs $\by_0^t, \by_1^t, \ldots$ with $\lim_{i\to\infty} \by_i^t = \bw$ and uses \eqref{eq:phi(31)} (with $\bz_0^t = \bz$) to calculate the $\by_i^{t+1}$.
\end{example}

\subsection{Tropical dynamical system $\Phi(n,k)$}

%For a substruction-free rational map, we define the corresponding 
%piecewise-linear map via a limiting procedure called {\it tropicalization}.
%(See Appendix A for the preliminary of tropicalization.) 

For $A,B,C \in \R$, we define the {\it tropicalization} of 
the Lusztig relations \eqref{eq:L-1} and \eqref{eq:L-2} as
\begin{align}\label{eq:tropL-1}
&s_i(A) s_{i+1}(B) s_i(C) 
= s_{i+1}(A') s_i(B') s_{i+1}(C'),
\\
&s_i(A) s_j(B) = s_j(B) s_i(A) \quad |i-j| > 1  \mod n,
\end{align} 
where 
$$
(A',B',C') := (B+C-\min(A,C), \min(A,C), A+B+-\min(A,C)).
$$
See Appendix A for preliminaries on tropicalization.
It is straightforward to tropicalize Lemma \ref{lem:comm-pair}.

\begin{lem}\label{lem:trop-pair}
The following pair $(\bW, \bZ)$ is a tropical commuting pair 
for $(u,v(k))$:
\begin{align}\label{trop-commpair}
\bZ = (A_k,A_{k-1},\ldots,A_1,A_{n},A_{n-1},\ldots,A_{k+2}),
%\bZ = (A_1,A_n,A_{n-1},\ldots,A_3),
\quad
\bW = (A_n,A_n,\ldots,A_n),
\end{align}
where we assume 
\begin{align}\label{eq:A-condition}
A_{k+1} > A_i > A_n; ~i \in \{1,\ldots,n-1\} \setminus \{k+1\}.
\end{align}
\end{lem}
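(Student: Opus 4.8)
The plan is to deduce Lemma~\ref{lem:trop-pair} directly from Lemma~\ref{lem:comm-pair} by tropicalization, as the text anticipates when it says ``It is straightforward to tropicalize Lemma~\ref{lem:comm-pair}.'' Recall that Lemma~\ref{lem:comm-pair} asserts that the commuting-pair identity
\begin{displaymath}
s_{j_1}(z_1)\cdots s_{j_m}(z_m)\cdot s_{i_1}(w_1)\cdots s_{i_l}(w_l)
= s_{i_1+k_2}(w_1)\cdots s_{i_l+k_2}(w_l)\cdot s_{j_1-k_1}(z_1)\cdots s_{j_m-k_1}(z_m)
\end{displaymath}
holds for the specific positive rational values $z_r = \alpha_{?}-\alpha_{k+1}$ and $w_r = \alpha_n - \alpha_r$ built from real parameters $\alpha_i$ with $\alpha_{k+1}<\alpha_i<\alpha_n$. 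The defining identity is an equality of words obtained by a sequence of Lusztig moves \eqref{eq:L-1}, \eqref{eq:L-2}; each such move is a subtraction-free rational identity in the weights. First I would set $A_i := -\alpha_i$, so that the substitution $\alpha_i \mapsto -\alpha_i = A_i$ carries the strict inequalities $\alpha_{k+1}<\alpha_i<\alpha_n$ exactly to the inequalities \eqref{eq:A-condition}, namely $A_{k+1}>A_i>A_n$, and carries the differences $\alpha_a - \alpha_{k+1}$ to $A_{k+1}-A_a$ and $\alpha_n-\alpha_r$ to $A_r - A_n$. I would then observe that the components of $\bZ$ and $\bW$ in \eqref{trop-commpair} are precisely these tropical differences, so that $(\bW,\bZ)$ is the coordinatewise tropicalization of the pair $(\bw,\bz)$ from Lemma~\ref{lem:comm-pair}.

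Next I would invoke the standard compatibility of tropicalization with subtraction-free identities, as recorded in Appendix~A: if a subtraction-free rational identity in positive variables holds identically, then replacing $(+,\times,\div)$ by $(\min,+,-)$ yields a valid piecewise-linear identity in the tropical variables. The tropicalized Lusztig move is exactly \eqref{eq:tropL-1} with $(A',B',C')=(B+C-\min(A,C),\min(A,C),A+B-\min(A,C))$, and \eqref{eq:L-2} tropicalizes to the trivial commutation of distant generators. Applying the same sequence of (now tropical) moves used to prove the commuting-pair identity in Lemma~\ref{lem:comm-pair} transforms
\begin{displaymath}
s_{j_1}(Z_1)\cdots s_{j_m}(Z_m)\cdot s_{i_1}(W_1)\cdots s_{i_l}(W_l)
\end{displaymath}
into
\begin{displaymath}
s_{i_1+k_2}(W_1)\cdots s_{i_l+k_2}(W_l)\cdot s_{j_1-k_1}(Z_1)\cdots s_{j_m-k_1}(Z_m),
\end{displaymath}
which is precisely the definition of $(\bW,\bZ)$ being a tropical commuting pair. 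Here $k_1=1$, $k_2=n-1$ as fixed for $\phi(n,k)$ in \eqref{eq:Phi-uv}.

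The only point requiring a little care — and the step I would flag as the mild obstacle — is the role of the strict inequalities. In the geometric setting the positivity $\alpha_{k+1}<\alpha_i<\alpha_n$ guarantees that denominators such as $a+c$ appearing in \eqref{eq:L-1} never vanish, so every Lusztig move is legitimate along the way; in the tropical setting one must check the analogous fact, that the arguments fed into each $\min$ are such that the piecewise-linear identity remains an honest equality of real numbers (no issues actually arise, since $\min$ and $+$ are everywhere-defined, but one should confirm the inequalities \eqref{eq:A-condition} are exactly what is inherited from the geometric hypotheses and are not needed beyond that). I would therefore conclude simply: \emph{Tropicalize the identity proved in Lemma~\ref{lem:comm-pair}, using that the substitution $\alpha_i \mapsto A_i$ sends $\bw,\bz$ to $\bW,\bZ$ and sends the geometric positivity hypotheses to \eqref{eq:A-condition}; since the defining equation of a commuting pair is subtraction-free, Appendix~A gives the tropical version immediately.}
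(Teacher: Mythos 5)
Your overall plan---obtain the lemma by tropicalizing Lemma \ref{lem:comm-pair}, which is what the paper itself intends---is fine, but the execution breaks down at the key step. Setting $A_i:=-\alpha_i$ is just a linear change of real variables, not tropicalization, and the entries it produces do not agree with \eqref{trop-commpair}: under your substitution $z_p=\alpha_{k+1-p}-\alpha_{k+1}$ goes to $A_{k+1}-A_{k+1-p}$ and $w_r=\alpha_n-\alpha_r$ goes to $A_r-A_n$, whereas the lemma asserts that the entries are $A_{k+1-p}$ and $A_n$. These are not the same, and your pair is in fact not a tropical commuting pair: for $\Phi(3,1)$, feeding $(A_2-A_1,\,A_2-A_3)$ and $(A_1-A_3,\,A_2-A_3)$ into \eqref{eq:tropevol-3} would force $A_2-A_3=\min[A_2-A_1,\,A_1-A_3]$, which is impossible since $A_2>A_1>A_3$ makes $A_2-A_3$ strictly larger than both arguments of the $\min$. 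The correct passage from Lemma \ref{lem:comm-pair} is to realize the geometric data as a positive one-parameter family or over the Puiseux field, taking $\alpha_i$ with $\trop(\alpha_i)=A_i$ (e.g.\ $\alpha_i=e^{-A_i/\ve}$); then the hypothesis $\alpha_{k+1}<\alpha_i<\alpha_n$ corresponds to \eqref{eq:A-condition}, and dominance (Lemma \ref{lem:non-arch-eq}) gives $\trop(\alpha_{k+1-p}-\alpha_{k+1})=A_{k+1-p}$ and $\trop(\alpha_n-\alpha_r)=A_n$, which are exactly the entries of \eqref{trop-commpair}.

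A second, related gap is your appeal to ``subtraction-free identities tropicalize.'' The defining equation of a commuting pair is not an identity of subtraction-free rational functions in free variables; it is an equality that holds only at the particular values of Lemma \ref{lem:comm-pair}, and those values ($\alpha_a-\alpha_{k+1}$, $\alpha_n-\alpha_r$) are themselves not subtraction-free in the $\alpha_i$. What one actually uses is: (1) the strict inequalities guarantee each input entry is a ``positive'' element of the family (leading behaviour $e^{-A/\ve}$ with the stated tropical value), and (2) each Lusztig move \eqref{eq:L-1} is a subtraction-free map, so applying the same sequence of moves to the family and letting $\ve\to 0$ turns every move into its tropical counterpart \eqref{eq:tropL-1} evaluated at the tropicalized entries, yielding the tropical commuting-pair identity for $\bZ,\bW$. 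In particular the inequalities \eqref{eq:A-condition} are essential beyond being ``inherited'': they are exactly what resolves every $\min$ along the way (already in $\Phi(3,1)$ one needs $A_1>A_3$ to get $\min[A_1,A_3]=A_3$ so that $\bW$ reproduces itself), so your closing remark that no further use is made of them is also incorrect.
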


Using the tropical Lusztig relations for the glides $u$ and $v(k)$ of
\eqref{eq:Phi-uv}
we define the piecewise-linear transformation of the real variables
$\bZ_i = (Z_{i,1},\ldots,Z_{i,n-1})$
and $\bY_i = (Y_{i,1},Y_{i,2},\ldots,Y_{i,n-1})$ 
for $i \in \Z_{\geq 0}$ in the same way as \eqref{eq:zy-evol}.  

We write $\Phi(n,k)$ for the tropical dynamical system 
given by the two glides \eqref{eq:Phi-uv} and 
the commuting pair \eqref{trop-commpair}.
As in the rational case, we call each $\bZ_i$ and each $\bY_i$
a {\it carrier} and a {\it state} respectively.
Also, we call $\bW$ the {\it vacuum state}, and $\bZ$ the {\it initial carrier}.

\begin{example}
Corresponding to Example \ref{ex:n=3}, we have $\Phi(3,1)$ given by
\begin{align}\label{eq:tropevol-3}
\begin{split}
&((Z_{i,1}^t,Z_{i,2}^t),(Y_{i,1}^t,Y_{i,2}^t))
\\
& \quad \mapsto~ ((Y_{i,1}^{t+1},Y_{i,2}^{t+1}),(Z_{i+1,1}^t,Z_{i+1,2}^t))
\\
& \qquad \quad =  
\left((Z_{i,2}^t + Y_{i,1}^t - \min[Z_{i,1}^t, Y_{i,1}^t],
\min[Z_{i,1}^t, Y_{i,1}^t]),
(Z_{i,1}^t + Z_{i,2}^t - \min[Z_{i,1}^t, Y_{i,1}^t], Y_{i,2}^t)\right)
\end{split}
\end{align}
with the commuting pair $\bZ = (A_1,A_3)$ and $\bW = (A_3,A_3)$.
\end{example}

In this paper, we mainly study the tropical dynamics on $\Z \subset \R$. 
In particular we consider the case with $A_n=0$, $A_i = 1$ 
for $i=\{1,\ldots,n-1\} \setminus \{k+1\}$, and $A_{k+1} > 1$,
so that the commuting pair is 
\begin{align}\label{eq:Phi-comm}
\bZ = (\underbrace{1,\ldots,1}_k,0,\underbrace{1,\ldots,1}_{n-2-k}),
\quad 
\bW = (0,\ldots,0).
\end{align}

\subsection{Solitons}\label{subsec:soliton}
We define a {\it one-soliton} for $\Phi(n,k)$ 
to be a finite sequence $\bX_1,\ldots, \bX_m$ of non-vacuum states satisfying the following conditions:
\begin{itemize}
\item[(i)] 
The sequence moves to the right with a constant velocity, i.e. for some $a$ and $b$ the input
\begin{displaymath}
\bX_1,\ldots, \bX_M, \bW, \bW, \ldots
\end{displaymath} 
is carried under $a$ steps of $\Phi(n,k)$ to
\begin{displaymath}
\underbrace{\bW,\ldots, \bW}_b, \bX_1,\ldots, \bX_M, \bW, \bW, \ldots.
\end{displaymath}

\item[(ii)]
For each $t$, the final carrier equals the initial one, 
i.e. $\bZ_{i}^t = \bZ$ for $i \gg 0$ (unlike the rational case, we know of inputs for which this condition fails, see Section \ref{subsec:relax}).
\end{itemize}
An amazing feature of soliton systems is the existence of \emph{multi-soliton} solutions, which we define in our setting to be an input consisting of several one-solitons separated by vacuums such that 
\begin{itemize}
\item[(iii)]
for $t \gg 0$ the outcome is a collection of one-solitons, arranged in increasing order of velocity from left to right, with the same set of veloicities as the initial solitons. 
\end{itemize}
We will see that different one-solitons can have the same speed and in particular that the components of a multi-soliton for $t \gg 0$ may differ from the initial ones.

For two states $\bY_i^t$ and $\bY_{i'}^{t'}$, we say  
$\bY_i^t$ is bigger than $\bY_{i'}^{t'}$, if $Y_{i,j}^t \geq Y_{i',j}^{t'}$ for all $j=1,\ldots,n-1$ and there is at least one $j$ such that $Y_{i,j}^t > Y_{i',j}^{t'}$.
In the same manner, we say $\bY_i^t$ is smaller than $\bY_{i'}^{t'}$, if $Y_{i,j}^t \leq Y_{i',j}^{t'}$ for all $j=1,\ldots,n-1$ and there is at least one $j$ such that $Y_{i,j}^t < Y_{i',j}^{t'}$.
We say a soliton is {\it positive} (resp. {\it negative}) 
when all states of the soliton 
are bigger (resp. smaller) than the vacuum state.

The following are several examples of positive solitons,
where we show $(\bY_i^t)_i$ for each $t$.  

\begin{example}\label{ex:one-solitonsY}
One-solitons.
\\
(i) $\Phi(3,1)$:
\begin{align*}
t = 0: & (00)(31)(00)(00)(00)(00)(00)(00)(00)
\\
t = 1: & (00)(21)(10)(00)(00)(00)(00)(00)(00)
\\
t = 2: & (00)(11)(20)(00)(00)(00)(00)(00)(00)
\\
t = 3: & (00)(01)(30)(00)(00)(00)(00)(00)(00)
\\
t = 4: & (00)(00)(31)(00)(00)(00)(00)(00)(00)
\\
t = 5: & (00)(00)(21)(10)(00)(00)(00)(00)(00)
\end{align*}
(ii) $\Phi(4,1)$:
\begin{align*}
t = 0: & (000)(312)(000)(000)(000) (000) (000) 
\\
t = 1: & (000)(212)(100)(000) (000) (000) (000) 
\\
t = 2: & (000)(112)(200)(000) (000) (000) (000) 
\\
t = 3: & (000)(012)(300)(000) (000) (000) (000) 
\\
t = 4: & (000) (002)(310) (000) (000) (000) (000) 
\\ 
t = 5: & (000) (001) (311)(000)(000) (000) (000) 
\\
t = 6: & (000)(000)(312)(000)(000) (000) (000) 
\\
t = 7: & (000)(000)(212)(100)(000) (000) (000) 
\end{align*}
\end{example}

For a one-soliton  we define its {\it minimal length} to be the minimal lattice
length the soliton occupies in propagation.
We also define the {\it velocity} of a soliton,
which is the ratio of `the minimal number of time steps 
it takes to recover the initial sequence'
and `the lattice length it propagates during the time steps' (in the notation of (i), velocity $= b/a$) . 
In the first case of Example \ref{ex:one-solitonsY}, 
the soliton occupies one lattice at $t=0,4$ and two lattices at $t=1,2,3,5$.
The sequence at $t=0$ is recovered at $t=4$, 
moving one lattice to the right.
Hence it has minimal length one, and velocity $1/4$. 
Similarly, the soliton in the second case has
minimal length one, and velocity $1/6$.

\begin{thm}\label{thm:solitons}
Consider the system $\Phi(n,k)$ on $\Z$ with commuting pair \eqref{eq:Phi-comm}.  Then 
$$
X = (b_1,b_2,\ldots,b_{n-1})
$$
is a one-soliton with minimal length one for any $b_1,\ldots, b_{n-1} \in \Z_{\geq 1}$ with $b_{k+1} = 1$.
Its velocity is $(\sum_{k=1}^{n-1} b_k)^{-1}$.
\end{thm}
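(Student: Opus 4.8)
The plan is to verify directly that the single non-vacuum state $\bX = (b_1, \ldots, b_{n-1})$ with $b_{k+1}=1$, inserted into the vacuum background, returns to itself after $a = \sum_{j=1}^{n-1} b_j$ time steps having shifted one lattice site to the right, and that throughout this evolution the final carrier stays equal to $\bZ$. Since the minimal length is claimed to be one, it suffices to track, for each of the $a$ time steps, what happens to the configuration $\ldots, \bW, \bX, \bW, \ldots$ (or its shifted/intermediate forms) under a single application of $\Phi(n,k)$. Each time step amounts to feeding the carrier $\bZ$ in from the left and repeatedly applying the tropical Lusztig moves of $u = s_1 \cdots s_{n-1}$ against $v(k) = s_k \cdots s_0 s_{n-1} \cdots s_{k+2}$; because only finitely many $\bY_i$ differ from $\bW$, the carrier-passing computation $(\bZ_i, \bY_i) \mapsto (\bY_i', \bZ_{i+1})$ is a finite piecewise-linear calculation at each site.

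The key steps, in order, are: (1) Write out the single-site update map $(\bZ, \bY) \mapsto (\bY', \bZ')$ explicitly as a composition of the tropicalized moves $(A',B',C') = (B+C-\min(A,C), \min(A,C), A+B-\min(A,C))$ and the commutations $s_i(A)s_j(B) = s_j(B)s_i(A)$, arranged according to how $v(k)u$ rewrites as $\rho^{n-1}(u)\rho^{-1}(v(k))$. (2) Using the normalization $\bZ = (1,\ldots,1,0,1,\ldots,1)$ (the $0$ in slot $k+1$), $\bW = (0,\ldots,0)$, and the assumption $b_j \ge 1$ with $b_{k+1}=1$, show that when the carrier $\bZ$ meets a vacuum state $\bW$ it passes through unchanged, i.e. $(\bZ, \bW) \mapsto (\bW, \bZ)$ — this is exactly the tropical commuting-pair property of Lemma \ref{lem:trop-pair}, already available. (3) Compute the interaction $(\bZ, \bX) \mapsto (\bX^{(1)}, \bZ^{(1)})$ of the carrier with the single non-vacuum state: show that the outgoing state $\bX^{(1)}$ has one of the $b_j$ decreased by $1$ in a controlled cyclic fashion, while the outgoing carrier $\bZ^{(1)}$ is either $\bZ$ again or a "loaded" carrier that then deposits its content onto the next site. (4) Iterate: track the loaded carrier as it moves rightward depositing a unit, producing a two-site intermediate configuration, and show that after the right number of steps the soliton reconstitutes as $\bX$ one site to the right. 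The bookkeeping should reveal that exactly $\sum_j b_j$ steps are needed — morally, $b_j$ steps to "empty" the $j$-th coordinate through the carrier and $b_j$ steps for it to refill — hence velocity $(\sum_j b_j)^{-1}$; and at the end of every one of these steps $\bZ_i^t = \bZ$ for $i \gg 0$, giving condition (ii).

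The main obstacle will be step (3)–(4): correctly identifying the cyclic/combinatorial pattern by which the single unit of "content" migrates through the $n-1$ coordinates of the carrier under the tropical moves, and matching the number of steps to $\sum_j b_j$ rather than, say, $\max_j b_j$ or $n-1$. It will help to exploit the ambient structure rather than raw recursion: the state $(b_1,\ldots,b_{n-1})$ should be encoded in terms of the chamber/tau variables of the wiring diagram (as in the network-model setup of \cite{GP16} and Section 4), or equivalently read off from where the $0$ sits relative to the $1$'s in the carrier, so that the min operations in the tropical Lusztig move always resolve the same way given $b_j \ge 1$. With the correct encoding the per-step update becomes a single deterministic "shift the distinguished slot by one" rule, and the $\sum_j b_j$ count follows by summing the slot-by-slot dwell times. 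A useful sanity check along the way is to confirm the plan reproduces the $\Phi(3,1)$ data of Example \ref{ex:one-solitonsY}(i), where $\bX=(3,1)$, $\sum b_j = 4$, and indeed the period is $4$; and the $\Phi(4,1)$ case with $\bX=(3,1,2)$, $\sum b_j = 6$, period $6$.
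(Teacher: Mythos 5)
Your overall strategy (direct verification that the single state $(b_1,\ldots,b_{n-1})$ propagates with period $\sum_j b_j$, using the explicit single-site update and the commuting-pair property for the vacuum) is legitimate in principle --- the paper itself remarks before Section 6 that one could verify Theorem \ref{thm:solitons} directly from the formulas of Section \ref{sec:phi-formula}. But as written your proposal has a genuine gap: steps (3)--(4), which you yourself flag as "the main obstacle," are the entire mathematical content of the theorem, and they are not carried out. You would need to write down, for general $n$, $k$ and general $(b_1,\ldots,b_{n-1})$ with $b_{k+1}=1$, the exact two-site intermediate configurations the soliton passes through during its period (e.g.\ for $\Phi(4,1)$, $(3,1,2)$ passes through $(0,1,2)(3,0,0)$, $(0,0,2)(3,1,0)$, $(0,0,1)(3,1,1)$, $(0,0,0)(3,1,2)$), together with the corresponding "loaded" carriers, and then verify by induction that each application of the tropicalized map \eqref{eq:phi-formula} reproduces this pattern, i.e.\ that every $\min$ resolves as claimed under the hypotheses $b_j\ge 1$, $b_{k+1}=1$. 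The pattern of which coordinate is being transferred, and which slot of the carrier is loaded, changes as the activity moves through the $n-1$ coordinates, so "a single deterministic shift-the-distinguished-slot rule" is an assertion, not a proof. Note also that your heuristic count ("$b_j$ steps to empty the $j$-th coordinate and $b_j$ steps for it to refill") double-counts: emptying of one site and refilling of the next happen in the same time step, which is precisely why the period is $\sum_j b_j$ rather than $2\sum_j b_j$; and condition (ii) (the final carrier equals $\bZ$ at \emph{every} time step) must be checked along with the pattern, not appended at the end.

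For comparison, the paper avoids this case analysis entirely by working with the tropical tau-functions: it exhibits the explicit ansatz $T_0(i,t)=\min[0,L+iP-t]$, $T_p(i,t)=\min[0,L+iP-t+\sum_{j\ge p}R_j]$, checks that it solves the tropical bilinear equations \eqref{eq:n-ttau1}--\eqref{eq:n-ttau2} (Proposition \ref{prop:n-onesoliton}(i)), and then substitutes into \eqref{eq:yz-Tau} to read off the minimal form $(P-1+\sum_j R_j,\,-R_1,\ldots,1,\ldots,-R_{n-3})$, which realizes every $(b_1,\ldots,b_{n-1})$ with $b_j\ge1$, $b_{k+1}=1$; the velocity $1/P=1/\sum_j b_j$ is immediate because the solution depends on $(i,t)$ only through $iP-t$. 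Your closing suggestion to "encode the state in chamber/tau variables" is in fact this route, but your proposal does not develop it; either you must complete the combinatorial induction of steps (3)--(4), or carry out the tau-function computation, to have a proof.
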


%\begin{thm}\label{thm:solitons}
%In the system $\Phi(n,k)$ on $\Z$ with commuting pair \eqref{eq:Phi-comm}, 
%any positive soliton has minimal length one, and its form at the minimal length is  
%$$
%Y_i^t = (b_1,b_2,\ldots,b_{n-1}); ~b_{k+1} = 1,~b_i \in \Z_{\geq 1} \text{ for }
%i \in \{1,\ldots,n-1\}\setminus \{k+1\}.
%$$ 
%Its velocity is $(\sum_{k=1}^{n-1} b_k)^{-1}$.
%\end{thm}

In principle one can verify that $X$ above is a soliton directly using formulas for the system we develop in \S \ref{sec:phi-formula}.  Instead we give the proof in \S \ref{sec:1soliton} at which point we are able to give explicit descriptions of one-solitons in terms of tau-functions (Proposition \ref{prop:n-onesoliton}).  One upshot of this approach is that it mimics what is done for similar systems and suggests that Theorem \ref{thm:solitons} gives all positive solitons (and in particular, all positive solitons have minimal length one).  Moreover, the tau-functions could be useful in constructing multi-solitons, which we demonstrate in the case of $n=3$.

To denote a positive soliton we use the form at the minimal length 
of Theorem \ref{thm:solitons},
and call it the {\it minimal form} of the soliton.
For example, in Example \ref{ex:one-solitonsY}
the one-solitons have the minimal forms $(3,1)$ for (i) and 
$(3,1,2)$ for (ii).  
In the rest we often call a positive soliton just `a soliton'.

Here are examples of soliton scatterings whose combinatorial property
will be studied in \S \ref{sec:BBS}.

\begin{example}\label{ex:n=3Y}
Two-solitons.
\\ 
(i) The case of $\Phi(3,1)$; $(1,1) \times (3,1) \mapsto (3,1) \times (1,1)$: 
\begin{align*}
t = 0: & (00)(11)(00)(31)(00)(00)(00)(00)(00)(00)(00)(00)
\\
t = 1: & (00)(01)(10)(21)(10)(00)(00)(00)(00)(00)(00)(00)
\\
t = 2: & (00)(00)(11)(11)(20)(00)(00)(00)(00)(00)(00)(00)
\\
t = 3: & (00)(00)(01)(20)(21)(00)(00)(00)(00)(00)(00)(00)
\\
t = 4: & (00)(00)(00)(21)(11)(00)(00)(00)(00)(00)(00)(00)
\\
t = 5: & (00)(00)(00)(11)(20)(11)(00)(00)(00)(00)(00)(00)
\\
t = 6: & (00)(00)(00)(01)(30)(01)(10)(00)(00)(00)(00)(00)
\\
t = 7: & (00)(00)(00)(00)(31)(00)(11)(00)(00)(00)(00)(00)
\end{align*}
(ii) The case of $\Phi(4,1)$;
$(2,1,1) \times (3,1,2) \mapsto (4,1,1) \times (1,1,2)$:
\begin{align*}
t = 0: & (000)(211)(001)(311)(000) (000) (000) (000) (000)(000)
\\
t = 1: & (000)(111)(100)(312) (000) (000) (000) (000) (000) (000) 
\\
t = 2: & (000)(011)(200)(212) (100) (000) (000) (000) (000) (000)
\\
t = 3: & (000)(001)(210)(112) (200) (000) (000) (000) (000) (000) 
\\
t = 4: & (000)(000)(211)(012) (300) (000) (000) (000) (000) (000)
\\ 
t = 5: & (000) (000)(111)(102)(310) (000) (000) (000) (000) (000)
\\
t = 6: & (000)(000)(011)(201)(311) (000) (000) (000) (000) (000) 
\\
t = 7: & (000)(000)(001)(210)(312) (000) (000) (000) (000) (000)  
\\
t = 8: & (000)(000)(000)(211)(212) (100) (000) (000) (000) (000)  
\\
t = 9: & (000)(000)(001)(111)(302) (110) (000) (000) (000) (000)  
\\
t = 10: & (000)(000)(000)(011)(401) (111) (000) (000) (000) (000)  
\\
t = 11: & (000)(000)(000)(001)(410) (112) (000) (000) (000) (000)  
\\
t = 12: & (000)(000)(000)(000)(411) (012) (100) (000) (000) (000)  
\\
t = 13: & (000)(000)(000)(000)(311) (102) (110) (000) (000) (000)  
\\
t = 14: & (000)(000)(000)(000)(211) (201) (111) (000) (000) (000)  
\end{align*}
where the change of internal structure of solitons is observed. 
\end{example}

The behavior witnessed in these and other examples suggest the following.

\begin{conjecture}
Combining positive solitons gives rise to multi-soliton solutions as in condition (iii) at the beginning of this subsection.
%Combining positive solitons separated by vacuum states gives rise to multi-soliton solutions.
\end{conjecture}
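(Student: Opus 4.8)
The plan is to produce explicit $N$-soliton configurations through tropical tau-functions and to read off their $t\to\pm\infty$ asymptotics, with the combinatorial $R$-matrix governing how the internal structures of the individual solitons are exchanged during a collision. Two ingredients would be combined: first, an explicit min-plus $N$-soliton tau-function solving the bilinear equation of Section~4, built on the one-soliton tau-functions of Proposition~\ref{prop:n-onesoliton}; and second, a soliton-scattering rule phrased through the combinatorial $R$-matrix on the relevant $U_q'(\hat{\mathfrak{sl}}_n)$-crystals, compatible with the $\mathfrak{sl}_n$ box-ball picture of Conjecture~\ref{thm:BBS}, whose Yang--Baxter property forces the many pairwise exchanges inside a multi-soliton to be mutually consistent.

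First I would write down the $N$-soliton ansatz. By Proposition~\ref{prop:n-onesoliton} a single soliton of minimal form $\bX=(b_1,\ldots,b_{n-1})$ with $b_{k+1}=1$ is produced by tau-functions that are minima of $0$ and of affine forms whose slopes encode the velocity $(\sum_j b_j)^{-1}$ of Theorem~\ref{thm:solitons} and whose constant terms encode the remaining $b_j$. For $N$ solitons I would take the tropicalization of the classical Hirota $N$-soliton tau-function,
\[
\tau \;=\; \min_{S\subseteq\{1,\ldots,N\}}\Bigl(\,\sum_{p\in S}\eta_p \;+\; \sum_{\{p,q\}\subseteq S}\gamma_{pq}\,\Bigr),
\]
where $\eta_p$ is the phase of the $p$-th one-soliton and $\gamma_{pq}\ge 0$ is an interaction constant depending only on the parameter vectors of solitons $p$ and $q$ (and on $n$ and $k$). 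Substituting into the tropical bilinear equation should reduce, after the standard reorganization of Hirota's method, to a finite list of min-plus identities in the triples $(\eta_p,\eta_q,\gamma_{pq})$, each of which I would verify by a case analysis on the orderings of the finitely many linear forms, following the template used for the ultradiscrete KdV and KP hierarchies; the $n=3$ multi-soliton tau-functions already treated in the paper serve as the base case.

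With $\tau$ in closed form, passing to a frame co-moving with soliton $p$ and sending $t\to+\infty$ (resp.\ $-\infty$) kills all but a bounded number of the $2^N$ terms in the minimum, so $\tau$ localizes to a sum of a subset of the $\eta_q$'s together with interaction constants; the surviving $S$ at $t=+\infty$ are exactly those containing $p$ and every $q$ of strictly larger velocity, which exhibits soliton $p$, shifted by $\sum_q\gamma_{pq}$ over those $q$, sitting to the right of every slower soliton --- this is condition~(iii). The change of internal structure seen in Example~\ref{ex:n=3Y}(ii) should enter here as the statement that the localized one-soliton occupying a given velocity slot is the image of the incoming one under a composition of combinatorial $R$-matrices on the associated $U_q'(\hat{\mathfrak{sl}}_n)$-crystals, the Yang--Baxter equation making this independent of the order in which the collisions occur. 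Condition~(ii) I would read off the same formula: every affine form entering $\tau$ agrees with $0$ for $i\gg 0$ at each fixed $t$, so the derived carrier $\bZ_i^t$ stabilizes to $\bZ$ and a positive multi-soliton never enters the pathological regime of Section~\ref{subsec:relax}.

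The hardest step, I expect, is finding the correct min-plus $\tau$ and the constants $\gamma_{pq}$ and then carrying out the asymptotic bookkeeping. As Section~5 shows, naive tropicalization of the geometric $N$-soliton tau-functions collapses to the trivial solution outside the very simplest cases, so one cannot ultradiscretize a known rational formula; the tropical tau-function must be built from scratch, presumably by bilinearizing the one-soliton data of Proposition~\ref{prop:n-onesoliton}. A second difficulty is that when two solitons share a velocity the co-moving argument fails to separate them, so one must analyze the resulting bound state and show that it still resolves for $t\gg 0$ into a sorted family; this is exactly where the $R$-matrix description --- and hence a general proof of Conjecture~\ref{thm:BBS}, which we establish only for $n\le 4$ --- seems unavoidable. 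A cleaner alternative would be to prove Conjecture~\ref{thm:BBS} for all $n$ first and then transfer the multi-soliton statement directly from the crystal-theoretic solution of the $\mathfrak{sl}_n$ box-ball system \cite{FOY,HHIKTT}.
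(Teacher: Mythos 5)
You are addressing a statement that the paper itself leaves as a conjecture: the authors establish it only for $n=3$ and $n=4$, and by a route quite different from yours. They do not construct multi-soliton tau-functions; instead they prove the box-ball duality (Conjectures \ref{conj:Z-fin} and \ref{thm:BBS}) for these $n$ by a finite, elementary case analysis of ``stable'' sequences of states and the associated carrier diagrams (Lemmas \ref{lem:n=3-seq}, \ref{lem:n=4-seq} and the accompanying figures), verifying \eqref{eq:phiR-nk} on every local configuration that can occur while positive solitons propagate, and then importing the known sorting behaviour of $\mathfrak{sl}_n$-BBS solitons; the symmetry of Proposition \ref{prop:phi-phi-R} halves the work. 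Your proposal is a tropical Hirota program, which is a legitimate alternative strategy, but as written it is a plan rather than a proof: each of its load-bearing steps is precisely what is open.

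Concretely: (a) the min-plus $N$-soliton tau-function you posit is not known for general $n$ --- the paper obtains only one-soliton tau-functions (Proposition \ref{prop:n-onesoliton}), explicitly ``for technical reason,'' and two-solitons only for $n=3$. Verifying your ansatz against \eqref{eq:n-ttau1}--\eqref{eq:n-ttau2} is not a routine reorganization: tropicalizing Hirota's method requires controlling cancellations and signs (compare the care taken with the interaction coefficient $z_{1,2}$ in the proof of Proposition \ref{prop:n=3-soliton}, and the collapse of the geometric solutions under tropicalization in Section \ref{sec:trop-soliton1}), and there is no a priori guarantee that admissible interaction constants $\gamma_{pq}$ exist for every pair of one-solitons of Theorem \ref{thm:solitons}. (b) Even granting the tau-function, your asymptotic localization only separates solitons of distinct velocities; the paper stresses that distinct solitons can share a speed, and you defer exactly this case (and the exchange of internal structure) to the combinatorial $R$-matrix description, i.e.\ to Conjecture \ref{thm:BBS} for general $n$ --- which is unproven, so the argument becomes circular at the point where it is most needed. (c) Condition (ii) is likewise not a remark: that the carriers remain in the finite set $M$ and return to $\bZ$ is the content of Conjecture \ref{conj:Z-fin}, and Section \ref{subsec:relax} exhibits inputs where carrier return fails, so it must be derived from the (yet to be constructed) tau-function, not asserted. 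In short, your outline identifies the right difficulties but resolves none of them, whereas the paper's partial result rests on a self-contained case check that your approach would have to replace or reprove.
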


In the cases of $n=3,4$, this conjecture is a theorem
which follows from the duality with the $\mathfrak{sl}_n$-box-ball system
proved in \S \ref{subsec:n=3} and \S \ref{subsec:n=4}.

%%%%%%%%%%%%%%%%%%%%%
\section{The formula for $\phi(n,k)$}\label{sec:phi-formula}
%%%%%%%%%%%%%%%%%%%%%

Let $\bar{\mathcal{R}}$ be a rational map on $\Q(\bp,\bq)$
with non-negative variables 
$\bp = (p_i)_{i=1,\ldots,n}$, $\bq = (q_i)_{i=1,\ldots,n}$,
given by $\bar{\mathcal{R}}: (\bp, \bq) \mapsto (\bq',\bp')$;
\begin{align}\label{geomR-wc}
  p_i' = p_i \frac{p_{i+1}+q_{i+1}}{p_{i}+q_{i}},
  \quad 
  q_i' = q_i \frac{p_{i+1}+q_{i+1}}{p_{i}+q_{i}}.
\end{align} 
This map originates from the geometric version of the combinatorial $R$-matrix,
the isomorphism between the tensor products of crystals 
$B_{\bar m} \otimes B_\ell \stackrel{\sim}{\to} B_\ell \otimes B_{\bar m}$.
Here $B_\ell$ is the crystal corresponding to 
the $\ell$-fold symmetric tensor representation of 
$U_q'(\hat{\mathfrak{sl}}_n)$, and $B_{\bar{\ell}}$ is the dual of $B_\ell$. 
As a set, $B_\ell$ and $B_{\bar \ell}$ are the same:
\begin{align}\label{eq:cryetalB}
  B_\ell = B_{\bar \ell} = \{\bx = (x_1,x_2,\ldots,x_n) \in (\Z_{\geq 0})^n;
  ~ \sum_{i=1}^n x_i = \ell \}.
\end{align}
See \cite[\S 3.1 and \S11.9]{TP12} for details of the map $\bar{\mathcal{R}}$.

\begin{prop}
\label{prop:phi-formula}
The transformation
$(\bz_i,\by_i) \mapsto (\by_i',\bz_{i+1})$
of the dynamical system $\phi(n,k)$ is described by $\bar{\mathcal{R}}$, 
by setting 
\begin{align*}
  &\bp = 
 (z_{i,k+1},z_{i,k+2},\ldots,z_{i,n-1},0,z_{i,1},z_{i,2},\ldots,z_{i,k}), 
 \\
  &\bq = 
 (0, y_{i,n-1},y_{i,n-2},\ldots,y_{i,k+2},y_{i,k+1},y_{i,k},\ldots,y_{i,1}).
\end{align*}
More explicitly, the transformation 
$(\bz_i,\by_i) \mapsto (\by_i',\bz_{i+1})$ is given by
\begin{align}\label{eq:phi-formula}
  y_{i,j}' = y_{i,j} \frac{z_{i,k+2-j} + y_{i,j-1}}{z_{i,k+1-j} + y_{i,j}},
  \quad 
  z_{i+1,j} = z_{i,j} \frac{z_{i,j+1} + y_{i,k-j}}{z_{i,j} + y_{i,k+1-j}}.
\end{align}
Here we assume that the second subscript $j$ of $z_{i,j}$ and $y_{i,j}$
is taken modulo $n$, and set $y_{i,n}=z_{i,n}=0$. 
\end{prop}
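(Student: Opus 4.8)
The plan is to verify that the iterated Lusztig moves carrying $\rho^{-ik_1}(v)\cdot\rho^{-ik_1}(u)$ past each other factor through the geometric $R$-matrix $\bar{\mathcal{R}}$ under the stated substitution. First I would recall from \cite{GP16,TP12} that for the pair of glides $u=s_1s_2\cdots s_{n-1}$ and $v(k)=s_ks_{k-1}\cdots s_0 s_{n-1}\cdots s_{k+2}$, the word $v(k)u$ admits a reduced expression that can be drawn as a wiring diagram on $n$ wires; the dynamics $(\bz_i,\by_i)\mapsto(\by_i',\bz_{i+1})$ is exactly the composite of the Yang--Baxter moves \eqref{eq:L-1}–\eqref{eq:L-2} that slide the $v$-block past the $u$-block. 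The key structural input is that $\eta(v(k))$ and $\eta(u)$ are the cyclic shifts $x\mapsto x+k$ and $x\mapsto x+1$ on $\Z/n$, so composing them corresponds to a single cyclic rotation by $n-1\equiv -1$; the wiring diagram of $v(k)u$ is therefore, up to the relabeling of wires by the permutation that puts the offsets into standard position, the wiring diagram of a single crossing pattern on $n$ wires. This is precisely the combinatorial datum that the geometric $R$-matrix $\bar{\mathcal{R}}\colon B_{\bar m}\otimes B_\ell\to B_\ell\otimes B_{\bar m}$ acts on.

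Second, I would match up the variables. The vertex variables $\bz_i=(z_{i,1},\ldots,z_{i,n-1})$ on $v(k)$ and $\by_i=(y_{i,1},\ldots,y_{i,n-1})$ on $\rho^{-ik_1}(u)$ each index crossings in the two blocks; reading off the wires incident to each crossing and using the wire-ansatz bookkeeping of \cite[Section 4]{GP16}, each $z_{i,j}$ is naturally attached to a pair of consecutive wires and similarly for $y_{i,j}$, but with the opposite orientation because $v$ is a ``descending'' word and $u$ an ``ascending'' one — this is the source of the reversed indexing and the dual crystal $B_{\bar\ell}$. Cyclically reindexing so that the offset of $u$ lands at position $0$ gives exactly
\begin{align*}
\bp &= (z_{i,k+1},z_{i,k+2},\ldots,z_{i,n-1},0,z_{i,1},\ldots,z_{i,k}),\\
\bq &= (0,y_{i,n-1},y_{i,n-2},\ldots,y_{i,k+2},y_{i,k+1},y_{i,k},\ldots,y_{i,1}),
\end{align*}
with the inserted $0$'s accounting for the ``missing'' crossing (the length of each glide is $n-1$, one less than $n$). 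I would then invoke the known formula \eqref{geomR-wc} for $\bar{\mathcal{R}}$ and unwind the substitution to recover \eqref{eq:phi-formula}: the relations $p_i+q_i$ become $z_{i,k+1-j}+y_{i,j}$ after the reindexing, and the shift $i\mapsto i+1$ in \eqref{geomR-wc} becomes the shift $j\mapsto j\pm1$ (with the convention $y_{i,n}=z_{i,n}=0$) in \eqref{eq:phi-formula}. This last part is a direct but careful substitution.

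Alternatively — and this may be the cleaner route to a rigorous proof — one can bypass the wiring-diagram identification and prove \eqref{eq:phi-formula} by induction on the number of Lusztig moves. Concretely, one tracks the carrier $\bz$ as it is pushed one generator at a time through the word for $u$: each application of \eqref{eq:L-1} or \eqref{eq:L-2} updates one coordinate, and one checks that the partial products match the partial products obtained from \eqref{geomR-wc}; after $\ell=n-1$ steps one lands on the full formula. Because both $\bar{\mathcal{R}}$ and the $\phi(n,k)$-step are built from the \emph{same} local Yang--Baxter move, the inductive step is essentially forced; the content is only in choosing the right intermediate parametrization. One can sanity-check the bookkeeping against Example \ref{ex:n=3}: for $n=3$, $k=1$ the substitution gives $\bp=(z_{i,2},0,z_{i,1})$, $\bq=(0,y_{i,2},y_{i,1})$, and \eqref{eq:phi-formula} collapses to $y_{i,1}'=\frac{z_{i,2}y_{i,1}}{z_{i,1}+y_{i,1}}$, $y_{i,2}'=z_{i,1}+y_{i,1}$, $z_{i+1,1}'=\frac{z_{i,1}z_{i,2}}{z_{i,1}+y_{i,1}}$, $z_{i+1,2}'=y_{i,2}$, matching \eqref{eq:phi(31)} exactly.

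\textbf{Main obstacle.} The genuine difficulty is not the algebra of \eqref{geomR-wc} but pinning down the correct correspondence between vertex variables and crystal coordinates: getting the cyclic offsets, the two inserted $0$'s, and above all the orientation reversal (the passage $B_\ell\rightsquigarrow B_{\bar\ell}$ forced by $v$ being a descending word) exactly right. Once the dictionary in the statement is taken as the definition of the substitution, checking \eqref{eq:phi-formula} is routine; the work is justifying \emph{that} dictionary, and I expect the wiring-diagram/wire-ansatz picture of \cite{GP16,TP12} together with the $n=3$ cross-check to be the most convincing way to do so.
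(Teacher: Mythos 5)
Your second, preferred route is exactly the paper's proof: after commuting $v(k)u$ into the two interleaved chains $s_k\cdots s_1s_0s_1\cdots s_k$ and $s_{n-1}\cdots s_{k+2}s_{k+1}s_{k+2}\cdots s_{n-1}$, the paper applies inductively along each chain a one-line lemma verified by direct substitution into \eqref{eq:L-1}, namely $s_j(z_{i,k+1-j})\,s_{j-1}(z_{i,k+2-j}+y_{i,j-1})\,s_j(y_{i,j})=s_{j-1}(y''_{i,j})\,s_j(z_{i,k+1-j}+y_{i,j})\,s_{j-1}(z''_{i+1,k+1-j})$, which supplies precisely the ``intermediate parametrization'' you correctly identified as the only real content. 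Your variable dictionary and $n=3$ cross-check are right, and the wiring-diagram identification is not needed: once \eqref{eq:phi-formula} is established directly in this way, the description via $\bar{\mathcal{R}}$ is read off by comparing with \eqref{geomR-wc} under the stated substitution for $\bp$ and $\bq$.
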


First let us prove the following lemma.  Assuming $y_{i,0}=z_{i,0}=0$, denote 
\begin{align}
  y_{i,j}'' = y_{i,j} \frac{z_{i,k+2-j} + y_{i,j-1}}{z_{i,k+1-j} + y_{i,j}},
  \quad 
  z_{i+1,j}'' = z_{i,j} \frac{z_{i,j+1} + y_{i,k-j}}{z_{i,j} + y_{i,k+1-j}}.
\end{align}

\begin{lem}
For any $j$ we have 
$$s_j(z_{i,k+1-j}) s_{j-1}(z_{i,k+2-j} + y_{i,j-1}) s_j(y_{i,j}) = s_{j-1}(y''_{i,j}) s_j(z_{i, k+1-j} + y_{i,j}) s_{j-1}(z''_{i+1, k+1-j}).$$
\end{lem}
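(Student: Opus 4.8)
The plan is to verify the identity by a direct application of the tropical Lusztig relation \eqref{eq:L-1}, treating the three positions $(j, j-1, j)$ as an instance of the braid move $s_i s_{i+1} s_i = s_{i+1} s_i s_{i+1}$ with the roles $i = j-1$, $i+1 = j$. First I would rename the three input weights
\begin{displaymath}
a = y_{i,j}, \quad b = z_{i,k+2-j} + y_{i,j-1}, \quad c = z_{i,k+1-j},
\end{displaymath}
so the left-hand side is literally $s_{i+1}(a)\, s_i(b)\, s_{i+1}(c)$ with $i = j-1$ (note $s_j = s_{(j-1)+1}$). Applying \eqref{eq:L-1} in the form $s_{i+1}(a)s_i(b)s_{i+1}(c) = s_i(bc/(a+c))\, s_{i+1}(a+c)\, s_i(ab/(a+c))$ — which is \eqref{eq:L-1} read with the roles of $i$ and $i+1$ exchanged, itself a valid Lusztig relation — the right-hand side becomes
\begin{displaymath}
s_{j-1}\!\left(\frac{bc}{a+c}\right) s_j(a+c)\, s_{j-1}\!\left(\frac{ab}{a+c}\right).
\end{displaymath}
Then I would match these three outputs against the claimed $s_{j-1}(y''_{i,j})\, s_j(z_{i,k+1-j}+y_{i,j})\, s_{j-1}(z''_{i+1,k+1-j})$.

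The middle factor is immediate: $a + c = y_{i,j} + z_{i,k+1-j}$, exactly the stated argument $z_{i,k+1-j} + y_{i,j}$. For the first factor I need $bc/(a+c) = y''_{i,j}$, i.e.
\begin{displaymath}
\frac{(z_{i,k+2-j} + y_{i,j-1})\, z_{i,k+1-j}}{y_{i,j} + z_{i,k+1-j}}
\;\overset{?}{=}\; y_{i,j}\,\frac{z_{i,k+2-j} + y_{i,j-1}}{z_{i,k+1-j} + y_{i,j}},
\end{displaymath}
which cannot hold as stated — the numerators differ by a factor $z_{i,k+1-j}$ versus $y_{i,j}$. This signals that one must instead use \eqref{eq:L-1} in its original orientation $s_i(a)s_{i+1}(b)s_i(c) = s_{i+1}(bc/(a+c))s_i(a+c)s_{i+1}(ab/(a+c))$, reading the left-hand side of the lemma as $s_i(\tilde a)\,s_{i+1}(\tilde b)\,s_i(\tilde c)$ with $i = j$ (so $s_{i+1} = s_{j+1}$?) — no; rather, the correct reading is $i=j-1$ in the \emph{other} labeling of the two outer letters. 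Concretely, set $\tilde a = z_{i,k+1-j}$, $\tilde b = z_{i,k+2-j}+y_{i,j-1}$, $\tilde c = y_{i,j}$, so the left-hand side is $s_j(\tilde a)\, s_{j-1}(\tilde b)\, s_j(\tilde c)$; applying the Lusztig relation with $i=j$, $i+1=j-1$ gives outputs $s_{j-1}(\tilde b\tilde c/(\tilde a+\tilde c)),\ s_j(\tilde a+\tilde c),\ s_{j-1}(\tilde a\tilde b/(\tilde a+\tilde c))$. Now $\tilde b\tilde c/(\tilde a+\tilde c) = (z_{i,k+2-j}+y_{i,j-1}) y_{i,j}/(z_{i,k+1-j}+y_{i,j}) = y''_{i,j}$, as desired, and $\tilde a + \tilde c = z_{i,k+1-j}+y_{i,j}$ is the middle factor. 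This is the step I would perform carefully, being explicit about which generator plays which role in the braid move.

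For the third factor I need $\tilde a \tilde b/(\tilde a + \tilde c) = z''_{i+1, k+1-j}$, i.e.
\begin{displaymath}
\frac{z_{i,k+1-j}\,(z_{i,k+2-j} + y_{i,j-1})}{z_{i,k+1-j} + y_{i,j}} \;\overset{?}{=}\; z''_{i+1,k+1-j}.
\end{displaymath}
Here I would substitute the definition of $z''_{i+1,j}$ from the lemma's preamble with second index $k+1-j$: writing $\ell = k+1-j$, we have $z''_{i+1,\ell} = z_{i,\ell}\,(z_{i,\ell+1} + y_{i,k-\ell})/(z_{i,\ell} + y_{i,k+1-\ell})$. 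Plugging $\ell = k+1-j$ gives $\ell + 1 = k+2-j$, $k - \ell = j-1$, and $k+1-\ell = j$, so $z''_{i+1,k+1-j} = z_{i,k+1-j}\,(z_{i,k+2-j} + y_{i,j-1})/(z_{i,k+1-j} + y_{i,j})$, which matches the left side exactly. So the proof reduces to: (1) identify the left-hand side of the lemma as an instance of the tropical braid move \eqref{eq:L-1} with $\tilde a = z_{i,k+1-j}$, $\tilde b = z_{i,k+2-j}+y_{i,j-1}$, $\tilde c = y_{i,j}$, $i = j$; (2) read off the three outputs from the Lusztig formula; (3) substitute the definitions of $y''_{i,j}$ and $z''_{i+1,k+1-j}$ and check the index arithmetic modulo $n$ closes up. The only genuine obstacle is bookkeeping: making sure the index shifts $k+1-j$, $k+2-j$, $k-j$ are handled consistently modulo $n$ and that the edge conventions $y_{i,0}=z_{i,0}=0$ (and $y_{i,n}=z_{i,n}=0$) are compatible at the boundary values of $j$; I would check the extreme cases $j=1$ and $j=n-1$ (or $j=n$) separately to confirm no degenerate term spoils the identity.
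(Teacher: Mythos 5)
Your proposal is correct and is essentially the paper's own argument: the lemma is exactly one application of the Lusztig move \eqref{eq:L-1} (in its mirrored form $s_j\,s_{j-1}\,s_j$, which carries the same formula) with $a=z_{i,k+1-j}$, $b=z_{i,k+2-j}+y_{i,j-1}$, $c=y_{i,j}$, followed by the direct substitutions $bc/(a+c)=y''_{i,j}$ and $ab/(a+c)=z''_{i+1,k+1-j}$ that the paper records as its entire proof. The initial misassignment of $a$ and $c$ is harmless since you detect and correct it, and your index check with $\ell=k+1-j$ is exactly the required bookkeeping.
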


\begin{proof}
Direct substitution: 
$$\frac{(z_{i, k+2-j} + y_{i,j-1})y_{i,j}}{z_{i,k+1-j} + y_{i,j}} = y_{i,j}'', \;\; \frac{z_{i,k+1-j} (z_{i,k+2-j} + y_{i,j-1})}{z_{i,k+1-j} + y_{i,j}} = z''_{i+1, k+1-j}.$$
\end{proof}

Now we are ready to prove the proposition.

\begin{proof}[Proof of Proposition \ref{prop:phi-formula}]
By the commutativity relations
\begin{align*}
&s_k(z_{i,1}) \dotsc s_j(z_{i,k+1-j}) \dotsc s_{k+2}(z_{i,n-1}) s_1(y_{i,1}) \dotsc s_j(y_{i,j}) \dotsc s_{n-1}(y_{i,n-1})
\\
&= s_k(z_{i,1}) \dotsc s_1(z_{i,k}) s_0(z_{i,k+1}) s_1(y_{i,1}) \dotsc s_k(y_{i,k}) 
\\
&\qquad \cdot s_{n-1}(z_{i,k+2}) \dotsc s_{k+2}(z_{i,n-1}) s_{k+1}(y_{i,k+1}) s_{k+2}(y_{i,k+2}) \dotsc s_{n-1}(y_{i,n-1}).
\end{align*}
Applying the lemma several times we see that 
\begin{align*}
&s_k(z_{i,1}) \dotsc s_1(z_{i,k}) s_0(z_{i,k+1}) s_1(y_{i,1}) \dotsc s_k(y_{i,k})
\\
&=s_k(z_{i,1}) \dotsc s_1(z_{i,k}) s_0(z_{i,k+1} + y_{i,0}) s_1(y_{i,1}) \dotsc s_k(y_{i,k})
\\
&=s_0(y''_{i,1}) s_k(z_{i,1}) \dotsc s_2(z_{i,k-1}) s_1(z_{i,k}+y_{i,1}) s_2(y_{i,2}) \dotsc s_k(y_{i,k}) s_0(z''_{i+1,k})
\\
&= \ldots =
s_0(y''_{i,1}) \dotsc s_{k-1}(y''_{i,k}) s_k(z_{i,1}+y_{i,k}) s_{k-1}(z''_{i+1,1}) \dotsc s_{0}(z''_{i+1,k})
\\
&= 
s_0(y''_{i,1}) \dotsc s_{k-1}(y''_{i,k}) s_k(y''_{i,k+1}) s_{k-1}(z''_{i+1,1}) \dotsc s_{0}(z''_{i+1,k}).
\end{align*}
Similarly, by application of the lemma, we get 
\begin{align*}
&s_{n-1}(z_{i,k+2}) \dotsc s_{k+2}(z_{i,n-1}) s_{k+1}(y_{i,k+1}) s_{k+2}(y_{i,k+2}) \dotsc s_{n-1}(y_{i,n-1}) 
\\
&=s_{n-1}(z_{i,k+2}) \dotsc s_{k+2}(z_{i,n-1}) s_{k+1}(z_{i,0} + y_{i,k+1}) s_{k+2}(y_{i,k+2}) \dotsc s_{n-1}(y_{i,n-1})
\\
&= \ldots 
= s_{k+1}(y''_{i,k+2}) \dotsc s_{n-2}(y''_{i,n-1}) s_{n-1}(z_{i,k+2} + y_{i,n-1}) s_{n-2}(z''_{i+1,k+2}) \dotsc s_{k+1}(z''_{i+1,n-1})
\\
&=s_{k+1}(y''_{i,k+2}) \dotsc s_{n-2}(y''_{i,n-1}) s_{n-1}(z''_{i+1,k+1}) \dotsc s_{k+1}(z''_{i+1,n-1}).
\end{align*}
Putting the two parts together we get
\begin{align*}
&s_0(y''_{i,1}) \dotsc s_{k-1}(y''_{i,k}) s_k(y''_{i,k+1}) s_{k-1}(z''_{i+1,1}) \dotsc s_{0}(z''_{i+1,k}) 
\\
&\qquad \cdot s_{k+1}(y''_{i,k+2}) \dotsc s_{n-2}(y''_{i,n-1}) s_{n-1}(z''_{i+1,k+1}) \dotsc s_{k+1}(z''_{i+1,n-1})
\\
&=s_0(y''_{i,1}) \dotsc s_{k-1}(y''_{i,k}) s_k(y''_{i,k+1}) s_{k+1}(y''_{i,k+2}) \dotsc s_{n-2}(y''_{i,n-1}) 
\\ 
& \qquad \cdot s_{k-1}(z''_{i+1,1}) \dotsc s_{0}(z''_{i+1,k}) s_{n-1}(z''_{i+1,k+1}) \dotsc s_{k+1}(z''_{i+1,n-1}) 
\\
&=s_0(y''_{i,1}) \dotsc s_{n-2}(y''_{i,n-1}) s_{k-1}(z''_{i+1,1}) \dotsc  s_{k+1}(z''_{i+1,n-1}).
\end{align*}

This separates into a carrier with parameters $z''_{i+1,j}$ and a state with parameters $y''_{i,j}$. We conclude that $y'_{i,j}=y''_{i,j}$ and $z_{i+1,j} = z''_{i+1,j}$, as desired.

\end{proof}

In the following, we use the same notation $\phi(n,k)$ to denote 
the map on $(\R^{n-1}_{>0})^2$
which comprises the dynamics $\phi(n,k)$, so we write
$\phi(n,k): (\bz_i,\by_i) \mapsto (\by_i',\bz_{i+1})$.
Let $\iota$ be a map on $\R^{n-1}$ given by
$\mathbf{a} = (a_1,a_2,\ldots,a_{n-1}) \mapsto (a_{n-1},a_{n-2},\ldots,a_{1})$,
and let $\tilde \rho$ be the map on $(\R^{n-1})^2$ given by
$(\mathbf{a},\mathbf{b}) \mapsto 
(\iota(\mathbf{b}),\iota(\mathbf{a}))$.

\begin{prop}\label{prop:phi-symmetry}
As maps on $(\R^{n-1}_{>0})^2$, it holds that
\begin{align}\label{eq:phi-phi}
  \phi(n,n-2-k) = \tilde \rho \circ \phi^{-1}(n,k) \circ \tilde \rho
\end{align}
for $k=0,1,\ldots,n-2$.
\end{prop}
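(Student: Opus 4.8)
The plan is to exploit the explicit formula \eqref{eq:phi-formula} together with the interpretation of $\phi(n,k)$ as (a relabeling of) the geometric $R$-matrix $\bar{\mathcal{R}}$ from Proposition \ref{prop:phi-formula}. The key structural fact is that $\bar{\mathcal{R}}$ is an involution: since $(p_i + q_i) = (p_i' + q_i')$ for each $i$ (the ratio factor cancels in the sum) — actually more precisely $p_i' + q_i' = (p_i+q_i)\frac{p_{i+1}+q_{i+1}}{p_i+q_i} = p_{i+1}+q_{i+1}$, so the ``column sums'' are merely cyclically shifted — one checks that applying \eqref{geomR-wc} twice returns the original data. Hence $\bar{\mathcal{R}}^{-1} = \bar{\mathcal{R}}$. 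So first I would record that $\phi^{-1}(n,k)$ is again given by the same rational formula $\bar{\mathcal{R}}$, only read backwards through the substitution of Proposition \ref{prop:phi-formula}: concretely, $\phi^{-1}(n,k): (\by_i', \bz_{i+1}) \mapsto (\bz_i, \by_i)$ is obtained by setting $\bp' = (\text{entries of }\bz_{i+1})$, $\bq' = (\text{entries of }\by_i')$ in the appropriate cyclic order, applying $\bar{\mathcal{R}}$, and reading off $\bz_i, \by_i$.

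Next I would write out what the two sides of \eqref{eq:phi-phi} do to a general input $(\mathbf{a}, \mathbf{b}) \in (\R^{n-1}_{>0})^2$. The right-hand side is: apply $\tilde\rho$ to get $(\iota(\mathbf b), \iota(\mathbf a))$; interpret this as an ``output pair'' $(\by', \bz)$ for the level-$k$ system; run $\phi^{-1}(n,k)$ to recover the ``input pair'' $(\bz, \by)$; then apply $\tilde\rho$ again. The left-hand side is simply $\phi(n,n-2-k)$ applied to $(\mathbf a, \mathbf b)$, read through the level-$(n-2-k)$ substitution. The point is that all of these operations — the two reversals $\iota$, the swap in $\tilde\rho$, the cyclic reindexing built into the substitutions of Proposition \ref{prop:phi-formula} for offset $k$ versus offset $n-2-k$, and the inversion of $\bar{\mathcal{R}}$ — are each individual symmetries of the pair $(\bp,\bq) \leftrightarrow (\bp',\bq')$. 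I would verify that their composite is the identity by tracking how the index $j$ of $z_{i,j}$ and $y_{i,j}$ is transported: the substitution for offset $k$ places $z$'s in slots $1,\dots,n-1$ (skipping a $0$) starting at position $k{+}1$, while for offset $n-2-k$ it starts at position $n-1-k$; composing with the two $\iota$'s and the $\bp \leftrightarrow \bq$ swap should exactly account for the difference. In effect, conjugating $\bar{\mathcal{R}}$ by ``swap $\bp$ and $\bq$ and reverse the cyclic order'' is an involution compatible with $\bar{\mathcal{R}}^{-1} = \bar{\mathcal{R}}$, and the offset shift $k \mapsto n-2-k$ is precisely the bookkeeping translation that makes the two labelings match.

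An alternative, possibly cleaner route avoids $\bar{\mathcal{R}}$ and works directly in the Coxeter picture: the identity \eqref{eq:phi-phi} should reflect the fact that reversing a reduced word for $vu$ and running the Lusztig moves $s_i(A)s_{i+1}(B)s_i(C) \leftrightarrow s_{i+1}(A')s_i(B')s_i(C')$ in the reverse direction is the same move (the tropical/rational Lusztig relation is an involution on triples), combined with the diagram automorphism $i \mapsto n-1-i$ of $\hat S_n$ — which on generators sends $s_i \mapsto s_{n-1-i}$ — carrying the word $u = s_1\cdots s_{n-1}$ to itself (read backwards) and $v(k)$ to $v(n-2-k)$ (read backwards). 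The maps $\iota$ and $\tilde\rho$ are exactly the induced relabelings of the variable vectors under this automorphism and word-reversal. I would check this correspondence on the level of the defining commuting pairs as a sanity test: under $A_i \mapsto$ the appropriate reindexing, condition \eqref{eq:A-condition} for offset $k$ maps to the one for offset $n-2-k$, and $\bZ, \bW$ of \eqref{trop-commpair} transform as predicted by $\tilde\rho$.

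The main obstacle, and where the real work lies, is the index bookkeeping: getting the cyclic shifts, the two reversals, and the $\bp \leftrightarrow \bq$ swap to line up perfectly so that the offset genuinely changes from $k$ to $n-2-k$ and nothing else survives. I expect one must be careful about the placement of the two forced zeros ($p$ has a $0$ in position $n-1-k$, $q$ has a $0$ in position $1$ — or in the level-$k$ notation, $y_{i,n} = z_{i,n} = 0$) and about whether $\phi$ is being read as $(\bz,\by)\mapsto(\by',\bz')$ versus its inverse. Once the substitution dictionary is pinned down, the remaining verification is the routine observation that $\bar{\mathcal{R}}$ (equivalently, the Lusztig relation) is an involution, which is the one nontrivial analytic input and is immediate from \eqref{geomR-wc}.
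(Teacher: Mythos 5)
There is a genuine gap, and it sits at the center of your first (main) route: the claim that $\bar{\mathcal{R}}$ is an involution is false, and your own observation already disproves it. Since $p_i'+q_i'=p_{i+1}+q_{i+1}$, applying \eqref{geomR-wc} twice multiplies each entry by $\frac{p_{i+2}+q_{i+2}}{p_i+q_i}$ (e.g. the first component of $\bar{\mathcal{R}}^2(\bp,\bq)$ has entries $p_i\frac{p_{i+2}+q_{i+2}}{p_i+q_i}$ up to the slot swap), so one returns to the original data only after $n$ applications, not $2$; the ``cyclic shift of the column sums'' is precisely the obstruction. The correct statement is that $\bar{\mathcal{R}}^{-1}$ is given by the same formula with the shift $i+1$ replaced by $i-1$, i.e. $\bar{\mathcal{R}}$ conjugated by reversal of the cyclic index order. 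But that reversal-conjugation statement, transported through the substitution dictionary of Proposition \ref{prop:phi-formula}, \emph{is} essentially Proposition \ref{prop:phi-symmetry}; so by postulating ``$\bar{\mathcal{R}}^{-1}=\bar{\mathcal{R}}$, immediate from \eqref{geomR-wc}'' you have both assumed a false identity and, in effect, assumed the thing to be proven. The zero patterns ($y_{i,n}=z_{i,n}=0$) do not rescue this, since they do not force the sums $p_i+q_i$ to be $2$-periodic.

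Your alternative route (Lusztig involution plus word reversal plus a diagram automorphism) is closer in spirit to something that could work, since the single Lusztig move is indeed an involution on triples, but as written it is only a sketch and its stated dictionary is off: the automorphism $s_i\mapsto s_{n-1-i}$ sends $u=s_1\cdots s_{n-1}$ (after reversal) to $\rho^{-1}(u)$, not $u$, and sends $v(k)$ (after reversal) to $v(n-3-k)$, not $v(n-2-k)$; the automorphism you want is $s_i\mapsto s_{-i}$. Even with the right automorphism, the offset bookkeeping you defer is the entire content, so no proof has actually been given. For comparison, the paper proves the proposition by a direct verification with the explicit formula \eqref{eq:phi-formula}: writing $(\bz',\by')=\phi(n,k)(\bz,\by)$ componentwise and substituting into $\phi(n,n-2-k)\circ\tilde\rho$, one checks outright that $\phi(n,n-2-k)\circ\tilde\rho\circ\phi(n,k)=\tilde\rho$, which is equivalent to \eqref{eq:phi-phi} since $\tilde\rho$ is an involution. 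A repaired version of your argument would either do that computation, or honestly prove the reversal-conjugation formula for $\bar{\mathcal{R}}^{-1}$ and then carry out the index bookkeeping you postponed.
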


\begin{proof}
For $(\bz,\by) = (z_j,y_j)_{j=1,\ldots,n-1}$
we show $\phi(n,n-2-k) \circ \tilde \rho \circ \phi(n,k)
(\bz,\by) = \tilde \rho(\bz,\by)$ by a direct calculation.
From \eqref{eq:phi-formula} we write 
$$
(\bz',\by') := \phi(n,k)(\bz,\by)
=
\left(y_{j} \frac{z_{k+2-j} + y_{j-1}}{z_{k+1-j} + y_{j}},
z_{j} \frac{z_{j+1} + y_{k-j}}{z_{j} + y_{k+1-j}} \right)_{j=1,\ldots,n-1},
$$
where we assume $z_n = y_n = 0$.
By substituting this into
\begin{align*}
\phi(n,n-2-k) \circ \tilde \rho(\bz',\by')
=
\left(y'_{n-j}\frac{z'_{k+j}+y'_{n-j+1}}{z'_{k+1+j}+y'_{n-j}},
z'_{n-j}\frac{z'_{n-j-1}+y'_{k+2+j}}{z'_{n-j}+y'_{k+1+j}} \right)_{j=1,\ldots,n-1},
\end{align*}
we see the claim.
\end{proof}

%%%%%%%%%%%%%%%%%%%%%%
\section{Chamber variables}
%%%%%%%%%%%%%%%%%%%%%%

As is typical in integrable systems, we will express various solutions to our systems in terms of tau-functions, by applying Hirota's bilinear method \cite{Hirota-book}.  A tau-function can be though of as an auxiliary collection of variables that have relations among themselves imposed by the original evolution equations.  Before proceeding, we define certain networks which provide a good visualization of how the tau-functions fit in.

\subsection{Wiring diagram and chamber variables}

Following \cite{GP16} we use the network model introduced in 
\cite{TP12} to describe our system.
Consider a semi-infinite cylinder with $n$ horizontal directed wires 
forming an infinite wiring diagram, which is a pictorial representation
of an infinite reduced word in $\hat{S}_n$.
Away from the crossings, the $n$ wires run along the cylinder at $n$ positions,
and corresponding to $s_i \in \hat{S}_n$ we cross the wires in positions 
$i$ and $i+1$. Corresponding to the parametrized version $s_i(a)$ of $s_i$, 
we assign the crossing with $a$, and call it a {\it vertex variable}. 
The Lusztig relation \eqref{eq:L-1} corresponds to the Yang-Baxter move of 
the wires in positions $i$, $i+1$ and $i+2$ as in Figure \ref{fig:YBmove}.

%\begin{figure}[ht]
%\unitlength=1mm
%\begin{picture}(120,40)(0,0)
%\multiput(10,10)(35,0){2}{\line(1,0){5}}
%\put(25,10){\line(1,0){10}}
%\multiput(10,20)(35,0){2}{\line(1,0){5}}
%\multiput(10,30)(25,0){2}{\line(1,0){15}}
%
%\multiput(15,10)(20,0){2}{\line(1,1){10}}
%\multiput(15,20)(20,0){2}{\line(1,-1){10}}
%\put(25,20){\line(1,1){10}}
%\put(25,30){\line(1,-1){10}}
%
%\multiput(45,10)(0,10){3}{\vector(1,0){5}}
%
%\put(7,9){\scriptsize $i$}
%\put(2,19){\scriptsize $i+1$}
%\put(2,29){\scriptsize $i+2$}
%
%\put(20,15){\circle*{1}} \put(16,14){\scriptsize $a$}
%\put(30,25){\circle*{1}} \put(26,24){\scriptsize $b$}
%\put(40,15){\circle*{1}} \put(36,14){\scriptsize $c$}
%
%%%%%%%%%%%%%
%\put(54,19){$\longleftrightarrow$}
%%%%%%%%%%%%%
%
%\multiput(65,30)(35,0){2}{\line(1,0){5}}
%\put(80,30){\line(1,0){10}}
%\multiput(65,20)(35,0){2}{\line(1,0){5}}
%\multiput(65,10)(25,0){2}{\line(1,0){15}}
%
%\multiput(70,30)(20,0){2}{\line(1,-1){10}}
%\multiput(70,20)(20,0){2}{\line(1,1){10}}
%\put(80,20){\line(1,-1){10}}
%\put(80,10){\line(1,1){10}}
%
%\multiput(100,10)(0,10){3}{\vector(1,0){5}}
%
%\put(75,25){\circle*{1}} \put(66,24){\small $\frac{bc}{a+c}$}
%\put(85,15){\circle*{1}} \put(76,14){\scriptsize $a+c$}
%\put(95,25){\circle*{1}} \put(86,24){\small $\frac{ab}{a+c}$}
%
%\end{picture}
%\caption{The Yang-Baxter move}
%\label{fig:YBmove}
%\end{figure}

For $i=1,\ldots,n$, let $e_i$ be the $i$-th unit vector in $\Z^n$.
For integers $1 \leq i,j \leq n$, we set  
$$
  e_{[i,j]} = 
  \begin{cases}
  \sum_{k=i}^j e_k & \text{ if } i \leq j
  \\
  0 & \text{ if } i > j 
  \end{cases}.
$$
For $S \in \Z^n$, we often write $S_i$ for $S + e_i$, and 
$[S]$ for the class of $S$ in $\Z^n / \Z \,e_{[1,n]}$.
Following \cite[\S 5.1]{GP16} (cf. \cite{BFZ96}),
we label the chambers of the wiring diagram with elements of $\Z^n / \Z \, e_{[1,n]}$,
and at the left end of wires, label the wire at position $i$ with $\alpha_i \in \R_{>0}$.
Fix a label of one chamber, and extend it to the others  
by labeling the surrounding four chambers at each crossing of the wires,
as shown in Figure \ref{fig:chamber}.

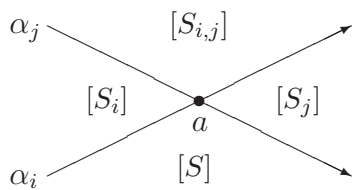
\begin{figure}[ht]
\unitlength=1mm
\begin{picture}(60,30)(0,0)
\put(5,5){\vector(2,1){40}}
\put(5,25){\vector(2,-1){40}}
\put(25,15){\circle*{1,5}}

\put(24,11){$a$}
\put(0,4){$\alpha_i$}
\put(0,24){$\alpha_j$}

\put(22,5){\small $[S]$}
\put(21,24){\small $[S_{i,j}]$}
\put(10,14){\small $[S_i]$}
\put(35,14){\small $[S_j]$}
\end{picture}
\caption{Chamber labeling at the crossing of wires $\alpha_i$ and $\alpha_j$,
where $a$ is the vertex variable associated with this crossing.}
\label{fig:chamber}
\end{figure}

The {\it chamber variables} make up a {\it tau-function} denoted $\tau$ whose domain is
$\Z^n$. We assume $\tau$ is periodic on a cylinder:
\begin{align}\label{eq:tau-delta}
  \tau(S + e_{[1,n]}) = \tau(S).
\end{align}
The enriched Yang-Baxter move, see Figure \ref{fig:eYBmove},
corresponds to the relation among the chamber variables
\begin{align}\label{eq:tautau}
(\alpha_i - \alpha_k) \tau(S_j) \, \tau(S_{i,k}) 
=
(\alpha_i-\alpha_j ) \tau(S_k) \, \tau(S_{i,j}) 
+ (\alpha_j-\alpha_k )\tau(S_i) \, \tau(S_{j,k}).
\end{align}

\begin{remark}
 Several names are used in the literature for the enriched Yang-Baxter move:
  {\it {Hirota bilinear difference equation}} \cite{Z}, {\it {discrete
analogue of generalized Toda equation}} and {\it {lattice KP equation}} \cite{N}, {\it { bilinear lattice KP equation}} \cite{ZFSZ}, or {\it {Hirota-Miwa equation}} \cite{LNQ}.
It goes back to the works of Miwa \cite{Mi} and Hirota \cite{Hirota-book}.
\end{remark}

\begin{figure}[ht]
\unitlength=1mm
\begin{picture}(120,40)(0,0)
\multiput(10,10)(35,0){2}{\line(1,0){5}}
\put(25,10){\line(1,0){10}}
\multiput(10,20)(35,0){2}{\line(1,0){5}}
\multiput(10,30)(25,0){2}{\line(1,0){15}}

\multiput(15,10)(20,0){2}{\line(1,1){10}}
\multiput(15,20)(20,0){2}{\line(1,-1){10}}
\put(25,20){\line(1,1){10}}
\put(25,30){\line(1,-1){10}}

\multiput(45,10)(0,10){3}{\vector(1,0){5}}

\put(6,9){\scriptsize $\alpha_i$}
\put(6,19){\scriptsize $\alpha_j$}
\put(6,29){\scriptsize $\alpha_k$}

\put(20,15){\circle*{1}} \put(16,14){\scriptsize $a$}
\put(30,25){\circle*{1}} \put(26,24){\scriptsize $b$}
\put(40,15){\circle*{1}} \put(36,14){\scriptsize $c$}

\put(28,4){\scriptsize $[S]$}
\put(8,14){\scriptsize $[S_i]$} \put(27,14){\scriptsize $[S_j]$} \put(45,14){\scriptsize $[S_k]$} 
\put(16,24){\scriptsize $[S_{i,j}]$} \put(37,24){\scriptsize $[S_{j,k}]$} 
\put(25,34){\scriptsize $[S_{i,j,k}]$}

%%%%%%%%%%%%%
\put(54,19){$\longleftrightarrow$}
%%%%%%%%%%%%%

\multiput(70,30)(35,0){2}{\line(1,0){5}}
\put(85,30){\line(1,0){10}}
\multiput(70,20)(35,0){2}{\line(1,0){5}}
\multiput(70,10)(25,0){2}{\line(1,0){15}}

\multiput(75,30)(20,0){2}{\line(1,-1){10}}
\multiput(75,20)(20,0){2}{\line(1,1){10}}
\put(85,20){\line(1,-1){10}}
\put(85,10){\line(1,1){10}}

\multiput(105,10)(0,10){3}{\vector(1,0){5}}

\put(66,9){\scriptsize $\alpha_i$}
\put(66,19){\scriptsize $\alpha_j$}
\put(66,29){\scriptsize $\alpha_k$}

\put(100,25){\circle*{1}} \put(96,24){\scriptsize $c'$}
\put(90,15){\circle*{1}} \put(86,14){\scriptsize $b'$}
\put(80,25){\circle*{1}} \put(76,24){\scriptsize $a'$}

\put(85,34){\scriptsize $[S_{i,j,k}]$}
\put(68,24){\scriptsize $[S_{i,j}]$} \put(87,24){\scriptsize $[S_{i,k}]$} \put(105,24){\scriptsize $[S_{k,j}]$} 
\put(76,14){\scriptsize $[S_{i}]$} \put(97,14){\scriptsize $[S_{k}]$} 
\put(88,4){\scriptsize $[S]$}

\end{picture}
\caption{The enriched Yang-Baxter move, where $(a',b',c')=(\frac{bc}{a+c},a+c,\frac{ab}{a+c})$.}
\label{fig:eYBmove}
\end{figure}
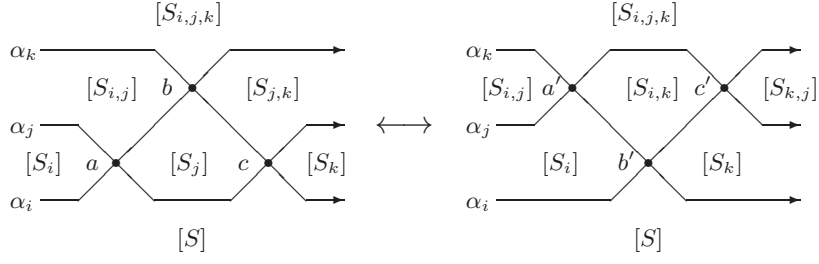

The following lemma is shown easily \cite[Lemma 3.1]{GP16}:

\begin{lem}\label{lem:YB-eYB}
The enriched Yang-Baxter move on chamber variables  
\eqref{eq:tautau} induces the Yang-Baxter move \eqref {eq:L-1}
on vertex variables  
via the transformation: 
\begin{align}
a = (\alpha_i - \alpha_j) 
    \frac{\tau(S_{i,j}) \, \tau(S)}{\tau(S_i) \, \tau(S_j)}
\end{align}
with labels as in Figure \ref{fig:chamber}.
\end{lem}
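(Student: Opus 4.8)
The plan is to compute both sides of \eqref{eq:L-1} directly from the chamber variables. First I would record, using the labeling convention of Figure~\ref{fig:chamber}, the four chambers meeting at each of the six crossings in Figure~\ref{fig:eYBmove}. On the left wiring diagram the crossing carrying $a$ borders the chambers $S,S_i,S_j,S_{i,j}$ (wires $\alpha_i,\alpha_j$), the crossing carrying $b$ borders $S_j,S_{i,j},S_{j,k},S_{i,j,k}$ (wires $\alpha_i,\alpha_k$), and the crossing carrying $c$ borders $S,S_j,S_k,S_{j,k}$ (wires $\alpha_j,\alpha_k$); on the right wiring diagram the crossing carrying $a'$ borders $S_i,S_{i,j},S_{i,k},S_{i,j,k}$ (wires $\alpha_j,\alpha_k$), the crossing carrying $b'$ borders $S,S_i,S_k,S_{i,k}$ (wires $\alpha_i,\alpha_k$), and the crossing carrying $c'$ borders $S_k,S_{i,k},S_{j,k},S_{i,j,k}$ (wires $\alpha_i,\alpha_j$). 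The only chamber label that changes under the move is the central one, $S_j$ on the left and $S_{i,k}$ on the right. Plugging each of these into the formula of the lemma writes $a,b,c,a',b',c'$ as explicit ratios of products of two $\tau$-values, up to a factor $\alpha_p-\alpha_q$.

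The one substantive step is the computation of $a+c$. Factoring out $\tau(S)/(\tau(S_i)\tau(S_j)\tau(S_k))$ turns the sum into that factor times $(\alpha_i-\alpha_j)\tau(S_{i,j})\tau(S_k)+(\alpha_j-\alpha_k)\tau(S_{j,k})\tau(S_i)$, and this last expression is precisely the right-hand side of the enriched Yang--Baxter relation \eqref{eq:tautau}, hence equals $(\alpha_i-\alpha_k)\tau(S_j)\tau(S_{i,k})$. After cancelling $\tau(S_j)$ this gives
\[
a+c=(\alpha_i-\alpha_k)\,\frac{\tau(S_{i,k})\,\tau(S)}{\tau(S_i)\,\tau(S_k)}=b',
\]
which is the middle component of \eqref{eq:L-1}.

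With $a+c$ in closed form the remaining two identities are a bookkeeping exercise: forming the products $ab$ and $bc$ and dividing each by the expression just obtained for $a+c$, the superfluous $\tau$-factors and $\alpha$-differences telescope away, leaving $bc/(a+c)$ equal to $(\alpha_j-\alpha_k)\tau(S_{i,j,k})\tau(S_i)/(\tau(S_{i,j})\tau(S_{i,k}))=a'$ and $ab/(a+c)$ equal to $(\alpha_i-\alpha_j)\tau(S_{i,j,k})\tau(S_k)/(\tau(S_{i,k})\tau(S_{j,k}))=c'$. Together with $a+c=b'$ this is exactly the transformation in \eqref{eq:L-1}, proving the lemma. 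I expect the only real pitfall to be the first step: correctly matching each of the six crossings to the template of Figure~\ref{fig:chamber}, in particular keeping track of which of the two crossing wires plays the role of the lower wire ``$\alpha_i$'' and which the upper wire ``$\alpha_j$'', since an error there would silently corrupt all of the subsequent algebra; so I would carefully cross-check those assignments against the figure before carrying out the cancellations.
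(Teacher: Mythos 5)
Your proposal is correct: the chamber assignments at the six crossings match Figures \ref{fig:chamber} and \ref{fig:eYBmove}, the relation \eqref{eq:tautau} gives exactly $a+c=b'$, and the telescoping products then yield $bc/(a+c)=a'$ and $ab/(a+c)=c'$, which is \eqref{eq:L-1}. This direct verification is precisely the ``easy'' computation the paper delegates to \cite[Lemma 3.1]{GP16}, so your route coincides with the intended proof.
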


\subsection{The tau-function for $\phi(n,k)$.}\label{subsec:tau-phi}

We start with the wiring diagram of the semi-infinite word 
\eqref{eq:v-evolution} with $u$ and $v=v(k)$ given by \eqref{eq:Phi-uv},
as depicted at Figure \ref{fig:uv}.
At the left end of the diagram, the wire at position $i$ is assigned 
with $\alpha_i \in \R_{>0}$, and the chamber between the $n$-th and the $1$-st 
wires is labelled with $0 \in \Z^n$.
From the diagram we see that $n-2$ Yang-Baxter moves are applied 
in calculating the dynamics of $\phi(n,k)$, 
$(\bz_i^t,\by_i^t) \mapsto (\by_i^{t+1},\bz_{i+1}^t)$,
hence the corresponding enriched Yang-Baxter moves give 
$n-2$ relations among chamber variables:
\begin{align}
\begin{split}\label{eq:chamber1}
(\alpha_n& - \alpha_{k+1}) \, \tau(e_{[1,p-1]}+e_{k+1}) \, \tau(e_{[1,p]}-e_n)
\\
= &(\alpha_n - \alpha_{p}) \, \tau(e_{[1,p-1]}+e_{k+1}-e_n)\, \tau(e_{[1,p]})
\\
&+ (\alpha_p - \alpha_{k+1}) \,\tau(e_{[1,p-1]}) \,\tau(e_{[1,p]}+e_{k+1}-e_n)
; ~p= 1,\ldots,k,
\end{split}
\\
\begin{split}\label{eq:chamber2}
(\alpha_n& - \alpha_{k+1}) \, \tau(e_{[1,k+p]}+e_{k+1}) \, 
\tau(e_{[1,k+p+1]}-e_n)
\\
= &(\alpha_n - \alpha_{k+p+1}) \,\tau(e_{[1,k+p]}+e_{k+1}-e_n) \,\tau(e_{[1,k+p+1]})
\\
&+ (\alpha_{k+p+1} - \alpha_{k+1})\, \tau(e_{[1,k+p]})\, \tau(e_{[1,k+p+1]}+e_{k+1}-e_n); ~p= 1,\ldots,n-k-2,
\end{split}
\end{align}

\begin{figure}[ht]
\unitlength=1mm
\begin{picture}(150,110)(0,0)
\multiput(10,40)(0,10){2}{\line(1,0){5}}
\put(15,50){\vector(1,-1){50}}
\multiput(15,40)(10,-10){5}{\line(1,1){10}}
\put(10,30){\line(1,0){15}}
\put(10,20){\line(1,0){25}}
\put(10,10){\line(1,0){35}}

\put(35,100){\line(1,-1){40}}
\put(35,90){\vector(1,1){10}}
\multiput(45,80)(10,-10){3}{\line(1,1){10}}
\put(10,90){\line(1,0){25}}
\put(10,80){\line(1,0){35}}
\put(10,70){\line(1,0){45}}
\put(10,60){\line(1,0){55}}

\put(85,20){\line(1,1){60}}
\multiput(85,30)(10,10){6}{\line(1,-1){10}}
\put(55,20){\line(1,0){30}}
\put(45,30){\line(1,0){40}}
\put(35,40){\line(1,0){60}}
\put(25,50){\line(1,0){80}}
\put(75,60){\line(1,0){40}}
\put(75,70){\line(1,0){50}}
\put(65,80){\line(1,0){70}}
\put(55,90){\line(1,0){30}} \put(85,90){\vector(1,1){10}}
\put(85,100){\line(1,-1){10}} \put(95,90){\vector(1,0){55}}

\put(145,80){\vector(1,0){5}}
\put(145,70){\vector(1,0){5}}
\put(135,60){\vector(1,0){15}}
\put(125,50){\vector(1,0){25}}
\put(115,40){\vector(1,0){35}}
\put(105,30){\vector(1,0){45}}
\put(95,20){\vector(1,0){55}}
\put(65,10){\line(1,0){70}} \put(145,10){\vector(1,0){5}}
\put(135,10){\line(1,-1){10}} \put(135,0){\line(1,1){10}}

\put(6,9){\scriptsize $\alpha_n$}
\put(6,19){\scriptsize $\alpha_1$}
\put(6,29){\scriptsize $\alpha_2$}
\put(6,39){\scriptsize $\alpha_k$}
\put(3,49){\scriptsize $\alpha_{k+1}$}
\put(3,59){\scriptsize $\alpha_{k+2}$}
\put(3,69){\scriptsize $\alpha_{n-1}$}
\put(6,79){\scriptsize $\alpha_{n}$}
\put(6,89){\scriptsize $\alpha_{1}$}

\multiput(20,45)(10,-10){5}{\circle*{1}}
\multiput(40,95)(10,-10){4}{\circle*{1}}
\multiput(90,25)(10,10){6}{\circle*{1}}
\put(90,95){\circle*{1}}
\put(140,5){\circle*{1}}

\put(18,4){\scriptsize $-e_n$}
\put(18,14){\scriptsize $0$}
\put(18,24){\scriptsize $e_1$} 
\put(15,34){\scriptsize $e_{[1,2]}$}
\put(8,44){\scriptsize $e_{[1,k]}$}

\put(80,4){\scriptsize $e_{k+1}-e_{n-1}-e_n$}
\put(70,14){\scriptsize $e_{k+1}-e_n$}
\put(60,24){\scriptsize $e_{k+1}$}
\put(60,34){\scriptsize $e_{1} +e_{k+1}$}
\put(60,44){\scriptsize $e_{[1,2]} +e_{k+1}$}
\put(60,54){\scriptsize $e_{[1,k+1]}$}
\put(30,64){\scriptsize $e_{[1,k+2]}$}
\put(30,74){\scriptsize $e_{[1,n-1]}$}
\put(30,84){\scriptsize $e_{[1,n]}$}

\put(85,64){\scriptsize $e_{[1,k+1]}+e_{k+1}$}
\put(85,74){\scriptsize $e_{[1,k+2]}+e_{k+1}$}
\put(85,84){\scriptsize $e_{[1,n-1]}+e_{k+1}$}
\put(60,94){\scriptsize $e_{[1,n]}+e_{k+1}$}
\put(100,94){\scriptsize $e_{[1,n]}+e_1+e_{k+1}$}

\put(110,24){\scriptsize $e_{1} +e_{k+1}-e_n$}
\put(115,34){\scriptsize $e_{[1,2]} +e_{k+1}-e_n$}
\put(120,44){\scriptsize $e_{[1,k+1]} -e_n$}
\put(125,54){\scriptsize $e_{[1,k+1]} +e_{k+1}-e_n$}
\put(135,64){\scriptsize $e_{[1,k+2]} +e_{k+1}-e_n$}

\put(22,44){\scriptsize ${z_1}$}
\put(32,34){\scriptsize ${z_2}$}
\put(42,24){\scriptsize ${z_k}$}
\put(52,14){\scriptsize ${z_{k+1}}$}
\put(62,4){\scriptsize ${z_{k+2}}$}

\put(72,64){\scriptsize ${z_{n-1}}$}
\put(62,74){\scriptsize ${z_{k+2}}$}
\put(52,84){\scriptsize ${z_{k+1}}$}
\put(42,94){\scriptsize ${z_{k}}$}

\put(85,24){\scriptsize ${y_1}$}
\put(95,34){\scriptsize ${y_2}$}
\put(105,44){\scriptsize ${y_k}$}
\put(112,54){\scriptsize ${y_{k+1}}$}
\put(122,64){\scriptsize ${y_{k+2}}$}
\put(132,74){\scriptsize ${y_{n-1}}$}

\put(85,94){\scriptsize ${y_1}$}

\multiput(6,18)(3,0){49}{\line(1,0){1}}
\multiput(6,88)(3,0){49}{\line(1,0){1}}

\end{picture}
\caption{The wiring diagram for $v(k) u$ on the universal covering of 
a cylinder. A fundamental domain is between two dashed lines.}
\label{fig:uv}
\end{figure}

For a glide $g \in \hat{S}_n$ we define its {\it trajectory} $\kappa(g) \in \Z^n$ in the following way:
draw the wiring diagram of $g$, and label the wire at position $i$ with 
$\alpha_i$ at the left end.
Let $S$ (resp. $S'$) 
be the label of the chamber between wires $1$ and $2$ at the leftmost (resp. rightmost) point of the diagram for $g$.
Then the trajectory of $g$ is $\kappa(g) := S' - S$.

\begin{lem}\label{lem:traj-uv}
(i) The trajectories of $v:=v(k)$ and $\tilde{u} := \rho^{n-1}(u)$ are
\begin{align}\label{eq:traj}
\kappa(\tilde u) = -e_n,
\quad 
\kappa(v) = e_{k+1}.
\end{align}
(ii) %Define $\tilde{u} := \rho^{n-1}(u)$. 
When we label the chamber to the left of the vertex $y_{0,j}^{t=0}$
with $S \in \Z^n$, then that of the vertex $y_{i,j}^t$ is
$$
S + i \kappa(\tilde{u}) - t \kappa(v).
$$
\end{lem}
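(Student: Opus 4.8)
The plan is to extract both statements from the combinatorics of the wiring diagram, using the chamber-labeling rule of Figure~\ref{fig:chamber} and the periodicity~\eqref{eq:tau-delta}. For part~(i) I would first unwind the definition of the trajectory: as one scans the wiring diagram of a glide $g$ from left to right, the label of the chamber lying between wires $1$ and $2$ changes only when the scan passes a crossing situated in positions $(1,2)$, and there it changes by $e_b-e_a$, where $\alpha_a$ and $\alpha_b$ are the labels carried by the two wires entering that crossing from below and from above. For $\tilde u=\rho^{n-1}(u)=s_0s_1s_2\cdots s_{n-2}$ the only generator acting in positions $(1,2)$ is the $s_1$; just before it, the wire $\alpha_n$ has already been transported to position~$1$ by the initial $s_0$, and following the distinguished chamber across this crossing --- while reconciling the label across the wrap-around $s_0$ crossing by means of~\eqref{eq:tau-delta} --- yields $\kappa(\tilde u)=-e_n$. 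The computation of $\kappa(v(k))$ is the same in spirit: in $v(k)=s_ks_{k-1}\cdots s_0 s_{n-1}\cdots s_{k+2}$ one identifies $\alpha_{k+1}$ as the wire transported across all positions, follows the distinguished chamber through the crossings, and obtains $\kappa(v(k))=e_{k+1}$; the two degenerate shapes $k=0$ and $k=n-2$, where one of the two runs of generators is empty, are checked separately and agree with this formula. In fact every label one needs is already displayed in Figure~\ref{fig:uv}, so part~(i) amounts to reading $S$ and $S'$ off that picture.

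For part~(ii) I would argue by a double induction on $i$ and $t$, combining part~(i) with the block structure of the space-time wiring diagram imposed by the (enriched) Yang--Baxter moves, equivalently by the commutation relation~\eqref{eq:uv} and its iteration in~\eqref{eq:v-evolution}. Stepping one state block to the right inside a fixed time slice advances the distinguished chamber by exactly $\kappa(\tilde u)$: this uses that the output state blocks produced by one evolution step are $\tilde u$ together with its $\rho^{-1}$-shifts, laid out on the cylinder in the gauge that makes the whole diagram periodic under $t\mapsto t+1$. Stepping from time $t$ to time $t+1$ inside a fixed block advances it by $-\kappa(v(k))$, since that is the net displacement produced by pushing one carrier block through a full row of state blocks. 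As these two elementary moves commute, iterating them places the chamber to the left of $y_{i,j}^t$ at $S+i\,\kappa(\tilde u)-t\,\kappa(v(k))$.

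The main obstacle I anticipate is bookkeeping, and it is localized in two spots. In part~(i) the wrap-around crossing $s_0$ forces the computation onto the universal cover, where the $\Z\,e_{[1,n]}$-ambiguity of chamber labels must be tracked carefully; this is exactly the point at which $\kappa(v(k))=e_{k+1}$ --- rather than $e_{k+1}$ shifted by a multiple of $e_{[1,n]}$ --- gets pinned down. In part~(ii) one must fix the gauge for the bi-infinite space-time diagram so that the $\rho^{k_1}$- and $\rho^{k_2}$-twists built into~\eqref{eq:uv} are absorbed into the layout, making the per-block and per-step displacements come out as the clean multiples $i\,\kappa(\tilde u)$ and $-t\,\kappa(v(k))$ rather than sums of $\rho$-twisted trajectories. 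Once the diagram is laid out consistently, both assertions reduce to straightforward applications of~\eqref{eq:tautau} together with the computations of part~(i).
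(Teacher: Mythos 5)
There is a genuine gap, and it sits exactly in the step you dismiss as ``bookkeeping'' in part (i). Your reduction rule tracks the chamber pinned between \emph{positions} $1$ and $2$ (``the label changes only when the scan passes a crossing situated in positions $(1,2)$, by $e_b-e_a$''), but the trajectory of a glide is not computed from that chamber: it follows the chamber bounded by the \emph{wires} labelled $\alpha_1$ and $\alpha_2$, which for these glides drifts upward with the wires (for $v(k)$ it ends up between positions $(2,3)$, for $\tilde u$ between positions $(n,1)$). Concretely, applying your rule to $\tilde u=s_0s_1\cdots s_{n-2}$: the unique $(1,2)$-crossing is the $s_1$, with $\alpha_n$ entering from below and $\alpha_2$ from above, so your rule yields $\kappa(\tilde u)=e_2-e_n$, not $-e_n$; for $v(k)$ the unique $(1,2)$-crossing is the $s_1=z_k$, with $\alpha_1$ below and $\alpha_{k+1}$ above, giving $e_{k+1}-e_1$, not $e_{k+1}$; and for the degenerate case $v(0)=s_0s_{n-1}\cdots s_2$ there is no $(1,2)$-crossing at all, so your rule gives $0$ instead of $e_1$. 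None of these discrepancies can be absorbed by the wrap-around reconciliation you invoke, because \eqref{eq:tau-delta} only identifies labels differing by multiples of $e_{[1,n]}$, whereas the errors are $e_2$, $-e_1$, $-e_1$ respectively. The paper's proof avoids this by reading off the correct pair of chambers in Figure~\ref{fig:uv}: $\kappa(v)$ is the difference of the left chamber of $y_2$ (at height $(2,3)$, label $e_1+e_{k+1}$) and the left chamber of $z_k$ (at height $(1,2)$, label $e_1$), and $\kappa(\tilde u)$ is the difference of the $(n,1)$-chamber to the right of $z_{k+1}$ (label $e_{k+1}-e_n$) and the left chamber of $y_1$ (label $e_{k+1}$). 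The intuitive picture you should have is that the single ``moving'' wire ($\alpha_{k+1}$ for $v(k)$, $\alpha_n$ for $\tilde u$) passes through the region bounded by $\alpha_1$ and $\alpha_2$, so the label changes by $\pm$ that wire's unit vector; your fixed-height chamber records a different event and hence a different vector.

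On part (ii): the paper disposes of it by citing \cite[Proposition 5.4]{GP16}, so your double induction (per-block shift $\kappa(\tilde u)$, per-time-step shift $-\kappa(v)$, commuting elementary moves) is a reasonable sketch of what that citation encapsulates, and it is not where the difficulty lies; but it of course depends on having the correct values and the correct notion of trajectory from part (i), so the fix above is needed before part (ii) goes through.
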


\begin{proof}
(i) The trajectory $\kappa(v)$ is the difference of the labels of the left chamber of $y_2$ and the left chamber of $z_k$, which is $(e_1+e_{k+1})-e_1$.
Since the offset of $v$ is $k_2$, in the diagram the vertex of 
$y_1$ is a crossing of the wires $n$ and $1$.
Hence the difference of the labels of the right chamber of $z_{k+1}$ and 
the left chamber of $y_1$, $(e_{k+1}-e_n) - e_{k+1}$, is the trajectory
of $\rho^{n-1}(u)$.
(ii) It follows from \cite[Proposition 5.4]{GP16}.
\end{proof}

Label the left chamber of $z_{i,k+1}^t$ with $\kappa(\tilde{u})i - \kappa(v) t$, and define
\begin{align}\label{eq:tau-it}
\tau^{i,t}(S) := \tau(S+\kappa(\tilde{u})i - \kappa(v) t)
\end{align}
for $S \in \Z^n$.
The following proposition immediately follows from Lemma \ref{lem:YB-eYB}.

\begin{prop}\label{prop:chamber-yz}
For $i,t \in \Z$, set 
\begin{align}\label{eq:yz-tau}
\begin{split}
&y_{i,p}^t = (\alpha_n - \alpha_p) 
\frac{\tau^{i,t}(e_{k+1}+e_{[1,p]}) \,\tau^{i,t}(e_{k+1}-e_n+e_{[1,p-1]})}
     {\tau^{i,t}(e_{k+1}+e_{[1,p-1]}) \,\tau^{i,t}(e_{k+1}-e_n+e_{[1,p]})}
; ~p= 1,\ldots,n-1,
\\
&z_{i,p}^t = (\alpha_{k+1-p}-\alpha_{k+1}) 
\frac{\tau^{i,t}(e_{k+1}+e_{[1,k+1-p]}) \, \tau^{i,t}(e_{[1,k-p]})}{\tau^{i,t}(e_{[1,k+1-p]})\,\tau^{i,t}(e_{[1,k-p]}+e_{k+1})}; ~p= 1,\ldots,k,
\\
&z_{i,k+p}^t = (\alpha_{n+1-p}-\alpha_{k+1})
\frac{\tau^{i,t}(e_{k+1}+e_{[1,n-p+1]}) \, \tau^{i,t}(e_{[1,n-p]})}
{\tau^{i,t}(e_{[1,n-p+1]})\,\tau^{i,t}(e_{k+1}+e_{[1,n-p]})}; ~p= 1,\ldots,n-k-1.
\end{split}
\end{align}
Then a solution of \eqref{eq:chamber1} and \eqref{eq:chamber2}
gives the solution for $\Phi(n,k)$. 
\end{prop}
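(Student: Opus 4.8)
The plan is to read the formulas \eqref{eq:yz-tau} off the wiring diagram of Figure \ref{fig:uv} via the chamber-to-vertex dictionary of Lemma \ref{lem:YB-eYB}, and then to note that a single step $(\bz_i^t,\by_i^t)\mapsto(\by_i^{t+1},\bz_{i+1}^t)$ of $\phi(n,k)$ is by construction the composite of the $n-2$ Yang-Baxter moves carrying $v(k)$ past one copy of $u$, each of which is induced by an enriched Yang-Baxter move on the $\tau$'s.

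First I would fix the wiring diagram of $v(k)u$ as in \S \ref{subsec:tau-phi}, with wire $i$ labelled $\alpha_i$ at the left and the chamber between wires $n$ and $1$ labelled $0\in\Z^n$, so that every chamber label is the one drawn in Figure \ref{fig:uv}. In that diagram each $z_{i,p}^t$ sits at a crossing involving wire $\alpha_{k+1}$ and each $y_{i,p}^t$ at a crossing involving wire $\alpha_n$; for each such crossing Lemma \ref{lem:YB-eYB} writes the vertex variable as $(\alpha_a-\alpha_b)\,\tau(S_{a,b})\,\tau(S)/(\tau(S_a)\,\tau(S_b))$ with the four surrounding chambers labelled as in Figure \ref{fig:chamber}. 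Substituting the labels of Figure \ref{fig:uv}, and using Lemma \ref{lem:traj-uv}(ii) together with the definition \eqref{eq:tau-it} of $\tau^{i,t}$ to record the $i$- and $t$-dependence as the uniform shift $\kappa(\tilde u)i-\kappa(v)t$, reproduces the three families of formulas \eqref{eq:yz-tau} verbatim.

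Second, \S \ref{subsec:tau-phi} already identifies the step $(\bz_i^t,\by_i^t)\mapsto(\by_i^{t+1},\bz_{i+1}^t)$ of $\phi(n,k)$ with $n-2$ Yang-Baxter moves on the diagram, and the corresponding $n-2$ enriched Yang-Baxter moves on chamber variables with the bilinear relations \eqref{eq:chamber1} and \eqref{eq:chamber2}. By Lemma \ref{lem:YB-eYB}, if $\tau$ satisfies these relations then the vertex variables attached to the two sides of each enriched move are related by \eqref{eq:L-1}; composing the $n-2$ moves then shows that the $y_{i,p}^t$, $z_{i,p}^t$ defined by \eqref{eq:yz-tau} evolve according to $\phi(n,k)$, and hence their tropicalizations evolve according to $\Phi(n,k)$.

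The only real content is the bookkeeping in the first step: one must verify that in the staircase portion of Figure \ref{fig:uv} the pair of wires meeting at the vertex labelled $z_{i,p}^t$ or $y_{i,p}^t$ is the one named above, that the chamber playing the role of $S$ carries the asserted label, and that the shift of Lemma \ref{lem:traj-uv}(ii) is genuinely independent of $i$ and $t$, so that a single tau-function $\tau$ on $\Z^n$ records all states and carriers simultaneously. Once these identifications are in place the equality of formulas is immediate, which is precisely why the proposition ``immediately follows'' from Lemma \ref{lem:YB-eYB}.
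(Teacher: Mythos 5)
Your argument is correct and follows essentially the same route as the paper: the paper's proof is exactly the observation that the formulas \eqref{eq:yz-tau} are the chamber-to-vertex dictionary of Lemma \ref{lem:YB-eYB} applied to the labelled wiring diagram of Figure \ref{fig:uv} (with the $i,t$-shift of Lemma \ref{lem:traj-uv}(ii) absorbed into $\tau^{i,t}$), so that the $n-2$ enriched Yang--Baxter relations \eqref{eq:chamber1}--\eqref{eq:chamber2} induce the Lusztig moves defining one step of $\phi(n,k)$, and tropicalization then yields $\Phi(n,k)$. Your more detailed bookkeeping of which wires and chambers surround each vertex is just an expansion of what the paper leaves implicit.
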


Via tropicalization, we see the following facts.
Let $T$ be the tropical chamber variables (tau-functions) on $\Z^n$ 
satisfying the tropicalization of \eqref{eq:chamber1} and \eqref{eq:chamber2}:
\begin{align}
\begin{split}\label{eq:trop-tau1}
&T(e_{[1,p-1]}+e_{k+1}) + T(e_{[1,p]}-e_n)
\\
& \quad = \min[ T(e_{[1,p-1]}+e_{k+1}-e_n) + T(e_{[1,p]}),
\\
&\qquad \qquad \quad 
A_p - A_n + T(e_{[1,p-1]}) + T(e_{[1,p]}+e_{k+1}-e_n)]; ~p= 1,\ldots,k,
\end{split}
\\
\begin{split}\label{eq:trop-tau2}
&T(e_{[1,k+p]}+e_{k+1}) + T(e_{[1,k+p+1]}-e_n)
\\
& \quad 
= \min[ T(e_{[1,k+p]}+e_{k+1}-e_n) + T(e_{[1,k+p+1]}),
\\
& \qquad \qquad \quad 
A_{k+p+1} - A_n + T(e_{[1,k+p]}) + T(e_{[1,k+p+1]}+e_{k+1}-e_n)]; ~p= 1,\ldots,n-k-2.
\end{split}
\end{align}
A solution for these equations gives that of $\Phi(n,k)$ via 
\begin{align}\label{eq:yz-Tau}
\begin{split}
&Y_{i,p}^t = A_n + 
T^{i,t}(e_{k+1}+e_{[1,p]}) + T^{i,t}(e_{k+1}-e_n+e_{[1,p-1]})
\\ &\qquad \quad 
- T^{i,t}(e_{k+1}+e_{[1,p-1]}) - T^{i,t}(e_{k+1}-e_n+e_{[1,p]});
~p= 1,\ldots,n-1,
\\
&Z_{i,p}^t = A_{k+1-p} +
T^{i,t}(e_{k+1}+e_{[1,k+1-p}]) + T^{i,t}(e_{[1,k-p]})
\\ &\qquad \quad 
- T^{i,t}(e_{[1,k+1-p]})-T^{i,t}(e_{[1,k-p]}+e_{k+1}); ~p= 1,\ldots,k,
\\
&Z_{i,k+p}^t = A_{n+1-p} + 
T^{i,t}(e_{k+1}+e_{[1,n-p+1]}) + T^{i,t}(e_{[1,n-p]})
\\ &\qquad \quad 
- T^{i,t}(e_{[1,n-p+1]})- T^{i,t}(e_{k+1}+e_{[1,n-p]}); ~p= 1,\ldots,n-k-1.
\end{split}
\end{align}
Here we set 
$T^{i,t}(S) = T(S+\kappa(\tilde{u}) i - \kappa(v) t)$.

%%%%%%%%%%%%%%%%%%%%%
\section{Soliton solutions for $\Phi(n,k)$ via tropicalization 
of geometric solution}
\label{sec:trop-soliton1}
%%%%%%%%%%%%%%%%%%%%%

We study the soliton solution for $\Phi(n,k)$ on $\R$ by tropicalizing 
those for $\phi(n,k)$ on $\R_{>0}$ in \cite{GP16}.
In this section we prove the following theorem.

\begin{thm}\label{thm:trop-sol}
Tropicalizing the soliton solutions for $\phi(n,k)$ studied in 
\cite{GP16}, we get the soliton solutions for $\Phi(n,k)$ whose minimal form 
is $(x,\ldots,x)$ for $A_n < x < A$,
where $A = \min [A_i; ~i \in \{1,\ldots,n-1\} \setminus \{k+1\}]$.
Their velocity is $1/(n-1)$, independent of $x$.
\end{thm}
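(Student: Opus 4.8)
The plan is to start from the explicit multi-soliton solutions for the rational system $\phi(n,k)$ constructed in \cite{GP16}. Those solutions are expressed through tau-functions of the form $\tau = \sum_{J} (\text{monomial in soliton parameters and spectral-type variables})$, i.e.\ a subtraction-free expression. The natural first step is to recall the precise shape of the one-soliton tau-function from \cite{GP16} in the coordinates adapted to our chamber variables (the functions $\tau^{i,t}(S)$ of \S\ref{subsec:tau-phi}), plug it into \eqref{eq:yz-tau}, and then tropicalize using the dictionary of Appendix A: products become sums, sums become $\min$, and quotients become differences, so the resulting $T^{i,t}(S)$ satisfy \eqref{eq:trop-tau1}--\eqref{eq:trop-tau2} and \eqref{eq:yz-Tau} produces a candidate solution for $\Phi(n,k)$.

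The next step is to carry out the tropical limit carefully and observe the collapse phenomenon advertised in the introduction: a geometric one-soliton depends on several continuous parameters, but in the $\min$-limit most of the defining monomials are dominated and drop out, so only a one-parameter family survives — precisely the solitons of minimal form $(x,\dots,x)$. Concretely, I would substitute the commuting-pair values $A_i$ (with $A_n < A_i$ for $i\ne k{+}1$, $A_{k+1}>A_i$, and $A_n=\min$) into the tropicalized tau-function, identify which terms in each $\min$ are active as functions of the lattice site and time $t$, and read off $Y_{i,p}^t$ from \eqref{eq:yz-Tau}. This should yield a profile that at its minimal length is a single non-vacuum state equal to $(x,\dots,x)$ with $A_n < x < A$, consistent with the positivity condition of \S\ref{subsec:soliton} (all entries strictly above $A_n$). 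One then checks directly that this sequence satisfies soliton conditions (i) and (ii): condition (i) amounts to showing the profile is reproduced after $n-1$ time steps shifted by one lattice site, which follows from the translation structure $T^{i,t}(S)=T(S+\kappa(\tilde u)i-\kappa(v)t)$ together with \eqref{eq:traj}, since $\kappa(\tilde u)=-e_n$ and $\kappa(v)=e_{k+1}$ govern the $(i,t)$-dependence; condition (ii) (final carrier returns to $\bZ$) follows from the $Z$-formulas in \eqref{eq:yz-Tau} applied far to the right, where the tropical tau-function is purely "vacuum."

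The velocity claim is then extracted from the same computation: one counts that the minimal-length configuration recurs after exactly $n-1$ steps with a shift of one site, giving velocity $1/(n-1)$ regardless of $x$; this matches Theorem \ref{thm:solitons} only in the degenerate case where all $b_p=1$ except possibly none, and indeed $(x,\dots,x)$ with general $x$ is \emph{not} of the form $(b_1,\dots,b_{n-1})$ with $b_{k+1}=1$ unless $x=1$, which is exactly why the tropicalized geometric solutions are more limited than the full soliton family — this is the point the theorem is making. The main obstacle I anticipate is bookkeeping: tracking which monomial wins each $\min$ as the lattice site varies, across the $n-2$ bilinear relations \eqref{eq:trop-tau1}--\eqref{eq:trop-tau2} simultaneously, and making the "almost all geometric solutions vanish in tropicalization" statement rigorous rather than merely plausible. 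I would isolate that combinatorial core as a lemma about the tropical tau-function's piecewise-linear shape and then let \eqref{eq:yz-Tau} and the translation structure do the rest mechanically.
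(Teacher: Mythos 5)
Your skeleton matches the paper's route (take the GP16 one-soliton tau-function, tropicalize it, feed it into \eqref{eq:yz-Tau}, read off the minimal form and velocity), but the step you treat as mechanical is where the actual content lies, and your proposed method for it would fail. The soliton data in the geometric tau-function \eqref{eq:1-soliton-tau} are $B_j=(b-\alpha_j)/(c-\alpha_j)$ with $b,c$ constrained by the curve relation \eqref{eq:b-c}; these are \emph{not} subtraction-free, so the Appendix A dictionary $(+,\times,\div)\to(\min,+,-)$ cannot simply be applied to them. The paper instead lifts everything to the Puiseux-series field $K$, fixes $\val(\alpha_i)=A_i$, and computes $\val(B_j)$ using the non-archimedean equality (Lemma \ref{lem:non-arch-eq}) together with the valuation of the curve equation, which is precisely what forces $\val(b)=A_{i_p}$ (Lemma \ref{lem:valBC}, Corollary \ref{cor:trop-B}) and produces the explicit exponent vector $W(X_c)$ with tropical tau $T(S)=\min[0,L+S\cdot W(X_c)]$. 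Without this valuation-theoretic input you have no way to determine the piecewise-linear shape of $T$, so the subsequent "which term wins the min" analysis has nothing to run on.

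Second, the restriction to the band $A_n<x<A$ and the claim that only the constant profile $(x,\ldots,x)$ survives are not a matter of monomials being dominated inside one tau-function, which is how you frame the collapse. Geometric one-solitons exist for $X_c$ in every band $(A_{i_p},A_{i_{p+1}})$; the paper's Proposition \ref{prop:W-uv}(ii) shows that for $p\ge 2$ one has $W(X_c)\cdot\kappa(\rho^{-1}(u))=0$, so the tropical tau-function is constant in the propagation direction and yields no moving soliton at all, while for the lowest band the velocity is the ratio $W(X_c)\cdot\kappa(v)\,/\,W(X_c)\cdot\kappa(\rho^{-1}(u))=1/(n-1)$, independent of $X_c$. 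Your proposal never identifies this vanishing criterion, and your velocity argument ("count that the configuration recurs after $n-1$ steps") presupposes the explicit $W(X_c)$ you have not computed. So the two central claims of the theorem --- which geometric solitons survive tropicalization and why the velocity is $1/(n-1)$ for all $x$ --- are asserted rather than proved in your outline; filling them requires the valuation computation and the $\kappa$-pairing criterion, not just bookkeeping.
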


Besides the condition \eqref{eq:A-condition}, for simplicity we assume that 
all $A_i$ are distinct. 
Using $\{i_p; ~p=1,\ldots,n\} = \{1,\ldots,n \}$ we write the ordering of 
the $A_i$ as
$$
A_{i_1} < A_{i_2} < \cdots < A_{i_n} 
$$
where $i_1 = n$ and $i_n=k+1$.
We define $A_{[i_1,\ldots,i_p]} := A_{i_1} + A_{i_2} +\cdots + A_{i_p}$.
Let $K = \C\{\{t\}\} := \cup_{n \geq 1}\C((t^{1/n}))$ be the field of 
Puiseux series over $\C$. Let $\val$ be the valuation map 
$\val: K \to \R \cup \{\infty\}$ (see Appendix \ref{sec:app2}
for the precise definition.)
For $k=1,\ldots,n$, fix $\alpha_k \in K$ to satisfy $\val(\alpha_k) = A_k$.
Let $b, c \in K$ satisfy
\begin{align}\label{eq:b-c}
(b-\alpha_1)(b-\alpha_2) \cdots (b-\alpha_n)
=
(c-\alpha_1)(c-\alpha_2) \cdots (c-\alpha_n),
\end{align}
and $A_{i_p} \leq \val(b) < \val(c) < A_{i_{p+1}}$ for some $p \in \{1,\ldots,n-1\}$.

Let $F(X,Y)$ be the tropical polynomial: 
$$
F(X,Y) = \min[Y, nX, (n-1)X + A_{i_1}, (n-2)X + A_{[i_1,i_2]}, \ldots,
X+A_{[i_1,i_{n-1}]},A_{[i_1,i_n]}].
$$
The affine tropical curve $\Gamma$ determined by $F(X,Y)$ is a graph in $\R^2$
defined as
$$
\Gamma = \{(X,Y) \in \R^2 ~|~ F(X,Y) \text{ is indifferentiable} \}.
$$ 
See Figure \ref{fig:Gamma}. 
The tropical curve $\Gamma$ is the tropicalization of 
the affine curve $\gamma$ in $K^2$ given by
$$
y = y(x) = (x-\alpha_1)(x-\alpha_2) \cdots (x-\alpha_n).
$$
Precisely, $\Gamma$ is the closure of the set of valuations of points 
in $\gamma$.
%$\{(\val(x),\val(y)) ~|~ (x,y) \in \gamma \}.$

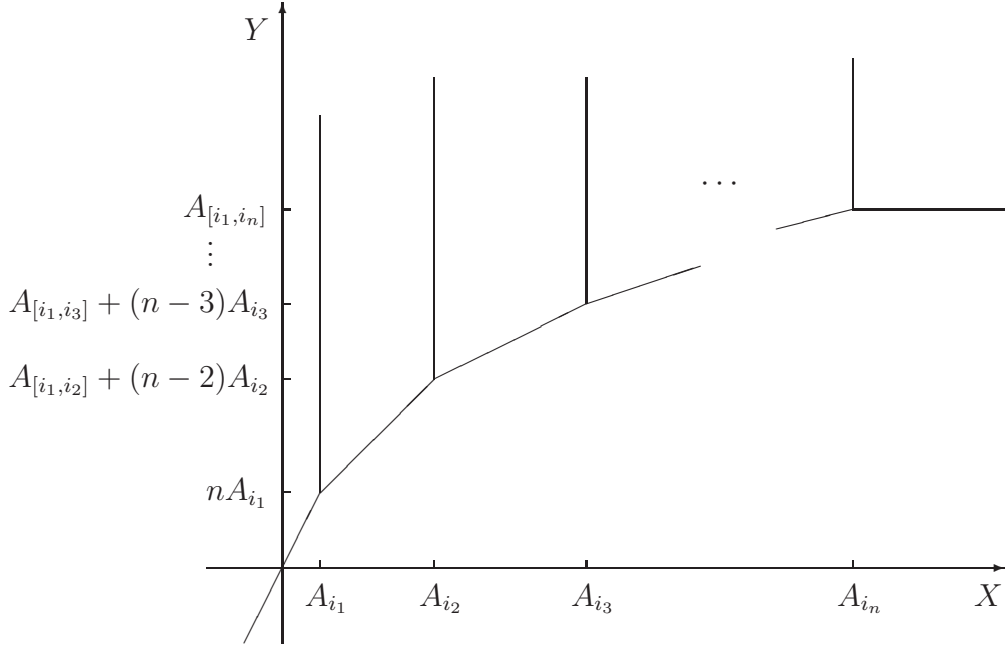
\begin{figure}[ht]
\unitlength=1mm
\begin{picture}(110,90)(-10,5)
\put(0,20){\vector(1,0){105}}
\put(10,10){\vector(0,1){85}}
\put(5,10){\line(1,2){10}}
\put(15,30){\line(1,1){15}}
\put(30,45){\line(2,1){20}}
\put(50,55){\line(3,1){15}}
\put(75,65){\line(4,1){10}}
\put(85,67.5){\line(1,0){20}}

\put(15,30){\line(0,1){50}}
\put(30,45){\line(0,1){40}}
\put(50,55){\line(0,1){30}}
\put(65,70){$\cdots$}
\put(85,67.5){\line(0,1){20}}

\put(15,20){\line(0,1){1}}
\put(30,20){\line(0,1){1}}
\put(50,20){\line(0,1){1}}
\put(85,20){\line(0,1){1}}
\put(13,15){$A_{i_1}$}
\put(28,15){$A_{i_2}$}
\put(48,15){$A_{i_3}$}
\put(83,15){$A_{i_n}$}

\put(10,30){\line(1,0){1}}
\put(10,45){\line(1,0){1}}
\put(10,55){\line(1,0){1}}
\put(10,67.5){\line(1,0){1}}
\put(0,29){$nA_{i_1}$}
\put(-26,44){$A_{[i_1,i_2]}+(n-2)A_{i_2}$}
\put(-26,54){$A_{[i_1,i_3]}+(n-3)A_{i_3}$}
\put(0,60){$\vdots$}

\put(-3,66.5){$A_{[i_1,i_n]}$}

\put(101,15){$X$}
\put(5,90){$Y$}
\end{picture}
\caption{The tropical curve $\Gamma$ separating $\R^2$ into $n+2$ domains.}
\label{fig:Gamma}
\end{figure}

Let $X_c$ satisfy $A_{i_p} < X_c < A_{i_{p+1}}$ for some 
$p \in \{1,2,,\ldots,n-1\}$. 
We consider a pair of intersection points of the curve $\Gamma$ and a line  
parallel to the $X$-axis:
$$
(A_{i_p},(n-p)X_c + A_{[i_1,i_p]}),
\quad
(X_c,(n-p)X_c + A_{[i_1,i_p]}).
$$  
This pair is the image under $\val$ of a pair of points $(b,y(b))$ and $(c,y(c))$ 
on the curve $\gamma$, such that $y(b) = y(c)$ and  
$A_{i_p} \leq \val(b) < \val(c)=X_c < A_{i_{p+1}}$.
Note that it turns out that $\val(b) = A_{i_p}$ from the graph $\Gamma$.

\begin{lem}\label{lem:valBC}
It holds that 
\begin{align*}
&\val(b + \alpha_i) 
= 
\begin{cases}
A_i & i=i_1,\ldots,i_{p-1}, \\
A_{i_p} + (n-p)(X_c-A_{i_p})  & i=i_p, \\
A_{i_p} & i=i_{p+1},\ldots,i_n.
\end{cases}
\\
&\val(c + \alpha_i) 
= 
\begin{cases}
A_i & i=i_1,\ldots,i_p, \\
X_c & i=i_{p+1},\ldots,i_n.
\end{cases}
\end{align*}
\end{lem}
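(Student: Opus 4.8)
The plan is to compute the valuations of $b + \alpha_i$ and $c + \alpha_i$ directly from the defining relation \eqref{eq:b-c} together with the location data $\val(b) = A_{i_p}$ and $\val(c) = X_c$, using only elementary properties of the valuation (namely $\val(xy) = \val(x) + \val(y)$, and $\val(x+y) = \min[\val(x),\val(y)]$ when the two valuations differ). First I would dispose of the easy cases: for $i = i_1, \ldots, i_{p-1}$ we have $A_i < A_{i_p} = \val(b)$, so $b + \alpha_i$ has valuation $A_i$; similarly for $c$ and $i = i_1, \ldots, i_{p-1}$ since $A_i < X_c$. For $i = i_{p+1}, \ldots, i_n$ we have $A_i > X_c > A_{i_p}$, hence $\val(c + \alpha_i) = X_c$, and $\val(b + \alpha_i) = A_{i_p}$ since $A_i > A_{i_p}$. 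That already pins down every entry of both lists except $\val(b + \alpha_{i_p})$ and $\val(c + \alpha_{i_p})$.

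For $c + \alpha_{i_p}$: here $\val(\alpha_{i_p}) = A_{i_p} < X_c = \val(c)$, so again the two valuations differ and $\val(c + \alpha_{i_p}) = A_{i_p}$, matching the claimed value $A_{i_p} = A_i$ for $i = i_p$ in the $c$-list. The one genuinely nontrivial value is $\val(b + \alpha_{i_p})$, where $\val(b) = \val(\alpha_{i_p}) = A_{i_p}$ coincide, so the naive rule fails and cancellation may occur. To extract it I would apply $\val$ to the relation \eqref{eq:b-c}. Writing $\prod_{i}(b + \alpha_i)$... more precisely, the relation is $\prod_i (b - \alpha_i) = \prod_i (c - \alpha_i)$; replacing $\alpha_i$ by $-\alpha_i$ throughout (which is harmless for valuations, and is exactly the sign convention already used in the wire ansatz of Lemma \ref{lem:comm-pair}), I take valuations of both sides:
\begin{align*}
\sum_{i} \val(b + \alpha_i) = \sum_i \val(c + \alpha_i).
\end{align*}
On the right-hand side, summing the values just computed gives $\sum_{q < p} A_{i_q} + A_{i_p} + (n - p) X_c = A_{[i_1,\ldots,i_p]} + (n-p)X_c$. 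On the left-hand side, summing the known entries gives $\sum_{q < p} A_{i_q} + \val(b + \alpha_{i_p}) + (n - p) A_{i_p}$. Solving for the unknown yields $\val(b + \alpha_{i_p}) = A_{i_p} + (n-p)(X_c - A_{i_p})$, exactly the asserted formula.

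The main point requiring care is the justification that valuations of the individual factors on each side are \emph{additive} after taking the product, i.e. that there is no issue with the valuation of a product of Puiseux series being the sum of the valuations — this is standard for a valued field, so it is not a real obstacle, but one must also confirm that the hypothesis $y(b) = y(c)$ is being used in exactly the form \eqref{eq:b-c} and that the correspondence between $(X,Y) \in \Gamma$ and $(\val(b), \val(y(b)))$ set up in the paragraph preceding the lemma indeed forces $\val(b) = A_{i_p}$ and $\val(c) = X_c$ as stated (this is the remark "it turns out that $\val(b) = A_{i_p}$ from the graph $\Gamma$"). Once those inputs are granted, the proof is the three-line valuation bookkeeping above, and the only subtlety — the coincident-valuation case $i = i_p$ for $b$ — is resolved precisely by the global constraint coming from \eqref{eq:b-c} rather than by any local cancellation analysis.
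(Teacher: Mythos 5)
Your proposal is correct and follows essentially the same route as the paper: all entries except $\val(b+\alpha_{i_p})$ come from Lemma \ref{lem:non-arch-eq} applied to the ordering $A_{i_1}<\cdots<A_{i_n}$ with $\val(b)=A_{i_p}$ and $\val(c)=X_c$, and the remaining value is extracted by taking valuations of \eqref{eq:b-c} and solving the resulting linear identity, exactly as in the paper. Your side remark about the $\pm\alpha_i$ sign being immaterial for valuations is also consistent with how the paper itself silently passes between $b+\alpha_i$ and $b-\alpha_i$.
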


\begin{proof}
By the assumption on $X_c$ and Lemma~\ref{lem:non-arch-eq}, we obtain 
$\val(c - \alpha_i)$ for all $i$ 
and $\val(b-\alpha_i)$ for $i \neq i_p$ immediately.
As for $\val(b-\alpha_{i_p})$, taking the valuation of \eqref{eq:b-c};  
$$
\sum_{i=1}^n \val(b-\alpha_i) = \sum_{i=1}^n \val(c-\alpha_i),
$$
it follows that
\begin{align*}
\val(b-\alpha_{i_p})
&=
\sum_{i=1}^n \val(c-\alpha_i) - \sum_{i \neq i_p} \val(b-\alpha_i)
\\
&=
A_{[i_1,i_p]} +(n-p)X_c - A_{[i_1,i_{p-1}]}- (n-p)A_{i_p},  
\end{align*}
and we obtain the result.
\end{proof}

The one soliton solution for $\phi(n,k)$ constructed in \cite{GP16} is 
given in terms of chamber variable
\begin{align}\label{eq:1-soliton-tau}
\tau(S) = 1+ \ell B_1^{s_1} B_2^{s_2} \cdots B_n^{s_n};
\quad S=(s_1,\ldots,s_n) \in \Z^n, ~ \ell \in K \setminus \{0\},
\end{align}
where 
\begin{align}
B_j = \frac{b-\alpha_j}{c-\alpha_j}; ~j=1,\ldots,n.
\end{align}
Due to \eqref{eq:b-c}, it holds that $\tau(S+e_{[1,n]}) = \tau(S)$. 
We are interested in the tropicalization of the tau-function.
As a corollary of Lemma \ref{lem:valBC} we obtain the following.

\begin{cor}\label{cor:trop-B}
We have 
$$
\val(B_i) =  
\begin{cases}
0 & i=i_1,\ldots,i_{p-1}, \\
(n-p)(X_c-A_{i_p}) & i=i_p, \\
A_{i_p}-X_c & i=i_{p+1},\ldots,i_n.
\end{cases}
$$
\end{cor}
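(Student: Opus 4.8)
The plan is to deduce Corollary~\ref{cor:trop-B} directly from Lemma~\ref{lem:valBC} by a short computation with valuations, using only the multiplicativity of $\val$ on the field $K$. Recall that $B_j = (b-\alpha_j)/(c-\alpha_j)$, so that $\val(B_j) = \val(b-\alpha_j) - \val(c-\alpha_j)$. The first step is to observe that Lemma~\ref{lem:valBC} is phrased in terms of $\val(b+\alpha_i)$ and $\val(c+\alpha_i)$, whereas $B_j$ involves $b-\alpha_j$; but this is harmless, since replacing $\alpha_i$ by $-\alpha_i$ does not affect the valuation ($\val(-1)=0$), so the formulas of Lemma~\ref{lem:valBC} apply verbatim with $b-\alpha_i$ and $c-\alpha_i$ in place of $b+\alpha_i$ and $c+\alpha_i$. (Alternatively one notes that the $\alpha_i$ in Lemma~\ref{lem:valBC} are exactly those occurring in \eqref{eq:b-c}, and the same sign convention is in force throughout.)

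Next I would simply subtract the two cases in Lemma~\ref{lem:valBC}, split according to the three ranges of the index. For $i = i_1,\ldots,i_{p-1}$ we have $\val(b-\alpha_i) = A_i = \val(c-\alpha_i)$, so $\val(B_i) = 0$. For $i = i_p$ we have $\val(b-\alpha_{i_p}) = A_{i_p} + (n-p)(X_c - A_{i_p})$ while $\val(c-\alpha_{i_p}) = A_{i_p}$ (since $i_p$ falls in the range $i_1,\ldots,i_p$ for $c$), so $\val(B_{i_p}) = (n-p)(X_c - A_{i_p})$. For $i = i_{p+1},\ldots,i_n$ we have $\val(b-\alpha_i) = A_{i_p}$ and $\val(c-\alpha_i) = X_c$, so $\val(B_i) = A_{i_p} - X_c$. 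Collecting these three computations gives exactly the claimed piecewise formula.

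There is essentially no obstacle here: the corollary is an immediate bookkeeping consequence of Lemma~\ref{lem:valBC}, and the only point requiring a moment's care is the matching of the sign conventions ($b\pm\alpha_i$) and checking that $i_p$ lands in the ``$i_1,\ldots,i_p$'' branch (not the ``$i_{p+1},\ldots,i_n$'' branch) of the formula for $\val(c+\alpha_i)$, so that $\val(c-\alpha_{i_p}) = A_{i_p}$ and not $X_c$. Once that is noted, the three line-by-line subtractions finish the proof.
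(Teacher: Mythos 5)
Your proposal is correct and matches the paper's intent: the paper derives Corollary \ref{cor:trop-B} exactly as an immediate consequence of Lemma \ref{lem:valBC}, by subtracting $\val(c-\alpha_i)$ from $\val(b-\alpha_i)$ case by case, which is precisely your computation. Your remark about the sign ($b\pm\alpha_i$) correctly identifies a typo in the lemma's statement (its own proof works with $b-\alpha_i$, $c-\alpha_i$, as forced by \eqref{eq:b-c}), so nothing further is needed.
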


Remark that $\sum_{i=1}^n \val(B_i) = 0$ holds.
Define $W(X_c) := (\val(B_i))_{i=1,\ldots,n} \in \R^n$.
Now it is easy to show the following.

\begin{prop}\label{prop:W-uv}
(i) The one soliton solution for $\Phi(n,k)$ is given by
\begin{align}\label{eq:GP-tau}
T(S)
= 
\min[0, L + S \cdot W(X_c)]; \quad  S \in \Z^n, ~L \in \R.
\end{align}
(ii) We have
\begin{align*}
&W(X_c) \cdot \kappa(\rho^{-1}(u))
=
\begin{cases}
-(n-1)(X_c - A_{i_1}) & \text{for } A_{i_1} < X_c < A_{i_2}, 
\\
0 & \text{for } A_{i_2} < X_c < A_{i_n}. 
\end{cases}
\\[1mm]
&W(X_c) \cdot \kappa(v) 
=
A_{i_p} - X_c  \quad \text{for } A_{i_p} < X_c < A_{i_{p+1}}.
\end{align*}
In particular, the soliton exists
only when $A_{i_1} < X_c < A_{i_2}$, and its velocity is $(n-1)^{-1}$,
independent of $X_c$. 
\end{prop}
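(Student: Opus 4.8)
The plan is to read off $T$ from the valuation of the geometric one-soliton tau-function, and then to reduce the remaining assertions to the trajectory computations of Lemma~\ref{lem:traj-uv} together with Corollary~\ref{cor:trop-B}. For part (i), I would start from the geometric one-soliton $\tau(S)=1+\ell\,B_1^{s_1}\cdots B_n^{s_n}$ of \eqref{eq:1-soliton-tau}, which by \cite{GP16} solves the bilinear system \eqref{eq:chamber1}--\eqref{eq:chamber2}. Applying $\val$ termwise, using $\val(1)=0$, $\val(\ell)=L$, and the values of $\val(B_i)$ from Corollary~\ref{cor:trop-B}, one obtains $\val(\tau(S))=\min[0,\,L+S\cdot W(X_c)]$, which is exactly \eqref{eq:GP-tau}; the identity $\sum_i\val(B_i)=0$ gives the tropical periodicity $T(S+e_{[1,n]})=T(S)$. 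To see that this $T$ solves the tropical bilinear equations \eqref{eq:trop-tau1}--\eqref{eq:trop-tau2}, note that \eqref{eq:chamber1}--\eqref{eq:chamber2} have Hirota bilinear form $c_0\,\tau\tau=c_1\,\tau\tau+c_2\,\tau\tau$ with $\val(c_0)=A_n$ and $\val(c_1),\val(c_2)\in\{A_n,A_p,A_{k+p+1}\}$ by \eqref{eq:A-condition}, so taking $\val$ of both sides and using $\val(x+y)=\min(\val x,\val y)$ — valid for generic $\ell$, since cancellation of leading Puiseux coefficients confines $\ell$ to a proper closed subset — turns each geometric identity into the corresponding tropical one. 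Alternatively, since $W(X_c)$ takes only the three values of Corollary~\ref{cor:trop-B}, one may verify \eqref{eq:trop-tau1}--\eqref{eq:trop-tau2} for \eqref{eq:GP-tau} directly by a short case analysis on which block the relevant indices fall in. By Proposition~\ref{prop:chamber-yz} and formulas \eqref{eq:yz-Tau}, such a $T$ yields a solution of $\Phi(n,k)$, which is (i).

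For part (ii) the work is purely a pairing computation. By Lemma~\ref{lem:traj-uv}, and since $\rho^{n-1}=\rho^{-1}$ on $\hat S_n$, we have $\kappa(\rho^{-1}(u))=\kappa(\tilde u)=-e_n$ and $\kappa(v)=e_{k+1}$; hence $W(X_c)\cdot\kappa(\rho^{-1}(u))=-\val(B_n)=-\val(B_{i_1})$ and $W(X_c)\cdot\kappa(v)=\val(B_{k+1})=\val(B_{i_n})$, using $i_1=n$ and $i_n=k+1$. Inserting Corollary~\ref{cor:trop-B}: for $X_c\in(A_{i_p},A_{i_{p+1}})$ the index $i_n$ always lies in the final block $\{i_{p+1},\dots,i_n\}$, so $\val(B_{i_n})=A_{i_p}-X_c$, which is the second displayed formula; and $\val(B_{i_1})$ equals $(n-1)(X_c-A_{i_1})$ when $p=1$ and $0$ when $p\ge 2$ (since then $i_1\in\{i_1,\dots,i_{p-1}\}$), which is the first. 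Finally, in the evolving frame $T^{i,t}(S)=\min[0,\,L+(W\cdot\kappa(\tilde u))\,i-(W\cdot\kappa(v))\,t+S\cdot W]$, so the two branches of the minimum exchange dominance along the line $(W\cdot\kappa(\tilde u))\,i-(W\cdot\kappa(v))\,t=\mathrm{const}$, which traces the soliton. A genuine localized, rightward-moving soliton requires $W\cdot\kappa(\tilde u)\neq 0$, i.e.\ (by the above) $A_{i_1}<X_c<A_{i_2}$; in that range its velocity is $(W\cdot\kappa(v))/(W\cdot\kappa(\tilde u))=(A_{i_1}-X_c)/(-(n-1)(X_c-A_{i_1}))=1/(n-1)$, independent of $X_c$, consistent with Theorem~\ref{thm:trop-sol}.

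The main (and only genuinely nontrivial) obstacle is the step in (i) asserting that the valuation of the geometric tau-function again satisfies the bilinear system: one must either make the no-cancellation argument precise — controlling when the dominant terms of $c_1\tau\tau+c_2\tau\tau$ fail to cancel, which is where the choice of $\ell$ enters — or, taking the self-contained route, carry out the direct verification of \eqref{eq:trop-tau1}--\eqref{eq:trop-tau2} for \eqref{eq:GP-tau}, organizing it by the blocks $\{i_1,\dots,i_{p-1}\}$, $\{i_p\}$, $\{i_{p+1},\dots,i_n\}$ of Corollary~\ref{cor:trop-B}. Everything else is bookkeeping with the trajectories $\kappa(v),\kappa(\tilde u)$ and the explicit components of $W(X_c)$.
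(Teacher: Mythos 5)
Your proposal is correct and follows essentially the same route as the paper: part (i) is obtained by tropicalizing the geometric one-soliton tau-function \eqref{eq:1-soliton-tau} via Corollary \ref{cor:trop-B}, and part (ii) is the pairing computation with $\kappa(\tilde u)=-e_n$, $\kappa(v)=e_{k+1}$ from Lemma \ref{lem:traj-uv}, with the velocity given by the ratio $W\cdot\kappa(v)/W\cdot\kappa(\rho^{-1}(u))$ exactly as in the paper's proof. Your extra discussion of the no-cancellation (genericity of $\ell$) issue when passing from the geometric bilinear identities to the tropical ones is a point the paper leaves implicit, so it is a welcome refinement rather than a divergence.
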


\begin{proof}
(i) It follows from Corollary \ref{cor:trop-B}.
(ii) Due to Lemma \ref{lem:traj-uv} (ii), 
when $W(X_c) \cdot \kappa(\rho^{-1}(u)) \neq 0$, the velocity 
of the soliton is given by (Cf. \cite[\S 8.1]{GP16})
$$
  \frac{W(X_c) \cdot \kappa(v)}{W(X_c) \cdot \kappa(\rho^{-1}(u))}.
$$
The result is obtained by using Lemma \ref{lem:traj-uv} (i) and Corollary 
\ref{cor:trop-B}.
\end{proof}

By substituting \eqref{eq:GP-tau} into \eqref{eq:yz-Tau}, we obtain  
the soliton solution whose minimal form is 
$(\underbrace{X_c,\ldots,X_c}_{n-1})$ 
with $A_n  < X_c < A_{i_2}$.
Since all of these have velocity $1/(n-1)$, there is no scattering
among solitons. It ends the proof of Theorem \ref{thm:trop-sol}.

\begin{remark}
For any $N \geq 1$, an $N$-soliton solution corresponding to 
$X_n < X_{c_1} \leq X_{c_2} \leq \cdots \leq X_{c_N} < A$
is given by 
$$ 
  T(S) = \min\left[0, \min[ L_i + S \cdot W(X_{c_i}); ~i=1,\ldots,N] \right].
$$
with appropriate $L_i \in \R$.
\end{remark}

\begin{remark}\label{rem:trop-gamma1}
When we set $A := A_{i_2} = \cdots = A_{i_{n-1}}$ as \eqref{eq:Phi-comm},
the tropical curve $\Gamma$ is degenerated and $X_c$ has only two possibilities
$A_n < X_c < A$ or $A < X_c < A_{k+1}$.
The similar calculation of valuation shows 
$$
W(X_c) = 
\begin{cases}
(X_c-A_n)(-1,\ldots,-1,n-1)  & \text{for } A_n < X_c < A,
\\[1mm]
\displaystyle{\frac{X_c-A}{n-2}}\,
(\underbrace{1,\ldots,1}_{k},-(n-2),\underbrace{1,\ldots,1}_{n-k-2},0)  & \text{for } A < X_c < A_{k+1}.
\end{cases}
$$
Thus we see that $W(X_c) \cdot \kappa(\rho^{-1}(u))$ and $W(X_c) \cdot \kappa(v)$ 
have the same expression as Proposition \ref{prop:W-uv} (ii),
and solitons only have volocity $(n-1)^{-1}$.
In the limit $X_c \to A$,
it gives the positive soliton whose minimal form is 
$(\underbrace{1,1,\ldots,1}_{n-1})$ in Theorem \ref{thm:solitons}.
\end{remark}

%%%%%%%%%%%%%%%%%%%%%%%%
\section{Soliton solutions for $\Phi(n,k)$ from tropical tau function}
\label{sec:1soliton}
%%%%%%%%%%%%%%%%%%%%%%%%

We study the soliton solutions for $\Phi(n,k)$ by naively solving
the tropical bilinear equations \eqref{eq:trop-tau1} and \eqref{eq:trop-tau2},
instead of tropicalizing the geometric solutions.
It turns out that there are various solitons besides those presented in 
\S \ref{sec:trop-soliton1}, a reflection of the general phenomenon that tropicalization is not reversible.
We first present the solution for $\Phi(3,k)$ on $\R$,
and next show the solution for $\Phi(n,k)$ on $\Z$ for general $n$.   
Theorem \ref{thm:solitons} finally follows from 
Propositions \ref{prop:3-onesoliton} and \ref{prop:n-onesoliton}.

\subsection{The $n=3$ case}

Let $\tau(i,t)$ be a function of $(i,t) \in \Z^2$ satisfying a relation
\begin{align}\label{eq:n=3tau}
\begin{split}
&(1+\delta) \tau(i,t-1) \,\tau(i+2,t+1)
\\
& \qquad = \tau(i+1,t+1) \, \tau(i+1,t-1) + \delta \,\tau(i,t) \,\tau(i+2,t).
\end{split}
\end{align}

We first demonstrate the case of $\Phi(3,1)$ in detail.
\begin{prop}
A solution of \eqref{eq:n=3tau} gives the solution of $\phi(3,1)$ via 
\begin{align}\label{eq:yz-n=3}
\begin{split}
&y_{i,1}^t = (\alpha_3 - \alpha_1) 
\frac{\tau(i+1,t) \,\tau(i+1,t-1)}
     {\tau(i,t-1) \,\tau(i+2,t)},
~~
y_{i,2}^t = (\alpha_3 - \alpha_2) 
\frac{\tau(i+1,t-1) \,\tau(i+2,t)}
     {\tau(i+1,t) \,\tau(i+2,t-1)},
\\
&z_{i,1}^t = (\alpha_1-\alpha_2) 
\frac{\tau(i+1,t) \,\tau(i,t)}
     {\tau(i+1,t+1) \,\tau(i,t-1)},
~~
z_{i,2}^t = (\alpha_3-\alpha_2)
\frac{\tau(i,t-1) \,\tau(i+1,t)}
     {\tau(i,t) \,\tau(i+1,t-1)}.
\end{split}
\end{align}
\end{prop}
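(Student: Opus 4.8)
The plan is to read off this proposition as the $(n,k)=(3,1)$ instance of Proposition~\ref{prop:chamber-yz}, with \eqref{eq:n=3tau} being its single chamber relation rewritten in two integer variables and \eqref{eq:yz-n=3} being the corresponding specialization of \eqref{eq:yz-tau}. For $(n,k)=(3,1)$ the dynamics uses only $n-2=1$ Yang--Baxter move, so \eqref{eq:chamber2} is vacuous and the lone chamber relation is \eqref{eq:chamber1} with $p=1$, namely
\begin{equation*}
(\alpha_3-\alpha_2)\,\tau(e_2)\,\tau(e_1-e_3)=(\alpha_3-\alpha_1)\,\tau(e_2-e_3)\,\tau(e_1)+(\alpha_1-\alpha_2)\,\tau(0)\,\tau(e_1+e_2-e_3),
\end{equation*}
which holds not just at the origin but at every chamber in the orbit of $\tau$ under the lattice generated by $\kappa(\tilde u)$ and $\kappa(v)$, because those are exactly the locations of the Yang--Baxter moves performed by $\phi(3,1)$ over all positions $i$ and times $t$ in the wiring diagram of Figure~\ref{fig:uv}.

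Next I would set up the reindexing. By Lemma~\ref{lem:traj-uv}(i), $\kappa(\tilde u)=-e_3$ and $\kappa(v)=e_2$, and $\{-e_3,\,e_2,\,e_{[1,3]}\}$ is a $\Z$-basis of $\Z^3$; hence, using the periodicity \eqref{eq:tau-delta} together with $\tau^{i,t}(S)=\tau(S+\kappa(\tilde u)i-\kappa(v)t)$ from \eqref{eq:tau-it}, the chamber tau-function is equivalent data to a single function of two integer variables, via the dictionary $\tau^{i,t}((s_1,s_2,s_3))=\tau(i+s_1-s_3,\,t+s_1-s_2)$, which one checks is well defined modulo $e_{[1,3]}$ and bijective. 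Feeding this dictionary through \eqref{eq:yz-tau} (writing out each of the arguments $e_{k+1}+e_{[1,p]}$, $e_{k+1}-e_n+e_{[1,p]}$, $e_{[1,p]}$, $\dots$ for $n=3$, $k=1$) should reproduce the four ratios in \eqref{eq:yz-n=3}; feeding it through the displayed chamber relation, applied at the general position $\tau^{i,t}$, gives
\begin{equation*}
(\alpha_3-\alpha_2)\,\tau(i,t-1)\,\tau(i+2,t+1)=(\alpha_3-\alpha_1)\,\tau(i+1,t-1)\,\tau(i+1,t+1)+(\alpha_1-\alpha_2)\,\tau(i,t)\,\tau(i+2,t).
\end{equation*}
Dividing by $\alpha_3-\alpha_1$ and using $\alpha_3-\alpha_2=(\alpha_3-\alpha_1)+(\alpha_1-\alpha_2)$ turns this into \eqref{eq:n=3tau} with $\delta=(\alpha_1-\alpha_2)/(\alpha_3-\alpha_1)$, which is positive since $\alpha_2<\alpha_1<\alpha_3$; this is the value of $\delta$ for which the statement holds, and the proposition is then immediate from Proposition~\ref{prop:chamber-yz}.

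The only genuinely delicate point is the label bookkeeping — reducing each $e$-vector argument in \eqref{eq:yz-tau} modulo $e_{[1,3]}$ and pushing it through the dictionary — where I expect a sign slip in the $t$-shift to be the easiest error. As a fully self-contained alternative and a useful check, one can bypass the chamber-variable machinery and substitute \eqref{eq:yz-n=3} directly into the evolution \eqref{eq:phi(31)} of $\phi(3,1)$: the component $y_{i,2}^{t+1}=z_{i,1}^t+y_{i,1}^t$ is precisely \eqref{eq:n=3tau} after clearing denominators (this is the one place the bilinear equation is used), the component $z_{i+1,2}^t=y_{i,2}^t$ is an identity of tau-ratios, and the remaining components $y_{i,1}^{t+1}=z_{i,2}^t y_{i,1}^t/(z_{i,1}^t+y_{i,1}^t)$ and $z_{i+1,1}^t=z_{i,1}^t z_{i,2}^t/(z_{i,1}^t+y_{i,1}^t)$ follow by cancellation once the first two are established.
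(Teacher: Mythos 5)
Your proposal is correct and follows essentially the same route as the paper: specialize the chamber relation \eqref{eq:chamber1} (with $p=1$; \eqref{eq:chamber2} is indeed vacuous here) and the formulas \eqref{eq:yz-tau} of Proposition \ref{prop:chamber-yz} to $(n,k)=(3,1)$, then change coordinates to the basis $\{\kappa(\tilde u),\kappa(v),e_{[1,3]}\}$ and drop the $e_{[1,3]}$-direction by periodicity, with $\delta=(\alpha_1-\alpha_2)/(\alpha_3-\alpha_1)$ — and your dictionary $\tau^{i,t}(S)=\tau(i+s_1-s_3,\,t+s_1-s_2)$ is exactly the paper's reindexing, so the feared sign slip does not occur. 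Your self-contained alternative (substituting \eqref{eq:yz-n=3} directly into \eqref{eq:phi(31)}, with the $y_{i,2}^{t+1}$-component reducing to the bilinear equation and the rest to cancellations) also checks out, but it is a verification the paper leaves implicit rather than a genuinely different argument.
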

\begin{proof}
From \eqref{eq:chamber1} and \eqref{eq:chamber2},  
the bilinear equation for chamber variables $\tau$ of $\phi(3,1)$ is
\begin{align}\label{eq:n=3cham}
\begin{split}
&(\alpha_3 - \alpha_2) \tau(S+e_1-e_3) \, \tau(S+e_2) 
\\
& \quad =
(\alpha_3-\alpha_1 ) \tau(S+e_2-e_3) \, \tau(S+e_1) 
+ (\alpha_1-\alpha_2 )\tau(S) \, \tau(S+e_1+e_2-e_3).
\end{split}
\end{align}
We change the chamber coordinate generated by the unit vectors 
$e_k ~(k=1,2,3)$ to that generated by  
$\kappa(\tilde u) = -e_3$, $\kappa(v)=e_2$ and $e_{[1,3]}$.
By using $e_1=e_{[1,3]}+\kappa(\tilde u)-\kappa(v)$, $e_2 = \kappa(v)$ and
$e_3 = - t(u)$, and transforming $\tau(i,t,j) := \tau(S)$  
when $S = i \cdot \kappa(\tilde u) - t \cdot \kappa(v) + j \cdot e_{[1,3]}$,
we rewrite \eqref{eq:n=3cham} and obtain 
\begin{align*}
&(\alpha_3 - \alpha_2) \tau(i+2,t+1,j+1) \, \tau(i,t-1,j) 
\\
& \quad = (\alpha_3-\alpha_1 )\tau(i+1,t-1,j+1) \, \tau(i+1,t+1,j)  
+ (\alpha_1-\alpha_2 ) \tau(i,t,j) \,\tau(i+2,t,j+1).
\end{align*} 
We set $\delta = (\alpha_1-\alpha_2)/(\alpha_3-\alpha_1)$,
and ignore the third coordinate by taking into account the periodic condition
\eqref{eq:tau-delta}. Then \eqref{eq:n=3tau} is obtained.

By identifying $\tau^{i,t}(i' \cdot \kappa(\tilde u) - t' \cdot \kappa(v) + j' \cdot e_{[1,3]})$ with $\tau(i+i',t+t',j')$ which reduces to $\tau(i+i', t+t')$,
we obtain \eqref{eq:yz-n=3} from  \eqref{eq:yz-tau}.
\end{proof}

For $\phi(3,0)$, instead of \eqref{eq:n=3cham} we have 
\begin{align*}
\begin{split}
&(\alpha_3-\alpha_1 ) \tau(S+e_{[1,2]}-e_3) \, \tau(S+2 e_1) 
\\
& \quad =
(\alpha_3 - \alpha_2) \tau(S+2 e_1-e_3) \, \tau(S+e_{[1,2]}) 
+ (\alpha_2-\alpha_1 )\tau(S+e_1) \, \tau(S+e_{[1,2]}+e_1-e_3).
\end{split}
\end{align*}
In the same manner using $\kappa(v) = e_1$ and 
$\delta = (\alpha_2-\alpha_1)/(\alpha_3-\alpha_2)$,
we again obtain \eqref{eq:n=3tau}.  
Then a solution of \eqref{eq:n=3tau} gives the solution of $\phi(3,0)$ via
\begin{align}\label{eq:yz-n=3b}
\begin{split}
&y_{i,1}^t = (\alpha_3 - \alpha_1) 
\frac{\tau(i,t-2) \,\tau(i+1,t-1)}
     {\tau(i,t-1) \,\tau(i+1,t-2)},
~~
y_{i,2}^t = (\alpha_3 - \alpha_2) 
\frac{\tau(i+1,t-1) \,\tau(i+1,t-2)}
     {\tau(i,t-2) \,\tau(i+2,t-1)},
\\
&z_{i,1}^t = (\alpha_3-\alpha_1) 
\frac{\tau(i,t-1) \,\tau(i+1,t)}
     {\tau(i,t) \,\tau(i+1,t-1)},
~~
z_{i,2}^t = (\alpha_2-\alpha_1)
\frac{\tau(i+1,t-1) \,\tau(i,t-1)}
     {\tau(i+1,t) \,\tau(i,t-2)},
\end{split}
\end{align}
which originate from \eqref{eq:yz-tau}.

In the following we set $\trop(\delta) = 1$, corresponding to 
the condition \eqref{eq:Phi-comm}.
We study the soliton solutions for $\Phi(3,k)$
by solving the tropicalization of \eqref{eq:n=3tau}: 
\begin{align}\label{eq:n=3Tau}
\begin{split}
&T(i,t-1) + T(i+2,t+1)
\\
& \quad 
= \min \left[T(i+1,t+1) + T(i+1,t-1), ~1 + T(i,t) + T(i+2,t)\right].
\end{split}
\end{align}

\begin{prop}\label{prop:n=3-soliton}
The soliton solutions of \eqref{eq:n=3Tau} are given by
\begin{enumerate}
\item one-soliton: 
\begin{align}\label{eq:n=3-onesoliton}
  T(i,t) = 
  \begin{cases}
  \min[0,L+i P -t] & L \in \R, ~P \geq 2,
  \\
  \min \left[0,L+i P - t \frac{P}{2}  \right] & L \in \R, ~0 < P \leq 2.
  \end{cases}
\end{align}

\item two-soliton:
Define $C_j(i,t) := L_j + i P_j - t$ and $C'_j(i,t) := L_j + i P_j - t P_j/2$.
$$
  T(i,t) = 
  \begin{cases}
  \min[0,C_1(i,t),C_2(i,t), C_1(i,t) + C_2(i,t) + Z_{1,2}];
  \\ \hspace{4cm} 
   L_j\in \R, ~P_j \geq 2 \text{ such that } ~P_1 \neq P_2,
  \\
  \min[0,C_1(i,t),C'_2(i,t), C_1(i,t) + C'_2(i,t) + Z'_{1,2}]; 
  \\ \hspace{4cm} 
  L_j\in \R, ~0 < P_2 < 2 \leq P_1, 
  \end{cases}
$$
where 
$$
  Z_{1,2} := 2 \min[P_1,P_2] - 1, 
  \qquad
  Z'_{1,2} := \frac{3}{2} P_2.  
$$
\end{enumerate}
\end{prop}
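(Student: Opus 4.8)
The plan is to verify both parts by direct substitution into the tropical bilinear equation \eqref{eq:n=3Tau}. What makes this feasible is that each candidate $T$ is a minimum of finitely many affine functions of $(i,t)$, so after substitution each side of \eqref{eq:n=3Tau} is again a minimum of affine functions; to prove that two such piecewise-linear functions on $\Z^2$ agree it suffices to check that every affine ``piece'' of one is everywhere $\ge$ the minimum of the other and vice versa, which on each of the finitely many linear regions is a list of elementary linear inequalities.

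For the one-soliton I would write $T(i,t)=\min[0,\phi(i,t)]$ with $\phi$ the affine form in question ($\phi=L+iP-t$ when $P\ge 2$, and $\phi=L+iP-\tfrac{P}{2}t$ when $0<P\le 2$). Substituting, the left side of \eqref{eq:n=3Tau} becomes $\min[0,\phi(i,t-1),\phi(i+2,t+1),\Psi]$ and the right side $\min\big[\,\min[0,\phi(i+1,t+1),\phi(i+1,t-1),\Psi],\ 1+\min[0,\phi(i,t),\phi(i+2,t),\Psi]\,\big]$, where the crucial observation is that for any affine $\phi$ the three ``double'' terms $\phi(i,t-1)+\phi(i+2,t+1)$, $\phi(i+1,t+1)+\phi(i+1,t-1)$ and $\phi(i,t)+\phi(i+2,t)$ all coincide with a single affine form $\Psi$. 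The identity then collapses to comparisons among $0$ and the single phases: when $P\ge 2$ both sides equal $\min[0,\phi(i,t-1),\Psi]$, and when $0<P\le 2$ the first bracket on the right already equals the left side while the second bracket is everywhere at least as large. The only inequalities used are $P\ge 1$ (automatic) and $P\ge 2$ (resp.\ $P\le 2$), so the dichotomy in \eqref{eq:n=3-onesoliton} is dictated by the value $P=2$ at which the two arguments of the outer $\min$ trade dominance, which is the tropical shadow of the dispersion relation obeyed by the geometric one-soliton \eqref{eq:1-soliton-tau}. This step is routine.

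For the two-soliton, the candidate is the tropical Hirota form $T=\min[0,C_1,C_2,C_1+C_2+Z_{1,2}]$ (resp.\ with $C_2,Z_{1,2}$ replaced by $C_2',Z_{1,2}'$), the constant $Z_{1,2}$ playing the role of the logarithm of the interaction coefficient in the geometric two-soliton of \cite{GP16}; one can guess the stated values $Z_{1,2}=2\min[P_1,P_2]-1$ and $Z_{1,2}'=\tfrac{3}{2}P_2$ by tropicalizing that coefficient, after which the task is to confirm they work. Now the ``double'' terms no longer collapse to a single affine form, since the cross terms mix $C_1$ and $C_2$, which have different slopes, so I would substitute into \eqref{eq:n=3Tau} and verify the resulting piecewise-linear identity, organizing the work by the positions of the two soliton fronts. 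Far from their crossing, $C_1+C_2+Z_{1,2}$ together with one of $C_1,C_2$ drop out of all the relevant minima, so after subtracting an affine function of $(i,t)$ and translating --- operations under which \eqref{eq:n=3Tau} is invariant --- $T$ reduces locally to a one-soliton tau-function and the previous computation applies; this confines the genuine work to a bounded region around the crossing, where finitely many linear pieces survive, the correct $Z_{1,2}$ (resp.\ $Z_{1,2}'$) is pinned down by matching the resonant pieces, and the identity is then checked directly.

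The main obstacle is the bookkeeping near the crossing in the two-soliton case, and in particular the mixed regime $0<P_2<2\le P_1$: there the two constituents obey different dispersion relations, the interaction constant $Z_{1,2}'=\tfrac{3}{2}P_2$ is asymmetric in $P_1,P_2$, and one must verify that this single value simultaneously reconciles both dispersion relations in the overlap. (That no case with $0<P_1,P_2<2$ appears is consistent with the Remark after Theorem \ref{thm:trop-sol}: two such solitons share velocity $1/2$ and superpose with no interaction term.) Should the direct analysis grow unwieldy, a fallback is to verify the geometric identity \eqref{eq:n=3tau} for the two-soliton $\tau$ as a Laurent-polynomial identity in the amplitudes $\ell_1,\ell_2$, observe that it is subtraction-free so that no cancellation occurs among monomials of distinct multidegree, and pass to valuations; but since the paper's viewpoint is precisely that tropicalization need not be reversible, this route would require handling possible coincidences of valuations, so I would keep the direct piecewise-linear verification as the primary argument.
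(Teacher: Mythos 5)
Your part (1) is sound and is in substance the same computation as the paper's: the paper substitutes the geometric ansatz $1+\ell p^iq^t$ into \eqref{eq:n=3tau}, tropicalizes the resulting dispersion relation to get \eqref{eq:PQ}, and solves it, which is exactly the comparison of $\min[-Q,2P+Q]$ with $\min[P+Q,P-Q,1,1+2P]$ that your ``collapse of the cross terms to $\Psi$'' argument produces directly at the tropical level (your observation that the cutoff $P\le 2$ is what makes the $1+T+T$ bracket harmless is precisely where the two branches come from). For part (2), however, your primary route differs genuinely from the paper's. The paper does \emph{not} verify the piecewise-linear identity region by region: it runs Hirota's perturbation at the geometric level, extracts the interaction coefficient $z_{1,2}$ from the $\ell_1\ell_2$ order of \eqref{eq:n=3tau}, and then computes $\trop(z_{1,2})$, checking that the dominant terms of the numerator and denominator have the same sign so that no cancellation of leading valuations occurs; since \eqref{eq:n=3tau} is subtraction-free and all coefficients are then positive, tropicalizing the geometric two-soliton automatically yields a solution of \eqref{eq:n=3Tau} with $Z_{1,2}=2\min[P_1,P_2]-1$, resp.\ $Z'_{1,2}=\tfrac32 P_2$. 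In other words, the paper's proof \emph{is} your ``fallback,'' and the valuation-coincidence worry you raise about it is exactly what the explicit dominant-term/sign computation in the paper settles (via Lemma \ref{lem:non-arch-eq}); so that route is not really more fragile than the direct one. Your primary route — direct verification of the tropical identity, using invariance of \eqref{eq:n=3Tau} under adding affine functions to reduce to one-soliton checks away from the crossing and a bounded interaction region near it — is a valid and self-contained alternative that avoids lifting to the geometric equation, but note that as written the decisive step, the case analysis in the interaction region where $Z_{1,2}$ and especially the asymmetric $Z'_{1,2}=\tfrac32 P_2$ are actually pinned down and used, is only outlined; that is the real content of part (2), so to make your version a complete proof you would still have to carry out that finite but nontrivial list of inequalities (or fall back on the paper's tropicalization-with-sign-check argument).
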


\begin{proof}
We compute the soliton solutions using Hirota's method
\cite[\S 1.5]{Hirota-book}.
(1) A one soliton solution for \eqref{eq:n=3tau} is set to be a form as 
$\tau(i,t) = 1 + \ell p^i q^t$
with $\ell, p, q \in \R_{>0}$. By substituting it into \eqref{eq:n=3tau}, we 
obtain an algebraic equation for $p$ and $q$:
$$
(1+\delta)(q^{-1}+p^2 q) = pq + p q^{-1} + \delta(1+p^2).
$$  
Then $P:=\trop(p)$ and $Q:=\trop(q)$ are required to satisfy
\begin{align}\label{eq:PQ}
\min \left[-Q,~ 2 P +Q] = \min[P+Q, ~P-Q,~ 1+\min[0,~2P] \right].
\end{align}
We solve this equation assuming $P > 0$ without loss of generality.
When $Q > 0$, \eqref{eq:PQ} reduces to
$-Q = \min [P-Q,1]$, which has no solution.
When $Q \leq 0$, \eqref{eq:PQ} reduces to
$\min[-Q,2P+Q] = \min[P+Q,1]$.
If $-Q \leq 2P+Q$ and $1 \leq P+Q$, we have $Q=-1$ and $P \geq 2$.
If $-Q \leq 2P+Q$ and $1 > P+Q$, we have $Q=-P/2$ and $0< P < 2$.
If $-Q > 2P-Q$, we have no solution.

(2) By substituting a form of two soliton solution 
$\tau(i,t) = 1 + \ell_1 p_1^i q_1^t + \ell_2 p_2^i q_2^t + z_{1,2} \ell_1 p_1^i q_1^t \ell_2 p_2^i q_2^t$ with $\ell_i, p_i, q_i > 0$ for $i=1,2$
into \eqref{eq:n=3tau}, and taking the order of $\ell_1 \ell_2$, 
we have
$$
  z_{1,2} = \frac{p_1 p_2(q_1^2+q_2^2)+\delta q_1 q_2(p_1^2 + p_2^2) - (1+\delta)(p_1^2 q_1^2 + p_2^2 q_2^2)}
{(1+\delta)(1+p_1^2 q_1^2 p_2^2 q_2^2) - (p_1 p_2(1+\delta p_1 p_2 q_1 q_2)+q_1 q_2(\delta + p_1 p_2 q_1 q_2))}.
$$
Set $\trop(q_i) = Q_i = -1$ and $\trop(p_i)=P_i \geq 2$ for $i=1,2$ such that 
$P_1 \neq P_2$.
In the numerator of $z_{1,2}$, we have
\begin{align*}
&\trop(p_1 p_2(q_1^2+q_2^2)+\delta q_1 q_2(p_1^2 + p_2^2)) = 
\min[P_1+P_2-2,2P_1 -1, 2 P_2-1] = 2 \min[P_1,P_2]-1,
\\
&\trop((1+\delta)(p_1^2 q_1^2 + p_2^2 q_2^2)) =
2 \min[P_1,P_2]-2,
\end{align*}
and in the denominator, we have 
\begin{align*}
&\trop((1+\delta)(1+p_1^2 q_1^2 p_2^2 q_2^2)) = 
\min[0,2 (P_1+P_2)-4] = 0,
\\
&\trop(p_1 p_2(1+\delta p_1 p_2 q_1 q_2)+q_1 q_2(\delta + p_1 p_2 q_1 q_2)) =
\min[P_1+P_2,-1,P_1+P_2-4] = -1.
\end{align*}
Thus the dominant terms of tropicalization 
in the numerator and denominator of $z_{i,2}$
have the same sign, and we obtain 
$$
Z_{1,2} := \trop(z_{1,2}) =  2 \min[P_1, P_2]-2 - (-1) = 2 \min[P_1, P_2]-1.
$$
In the same manner, when $Q_1 = -1$, $P_1 \geq 2$, $Q_2= -P_2/2$ and $0 < P_2 < 2$, we obtain $\trop(z_{1,2}) = 3/2 P_2$.

\end{proof}

By substituting \eqref{eq:n=3-onesoliton} into the tropicalization of 
\eqref{eq:yz-n=3} or \eqref{eq:yz-n=3b}
we obtain the following:

\begin{prop}\label{prop:3-onesoliton}
Assume $L \in \Z$ and $P \in \Z_{\geq 2}$ in \eqref{eq:n=3-onesoliton}.
Then 
we obtain a positive soliton whose minimal form is $(P-1,1)$ for $\Phi(3,1)$, 
and a positive soliton whose minimal form is 
$(1,P-1)$ for $\Phi(3,0)$. 
In both cases the velocity of soliton is $1/P$.
\end{prop}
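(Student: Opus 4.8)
The plan is to substitute the one-soliton tau-function $T(i,t)=\min[0,L+iP-t]$ from Proposition~\ref{prop:n=3-soliton}(1), on the branch $L\in\Z$, $P\in\Z_{\ge2}$, into the tropicalizations of \eqref{eq:yz-n=3} and \eqref{eq:yz-n=3b} and read off the states $\bY_i^t$ directly. First I would record those tropical formulas: with the normalization $A_n=0$, $A_{k+1}>1$ and the remaining $A_i=1$, the valuations of the scalar coefficients $\alpha_3-\alpha_1$, $\alpha_3-\alpha_2$, $\alpha_1-\alpha_2$ occurring in \eqref{eq:yz-n=3} are $0,0,1$ (and likewise for \eqref{eq:yz-n=3b}, with $\alpha_2-\alpha_1$ replacing $\alpha_1-\alpha_2$), so each ratio of $\tau$'s tropicalizes to a signed sum of shifted values of $T$ plus the constant $0$ or $1$. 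Since any solution of the tropical bilinear equation \eqref{eq:n=3Tau} produces a genuine solution of $\Phi(3,k)$ through these formulas (the tropicalization of Proposition~\ref{prop:chamber-yz}), what remains is to analyze this explicit piecewise-linear output and extract the minimal form and the velocity.

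The core step is a piecewise-linear case analysis in the single integer variable $f:=L+iP-t$. Every term entering $Y_{i,j}^t$ and $Z_{i,j}^t$ has the shape $\min[0,f+aP-b]$, so each $Y_{i,j}^t$ is a piecewise-linear function of $f$; I would split the $f$-line at the finitely many breakpoints $f=b-aP$ (whose ordering, for $P\ge2$, is routine, $P=2$ being the degenerate boundary) and compute the value on each interval. For $\Phi(3,1)$ this gives $Y_{i,1}^t\neq0$ exactly for $f\in\{-2P+1,\dots,-2\}$ and $Y_{i,2}^t\neq0$ exactly for $f\in\{-2P,\dots,-(P+1)\}$, all the values being nonnegative, so a state $\bY_i^t$ is non-vacuum (and then automatically bigger than $\bW$) precisely when $f$ lies in the interval of $2P-1$ consecutive integers $\{-2P,\dots,-2\}$; one also checks $Y_{i,j}^t\to0$ as $f\to\pm\infty$ and $\bZ_i^t=\bZ$ as $f\to+\infty$, i.e.\ for $i\gg0$, which is soliton condition (ii). As consecutive lattice sites have $f$-values differing by $P$, this interval meets exactly one or two sites, so the soliton always occupies one or two cells.

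Finally I would pin down the minimal-length configuration and the velocity. The soliton occupies a single cell exactly when the $f$-value $g$ of that cell satisfies $g-P<-2P$ and $g+P>-2$, which forces $g=-(P+1)$; substituting $f=-(P+1)$ into the formulas yields $\bY_i^t=(P-1,1)$, the asserted minimal form for $\Phi(3,1)$. Such a configuration recurs precisely when some cell has $f\equiv-(P+1)\pmod P$, hence once every $P$ time steps, and comparing two consecutive such times shows the distinguished cell index increases by $1$; therefore the velocity is $1/P$. For $\Phi(3,0)$ the same computation with \eqref{eq:yz-n=3b} gives supports $f\in\{-(P+1),\dots,-2\}$ for $Y_{i,1}^t$ and $f\in\{-2P,\dots,-3\}$ for $Y_{i,2}^t$, minimal-length state $\bY_i^t=(1,P-1)$ at $f=-(P+1)$, and again period $P$ and velocity $1/P$; alternatively the $k=0$ case can be derived from the $k=1$ case via the symmetry $\phi(3,0)=\tilde\rho\circ\phi^{-1}(3,1)\circ\tilde\rho$ of Proposition~\ref{prop:phi-symmetry}.

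The only real obstacle is the bookkeeping of this piecewise-linear case split: one must list and order the breakpoints correctly, assemble the several linear pieces of each $Y_{i,j}^t$, and then intersect the supports of $Y_{i,1}^t$ and $Y_{i,2}^t$ to see which single value of $f$ realizes the minimal length — a finite but slightly delicate computation, with no conceptual difficulty beyond it.
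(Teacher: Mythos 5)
Your proposal is correct and takes essentially the same route as the paper: the paper's proof of this proposition is subsumed in that of Proposition \ref{prop:n-onesoliton}(ii), which likewise substitutes the tropical one-soliton tau-function into the tropicalized formulas (specializing to \eqref{eq:yz-n=3}, \eqref{eq:yz-n=3b} for $n=3$) and extracts the minimal form by exactly this kind of piecewise-linear evaluation. Your version just organizes the bookkeeping as a full case analysis in $f=L+iP-t$ (making the one-or-two-cell support, conditions (i)--(ii), and the velocity $1/P$ explicit, and noting the alternative $k=0$ reduction via Proposition \ref{prop:phi-symmetry}), which is the same computation carried out over all slices rather than only at the minimal-length one.
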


The proof is included in that of Proposition \ref{prop:n-onesoliton} (ii).

\begin{remark}
For general $L \in \R$ and $P \in \R_{>0}$, 
a soliton may have the minimal length more than one.
See \S \ref{subsec:phase}. 
\end{remark}

\subsection{General $n$ case}

In the same way as the $n=3$ case, we study soliton solutions 
for $\Phi(n,k)$ for $n>3$. Unfortunately we obtain only one-soliton solutions 
for technical reason.  
For simplicity we study only integral solutions.

We transform the bilinear equations for $\tau^{i,t}$ on the chambers, 
\eqref{eq:chamber1} and \eqref{eq:chamber2}, 
to those for an $(n-2)$-tuple of tau-functions 
$(\tau_p)_{p=0,1,\ldots,n-3}$ on $\Z^2$
using the following rule.
Recall the definition of $\tau^{i,t}$ \eqref{eq:tau-it} and that we have 
labelled the left chamber of $z_{i,k+1}^t$ with $\kappa(\tilde{u})i - \kappa(v) t$.  
Note that in the universal covering of the wiring diagram for $v(k) u$, 
the chamber labels lie in the subset of $\Z^n$
$$ 
\mathcal{C}_k :=
\left\{e_{[1,p]} + i \cdot e_n + t \cdot e_{k+1}+ j \cdot e_{[1,n]}; ~
p \in \{0,1,\ldots,n-1\}, i,t,j \in \Z \right\}.
$$ 
When $k \geq  1$, we set 
\begin{align}\label{eq:tau-trans-k>0}
%%&\tau^{i,t}(S+e_{[s,n-1]}) = 
%\begin{cases}
%\tau_0(i-1,t) & \text{ if $s=1$},
%\\
%\tau_{s-1}(i,t+1) & \text{ if $s=2,\ldots,k+1$},
%\\
%\tau_{s-2}(i,t) & \text{ if $s=k+2,\ldots,n-1$}.
%\end{cases}
\tau^{i,t}(e_{[1,p]}) = 
\begin{cases}
\tau_{0}(i,t) & \text{ if $p=0$,}
\\
\tau_{p}(i+1,t+1) & \text{ if $p=1,\ldots,k$},
\\
\tau_{p-1}(i+1,t) & \text{ if $p=k+1,\ldots,n-2$},
\\
\tau_{0}(i+1,t) & \text{ if $p=n-1$},
\end{cases}
\end{align}
and uniquely extend it to $\mathcal{C}_k$ using
\begin{align}\label{eq:tau-trans}
\tau^{i,t}(-i' e_n - t'e_{k+1}+j e_{[1,n]}) = \tau^{i+i',t+t'}(0)
\end{align}
for any $i',t',j \in \Z$.
For instance, when $n=4$ and $k= 2$, 
it holds that $\tau^{i,t}(e_{k+1} + e_{[1,2]}) = 
\tau^{i,t-1}(e_{[1,2]}) = \tau_2(i+1,t)$.
Remark that the periodic condition for chamber variables \eqref{eq:tau-delta}
is hidden by \eqref{eq:tau-trans}.
When $k=0$, instead of \eqref{eq:tau-trans-k>0} we set 
\begin{align*}
\tau^{i,t}(e_{[1,p]}) = 
\begin{cases}
\tau_0(i,t) & \text{ if $p=0$},
\\
\tau_{0}(i,t-1) & \text{ if $p=1$},
\\
\tau_{p-1}(i+1,t) & \text{ if $p=2,\ldots,n-2$},
\\
\tau_{0}(i+1,t) & \text{ if $p=n-1$},
\end{cases}
\end{align*}
and use \eqref{eq:tau-trans}.

\begin{prop}
Set $\alpha_i = \alpha$ for $i=\{1,2,\ldots,n-1\}\setminus\{k+1\}$, 
and define $\delta := (\alpha - \alpha_{k+1})/(\alpha_n-\alpha)$. 
Via the above introduced transformation,
\eqref{eq:chamber1} and \eqref{eq:chamber2} reduce to
the following equations for the $\tau_p(i,t)$, which are `independent' of $k$:
\begin{align}\label{eq:n-tau1}
\begin{split}
&(1+\delta) \tau_0(i,t-1) \,\tau_1(i+2,t+1)
\\
& \qquad = \tau_1(i+1,t+1) \, \tau_0(i+1,t-1) 
 + \delta \,\tau_0(i,t) \,\tau_1(i+2,t),
\end{split}
\\
\label{eq:n-tau2}
\begin{split}
&(1+\delta) \tau_p(i+1,t-1) \,\tau_{p+1}(i+2,t)
\\
& \qquad = \tau_{p+1}(i+1,t) \, \tau_{p}(i+2,t-1) 
+ \delta \,\tau_p(i+1,t) \,\tau_{p+1}(i+2,t-1); ~p=1,\ldots,n-3,
\end{split}
\end{align}
where we assume $\tau_{n-2}(i,t) = \tau_0(i,t)$.
\end{prop}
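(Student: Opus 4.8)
The plan is to verify the reduction by direct substitution. The equations \eqref{eq:chamber1} and \eqref{eq:chamber2} are the $n-2$ instances of the enriched Yang-Baxter move \eqref{eq:tautau} read off from one block of the wiring diagram in Figure \ref{fig:uv}; since the dynamics repeats these moves at every $(i,t)$, they continue to hold with $\tau$ replaced throughout by $\tau^{i,t}$, for all $i,t\in\Z$. I would first record three elementary translation rules, immediate from \eqref{eq:tau-it}, the trajectories $\kappa(\tilde u)=-e_n$, $\kappa(v)=e_{k+1}$ of Lemma \ref{lem:traj-uv}, and the periodicity \eqref{eq:tau-delta}:
\[
\tau^{i,t}(S+e_{k+1})=\tau^{i,t-1}(S),\qquad \tau^{i,t}(S-e_n)=\tau^{i+1,t}(S),\qquad \tau^{i,t}(S+e_{[1,n]})=\tau^{i,t}(S).
\]
Applying these repeatedly, each of the six chamber arguments occurring in \eqref{eq:chamber1} and \eqref{eq:chamber2} is brought to the form $e_{[1,q]}$ for some $q\in\{0,1,\dots,n-1\}$ (with a shifted superscript $(i,t)$), at which point the defining rule \eqref{eq:tau-trans-k>0} replaces it by a single $\tau_q$ with explicit argument shift.

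Next I would perform the coefficient specialization. Setting $\alpha_i=\alpha$ for $i\in\{1,\dots,n-1\}\setminus\{k+1\}$, one has $\alpha_p=\alpha$ in \eqref{eq:chamber1} since $p\le k$, and $\alpha_{k+p+1}=\alpha$ in \eqref{eq:chamber2} since $k+p+1\in\{k+2,\dots,n-1\}$; dividing each equation by $\alpha_n-\alpha$ converts the three structure constants into $1+\delta$, $1$, $\delta$, where $\delta=(\alpha-\alpha_{k+1})/(\alpha_n-\alpha)$, exactly as in \eqref{eq:n-tau1} and \eqref{eq:n-tau2}.

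Then comes the bookkeeping step. Running the translation rules and \eqref{eq:tau-trans-k>0} through all six factors of each equation, I expect: \eqref{eq:chamber1} with $p=1$ collapses directly to \eqref{eq:n-tau1} (this is the case that produces the exceptional shifts $\tau_0(i,t)$, $\tau_0(i,t-1)$ of the transformation rule); \eqref{eq:chamber1} with $p\in\{2,\dots,k\}$, after the substitution $t\mapsto t-1$ and the reindexing $p\mapsto p-1$, becomes \eqref{eq:n-tau2} for $p=1,\dots,k-1$; and \eqref{eq:chamber2} with $p\in\{1,\dots,n-k-2\}$ becomes \eqref{eq:n-tau2} for $p=k,\dots,n-3$, the top instance $p=n-k-2$ having $e_{[1,k+p+1]}=e_{[1,n-1]}$ and so requiring the convention $\tau_{n-2}=\tau_0$. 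Counting, this gives $1+(k-1)+(n-k-2)=n-2$ equations, matching \eqref{eq:n-tau1} together with \eqref{eq:n-tau2} for $p=1,\dots,n-3$; since none of these final shapes mentions $k$ except through $\delta$, they are ``independent'' of $k$. The case $k=0$ runs the same way: \eqref{eq:chamber1} is now vacuous, one uses the separate $k=0$ transformation rule in place of \eqref{eq:tau-trans-k>0}, and \eqref{eq:chamber2} with $p=1,\dots,n-2$ reproduces \eqref{eq:n-tau1} (from $p=1$, again up to $t\mapsto t-1$) and \eqref{eq:n-tau2} (from $p=2,\dots,n-2$).

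The one real obstacle is this index bookkeeping: choosing the correct branch of \eqref{eq:tau-trans-k>0} for each of the six factors, tracking how the $i$- and $t$-shifts from the three translation rules accumulate, and pinning the three boundary cases ($p=1$, the $\tau_{n-2}=\tau_0$ wraparound, and $k=0$) to the right instances of \eqref{eq:n-tau1}--\eqref{eq:n-tau2}. Conceptually everything is forced once those translation rules are in hand; the work is purely in the accounting.
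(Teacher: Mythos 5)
Your proposal is correct and follows essentially the same route as the paper's proof: translate each chamber argument via $\kappa(\tilde u)=-e_n$, $\kappa(v)=e_{k+1}$, periodicity, and the rule \eqref{eq:tau-trans-k>0} (or its $k=0$ variant), then specialize the coefficients to $1+\delta,\,1,\,\delta$; your matching of \eqref{eq:chamber1} with $p=1$ to \eqref{eq:n-tau1}, of \eqref{eq:chamber1} with $p=2,\ldots,k$ (after the harmless $t$-shift and reindexing) to \eqref{eq:n-tau2} for $p=1,\ldots,k-1$, and of \eqref{eq:chamber2} to \eqref{eq:n-tau2} for $p=k,\ldots,n-3$ with the wraparound $\tau_{n-2}=\tau_0$, as well as the $k=0$ case, is exactly the bookkeeping the paper records in its three displayed families. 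No gaps.
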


\begin{proof}
When $k \geq 1$, from \eqref{eq:chamber1} and \eqref{eq:chamber2} we obtain
\begin{align*}
\begin{split}
&(\alpha_n-\alpha_{k+1}) \tau_0(i,t-1) \,\tau_1(i+2,t+1)
\\
& \qquad = (\alpha_n-\alpha_1)\tau_1(i+1,t+1) \, \tau_0(i+1,t-1) 
 + (\alpha_1-\alpha_{k+1}) \,\tau_0(i,t) \,\tau_1(i+2,t),
\end{split}
\\
\begin{split}
&(\alpha_n-\alpha_{k+1}) \tau_{p-1}(i+1,t) \,\tau_{p+1}(i+2,t+1)
\\
& \qquad = (\alpha_{n}-\alpha_{p})\tau_{p}(i+1,t+1) \, \tau_{p-1}(i+2,t) 
+ (\alpha_{p}-\alpha_{2}) \,\tau_{p-1}(i+1,t+1) \,\tau_{p}(i+2,t); 
\\
& \qquad \qquad p=2,\ldots,k,
\end{split}
\\
\begin{split}
&(\alpha_n-\alpha_{k+1}) \tau_{p-2}(i+1,t-1) \,\tau_{p-1}(i+2,t)
\\
& \qquad = (\alpha_{n}-\alpha_{p})\tau_{p-1}(i+1,t) \, \tau_{p-2}(i+2,t-1) 
+ (\alpha_{p}-\alpha_{k+1}) \,\tau_{p-2}(i+1,t) \,\tau_{p-1}(i+2,t-1); 
\\
& \qquad \qquad p=k+2,\ldots,n-1.
\end{split}
\end{align*}
Using the defined $\delta$, we see that the first equation reduces to 
\eqref{eq:n-tau1}, and the next two equations reduce to 
\eqref{eq:n-tau2}.

When $k=0$, we only have \eqref{eq:chamber2} which turns out to be 
\eqref{eq:n-tau1} when $p=1$, and 
\eqref{eq:n-tau2} when $p=2,3,\ldots,n-2$.

\end{proof}

From Proposition \ref{prop:chamber-yz} we obtain the following:

\begin{cor}
A solution of \eqref{eq:n-tau1} and \eqref{eq:n-tau2} gives 
the solution of $\phi(n,k)$ via the following formulae.
In the case of $k \geq 1$:
\begin{align}
\label{eq:y-tau-1}
&y_{i,1}^t = (\alpha_n - \alpha) 
\frac{\tau_1(i+1,t) \,\tau_0(i+1,t-1)}
     {\tau_0(i,t-1) \,\tau_1(i+2,t)},
\\
\label{eq:y-tau-p1}
&y_{i,p}^t = (\alpha_n - \alpha) 
\frac{\tau_{p}(i+1,t) \,\tau_{p-1}(i+2,t)}
     {\tau_{p-1}(i+1,t) \,\tau_{p}(i+2,t)}; ~ p=2,\ldots,k,
\\
\label{eq:y-tau-k+1}
&y_{i,k+1}^t = (\alpha_n - \alpha) 
\frac{\tau_k(i+1,t-1) \,\tau_k(i+2,t)}
     {\tau_k(i+1,t) \,\tau_k(i+2,t-1)},
\\
\label{eq:y-tau-p2}
&y_{i,p}^t = (\alpha_n - \alpha) 
\frac{\tau_{p-1}(i+1,t-1) \,\tau_{p-2}(i+2,t-1)}
     {\tau_{p-2}(i+1,t-1) \,\tau_{p-1}(i+2,t-1)}; ~ p=k+2,\ldots,n-1,
\\
&z_{i,p}^t = (\alpha-\alpha_{k+1}) 
\frac{\tau_{k+1-p}(i+1,t) \,\tau_{k-p}(i+1,t+1)}
     {\tau_{k+1-p}(i+1,t+1) \,\tau_{k-p}(i+1,t)}; ~p=1,\ldots,k-1,
\\
\label{eq:z-tau-k}
&z_{i,k}^t = (\alpha-\alpha_{k+1})
\frac{\tau_1(i+1,t) \,\tau_0(i,t)}
     {\tau_1(i+1,t+1) \,\tau_0(i,t-1)},
\\
\label{eq:z-tau-k+1}
&z_{i,k+1}^t = (\alpha_n-\alpha_{k+1})
\frac{\tau_{0}(i,t-1) \,\tau_{0}(i+1,t)}
     {\tau_{0}(i,t) \,\tau_{0}(i+1,t-1)}.
\\
\label{eq:z-tau}
&z_{i,k+p}^t = (\alpha-\alpha_{k+1})
\frac{\tau_{n-p}(i+1,t-1) \,\tau_{n-p-1}(i+1,t)}
     {\tau_{n-p}(i+1,t) \,\tau_{n-p-1}(i+1,t-1)}; ~ p=2,\ldots,n-k-1.
\end{align}
In the case of $k = 0$:
\begin{align*}
&y_{i,1}^t = (\alpha_n - \alpha) 
\frac{\tau_0(i,t-2) \,\tau_0(i+1,t-1)}
     {\tau_0(i,t-1) \,\tau_0(i+1,t-2)},
\\
&y_{i,2}^t = (\alpha_n - \alpha) 
\frac{\tau_1(i+1,t-1) \,\tau_0(i+1,t-2)}
     {\tau_0(i,t-2) \,\tau_1(i+2,t-1)},
\end{align*}
and $y_{i,p}^t$ for $p=3,\ldots,n-1$ has the same expression as 
\eqref{eq:y-tau-p2}.
The variable $z_{i,1}^t$ is as \eqref{eq:z-tau-k+1},
$z_{i,p}^t$ for $p=2,\ldots,n-2$ is as \eqref{eq:z-tau}.
The variable $z_{i,n-1}^{t+1}$ has the same expression as \eqref{eq:z-tau-k}. 
\end{cor}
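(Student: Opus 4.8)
The plan is to deduce this from Proposition~\ref{prop:chamber-yz} together with the Proposition that immediately precedes the corollary. First I would observe that the transformation \eqref{eq:tau-trans-k>0}--\eqref{eq:tau-trans} (respectively the $k=0$ relabeling rule together with \eqref{eq:tau-trans}) is reversible: given an $(n-2)$-tuple $(\tau_p)_{p=0,\dots,n-3}$ of functions on $\Z^2$ satisfying \eqref{eq:n-tau1} and \eqref{eq:n-tau2}, the rule \eqref{eq:tau-trans-k>0} assigns a value $\tau^{i,t}(e_{[1,p]})$ for every $p\in\{0,\dots,n-1\}$, and \eqref{eq:tau-trans} then extends this consistently over all of the region $\mathcal{C}_k$ in which the chamber labels live; using the periodicity \eqref{eq:tau-delta} to reinstate the $e_{[1,n]}$-direction that was suppressed, this produces a chamber tau-function $\tau$ on $\Z^n$. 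By the preceding Proposition, this $\tau$ (with $\alpha_i=\alpha$ for $i\ne k+1$) satisfies the chamber bilinear equations \eqref{eq:chamber1}, \eqref{eq:chamber2}, so Proposition~\ref{prop:chamber-yz} applies and \eqref{eq:yz-tau} yields a bona fide solution of $\phi(n,k)$. Hence the content of the corollary reduces to rewriting \eqref{eq:yz-tau} in terms of the $\tau_p$ and checking that it matches \eqref{eq:y-tau-1}--\eqref{eq:z-tau} (resp. the displayed $k=0$ formulas).

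The second step is bookkeeping. For each chamber label $S$ appearing in \eqref{eq:yz-tau} --- labels of the shapes $e_{k+1}+e_{[1,p]}$, $e_{k+1}-e_n+e_{[1,p]}$, $e_{[1,p]}$, $e_{[1,p]}+e_{k+1}$, and so on --- I would first reduce $S$ modulo the sublattice spanned by $\kappa(\tilde u)=-e_n$, $\kappa(v)=e_{k+1}$ and $e_{[1,n]}$ to its canonical representative $e_{[1,p']}$, keeping track of the resulting index shift. Concretely, \eqref{eq:tau-trans} amounts to the two rules $\tau^{i,t}(S+e_{k+1})=\tau^{i,t-1}(S)$ and $\tau^{i,t}(S-e_n)=\tau^{i+1,t}(S)$, so that $\tau^{i,t}(S)$ becomes $\tau_{p'}(i+i',t+t')$ with $(i',t')$ determined by how many copies of $-e_n$ and $e_{k+1}$ were removed, followed by the relevant case of \eqref{eq:tau-trans-k>0}; this is exactly the one-line computation $\tau^{i,t}(e_{k+1}+e_{[1,2]})=\tau^{i,t-1}(e_{[1,2]})=\tau_2(i+1,t)$ displayed before the corollary. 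Carrying this out for $y_{i,p}^t$ with $p=1$, $p=2,\dots,k$, $p=k+1$, $p=k+2,\dots,n-1$, and for $z_{i,p}^t$ with $p=1,\dots,k-1$, $p=k$, $p=k+1$, $p=k+2,\dots,n-1$, while simultaneously specializing the prefactors $(\alpha_n-\alpha_p)$, $(\alpha_{k+1-p}-\alpha_{k+1})$, $(\alpha_{n+1-p}-\alpha_{k+1})$ of \eqref{eq:yz-tau} to $(\alpha_n-\alpha)$, $(\alpha-\alpha_{k+1})$, $(\alpha_n-\alpha_{k+1})$ according to which subscript equals $k+1$ or $n$, reproduces \eqref{eq:y-tau-1}--\eqref{eq:z-tau}. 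The $k=0$ case runs identically, except that \eqref{eq:chamber1} is vacuous so only \eqref{eq:chamber2} contributes, and one uses the $k=0$ relabeling rule in place of \eqref{eq:tau-trans-k>0}; this accounts for the slightly different expressions for $y_{i,1}^t$, $y_{i,2}^t$ and $z_{i,n-1}^{t+1}$.

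I expect the only real difficulty to be organizational rather than conceptual: near the ``seams'' $p=1$, $p=k$, $p=k+1$, $p=k+2$ where \eqref{eq:tau-trans-k>0} switches cases one must keep straight both the index shift $(i',t')$ and the subscript $p'$ each label acquires (including the cyclic wrap-around $\tau_{n-2}=\tau_0$), and an off-by-one slip is easy. To guard against this I would check the general formulas against the already verified $n=3$ cases \eqref{eq:yz-n=3} and \eqref{eq:yz-n=3b} (which are the corollary for $k=1$ and $k=0$) and against a direct $n=4$ computation before regarding the verification as complete. No ingredient beyond Proposition~\ref{prop:chamber-yz} and the preceding Proposition is needed.
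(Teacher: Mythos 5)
Your proposal is correct and follows exactly the route the paper intends: the corollary is stated without separate proof precisely because it is the substitution of the relabeling \eqref{eq:tau-trans-k>0}--\eqref{eq:tau-trans} (equivalently your rules $\tau^{i,t}(S+e_{k+1})=\tau^{i,t-1}(S)$, $\tau^{i,t}(S-e_n)=\tau^{i+1,t}(S)$) into \eqref{eq:yz-tau} of Proposition~\ref{prop:chamber-yz}, combined with the preceding Proposition to know the relabeled tau-functions satisfy \eqref{eq:n-tau1}--\eqref{eq:n-tau2}. One minor point your bookkeeping actually gets right: at $p=k+1$ the prefactor coming from \eqref{eq:yz-tau} is $(\alpha_n-\alpha_{k+1})$, whereas \eqref{eq:y-tau-k+1} prints $(\alpha_n-\alpha)$ — a harmless discrepancy after tropicalization (both tropicalize to $A_n$), consistent with the $n=3$ check \eqref{eq:yz-n=3}.
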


In the following, we set 
%$\delta = (\alpha_s-\alpha_2)/(\alpha_n-\alpha_s)$ for $s=1,3,\ldots,n-1$,
$\trop(\delta) = 1$ corresponding to \eqref{eq:Phi-comm},
and study solutions of the tropicalization of \eqref{eq:n-tau1}
and \eqref{eq:n-tau2}:
\begin{align}
\label{eq:n-ttau1}
\begin{split}
&T_0(i,t-1) + T_1(i+2,t+1)
\\
& \quad 
= \min \left[T_1(i+1,t+1) + T_0(i+1,t-1),  1 + T_0(i,t) + T_1(i+2,t)\right],
\end{split}
\\
\label{eq:n-ttau2}
\begin{split}
&T_p(i+1,t-1) + T_{p+1}(i+2,t)
\\
& \quad 
= \min \left[T_{p+1}(i+1,t) + T_p(i+2,t-1),  1 + T_p(i+1,t) + T_{p+1}(i+2,t-1)\right],
\\
& \qquad \qquad p=1,\ldots,n-3.
\end{split}
\end{align}

\begin{prop}\label{prop:n-onesoliton}
(i) A one-soliton solution for $\Phi(n,k)$ is given by
\begin{align*}
&T_0(i,t) = \min[0, L+iP-t],
\\
&T_p(i,t) = \min[0, L+iP-t+ \sum_{j=p}^{n-3} R_j]; \quad p=1,\ldots,n-3,
\end{align*}
where $L \in \Z$, $P \in \Z_{\geq n-1}$, $R_j \in \Z_{\leq -1}$ 
for $j=1,\ldots,n-3$, and
$P+\sum_{j=1}^{n-3} R_j \geq 2$.
\\
(ii) The minimal form of this soliton is 
$$
(P-1+\sum_{j=1}^{n-3}R_j , -R_1, -R_2 ,\ldots,-R_{k-1},1, -R_k,\ldots,-R_{n-3}).
$$
\end{prop}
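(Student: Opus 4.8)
The plan is to prove (i) by Hirota's bilinear method, exactly as in the $n=3$ case (Proposition \ref{prop:n=3-soliton}): seek a one-soliton of \eqref{eq:n-tau1}--\eqref{eq:n-tau2} with positive parameters, then apply $\trop$; and to deduce (ii), and with it Proposition \ref{prop:3-onesoliton}, by substituting the resulting tau-functions into the formulae of the Corollary preceding this Proposition. Concretely, for (i) I take $\tau_p(i,t) = 1 + \ell_p\, w^i q^t$ for $p = 0,\ldots,n-2$, with the cyclic closure $\ell_{n-2} = \ell_0$ demanded by the convention $\tau_{n-2} = \tau_0$. The order-$\ell^0$ part of every equation is the identity $1+\delta = 1+\delta$; the order-$\ell^1$ part of \eqref{eq:n-tau1} is one relation among $\ell_0/\ell_1$ and $(w,q)$, and that of \eqref{eq:n-tau2} is one relation among $\ell_p/\ell_{p+1}$ and $(w,q)$ for $p = 1,\ldots,n-3$. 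Together with the closure, these $n-2$ relations pin down the $n-3$ ratios $\ell_{p+1}/\ell_p$ and leave a single dispersion relation tying $q$ to $w$.

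\textbf{Tropicalization.} Tropicalizing with $\trop(\delta) = 1$ and writing $P = \trop(w)$, $Q = \trop(q)$, and $R_j = \trop(\ell_j/\ell_{j+1})$ (up to reindexing), a case analysis on the tropicalized dispersion relation, parallel to the one carried out for $n=3$, forces $Q = -1$ and $P \ge n-1$, while the relations for the ratios force $R_j \le -1$ and $P + \sum_{j=1}^{n-3} R_j \ge 2$; then $\trop(\tau_p)$ is precisely the claimed $T_p$. Equivalently one may verify directly that the stated $T_p$ solve \eqref{eq:n-ttau1}--\eqref{eq:n-ttau2}: putting $\ell(i,t) = L + iP - t$, each equation becomes an identity among $\min$'s of affine functions of $(i,t)$, checked region by region according to the signs of $\ell$ and its shifts; in each open region every $\min$ selects a single term and the identity collapses via the telescoping $\sum_{j=p}^{n-3} R_j - \sum_{j=p+1}^{n-3} R_j = R_p$ together with the displayed inequalities, the tied boundaries following by continuity.

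\textbf{The minimal form.} For (ii) I substitute these $T_p$ into the tropicalizations of \eqref{eq:y-tau-1}--\eqref{eq:z-tau} (and of the separate $k=0$ formulae), each of which expresses a state- or carrier-coordinate as a signed sum of four shifted values of the $T_q$. Since consecutive $T_q$ differ only by the constant $R_q$, most terms cancel and each $Y_{i,p}^t$ collapses to a short expression in $P$, the $R_j$, and the sign of one affine function $\ell(i+c,t+c')$; evaluating at the space-time point where the soliton occupies a single lattice site then yields the tuple $(P-1+\sum_j R_j,\, -R_1,\ldots,-R_{k-1},\, 1,\, -R_k,\ldots,-R_{n-3})$, the distinguished entry $1$ in slot $k+1$ coming from \eqref{eq:y-tau-k+1} (or its $k=0$ analogue), whose shape differs from the others. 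Conditions (i)--(ii) of the soliton definition are then immediate: $\bZ_i^t \to \bZ$ as $i \to \infty$ because $T_p \to 0$ there, and $T_p(i+1,t+P) = T_p(i,t)$ gives the constant velocity $1/P$ and minimal length one; specialising $n = 3$ recovers Proposition \ref{prop:3-onesoliton}.

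\textbf{Main obstacle.} I expect the difficulty to be organizational rather than conceptual: the region-by-region bookkeeping in (i) and the cascade of cancellations in (ii) are routine but voluminous, and one must keep straight the two genuinely different index regimes $k = 0$ and $k \ge 1$ inherited from the Corollary, together with the cyclic identification $\tau_{n-2} = \tau_0$; no idea beyond the $n = 3$ argument is required.
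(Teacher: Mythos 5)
Your proposal follows essentially the same route as the paper: the same Hirota one-soliton ansatz (your ratios $\ell_p/\ell_{p+1}$ are the paper's $r_s$), elimination to a single dispersion relation whose tropicalization forces $Q=-1$, $P\ge n-1$, with the ratio relations giving $R_j\le -1$ and $P+\sum_{j}R_j\ge 2$, and then substitution into the tropicalized $y$/$z$ formulas at a suitably chosen phase (the paper takes $t=1$, $i=-1$, $L=-\sum_{j\ge k}R_j$) to read off the minimal form and velocity $1/P$. The one subtlety worth keeping in mind is that geometrically all the ratios $r_s$ must coincide, and it is only the weaker tropicalized relations that permit distinct $R_j$; your direct region-by-region verification of \eqref{eq:n-ttau1}--\eqref{eq:n-ttau2} (matching the paper's remark after the proposition) takes care of this.
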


\begin{proof}
(i) We substitute $\tau_0(i,t) = 1+ \ell c^i q^t$ and 
$\tau_s(i,t) = 1+ \ell c^i q^t \prod_{k=s}^{n-3} r_k$ into 
\eqref{eq:n-tau1} and \eqref{eq:n-tau2}, and obtain 
\begin{align}
\label{eq:one-soliton1}
&(1+\delta)(c^{-1}+c^2 q r) = c q r + c q^{-1} + \delta(1+ c^2 r),
\\
\label{eq:one-soliton2}
&(1+\delta)(q^{-1} r_s+c) = 1+ c q^{-1}r_s + \delta(r_s+ c q^{-1});
\quad s=1,\ldots,n-3,
\end{align}
where $r := \prod_{k=1}^{n-3} r_k$.
By eliminating the $r_s$, we have
\begin{align}\label{eq:f-g}
(1+\delta-c-\delta q)^{n-2} = c q (q+\delta c -(1+\delta) c q)^{n-2}.
\end{align}
Assume $P := \trop(c) > 0$ and set $Q:=\trop(q)$ as in the $n=3$ case. 

When $n$ is odd, set $n_0 := \frac{n-3}{2}$.
Eq. \eqref{eq:f-g} is expanded as
\begin{align}\label{eq:n-odd}
\begin{split}
&\sum_{i=0}^{n_0} \begin{pmatrix} n-2 \\ 2i \end{pmatrix}
(1+\delta)^{n-2-2i}(c+\delta q)^{2i}
+ c q \sum_{i=0}^{n_0} \begin{pmatrix} n-2 \\ 2i+1 \end{pmatrix}
(q+\delta c)^{n-3-2i} (c q(1+\delta))^{2i+1}
\\
&=
\sum_{i=0}^{n_0} \begin{pmatrix} n-2 \\ 2i+1 \end{pmatrix}
(1+\delta)^{n-3-2i}(c+\delta q)^{2i+1}
+ pq \sum_{i=0}^{n_0} \begin{pmatrix} n-2 \\ 2i \end{pmatrix}
(q+\delta c)^{n-2-2i} (c q(1+\delta))^{2i}.
\end{split}
\end{align}
Assume $Q \in \Z_{<0}$. 
It holds that $\trop(c+\delta q) = \min[P,Q+1] = Q+1 \leq 0$
and $\trop(q+\delta c) = \min[Q,1+P] = Q$. 
Thus the tropicalization of \eqref{eq:n-odd} reduces to
$$
\min[(n-3)(Q+1), 2P+2Q+(n-3)Q] 
= \min[(n-2)(Q+1), P+Q+(n-2)Q],
$$
from which we obtain $\min[n-3,2P+2Q]= Q +\min[P+Q,n-2].$
When $P+Q \geq n-2$, it follows that $Q=-1$ and $P \geq n-1$.
When $n_0 < P+Q < n-2$, it holds that $n-3 = P + 2 Q$ which contradicts 
$Q \in \Z_{<0}$.
When $P+Q \leq n_0$, we obtain $P=0$ which is a contradiction.
When $Q \geq 0$, there is no solution since 
the tropicalization of the l.h.s. and the r.h.s of \eqref{eq:n-odd}
are respectively zero and positive.

When $n$ is even, by a similar discussion we see that
the solution is that
$Q = -1$ and $P \in \Z_{\geq n-1}$.

From \eqref{eq:one-soliton2} with $Q = -1$ and $P \geq n-1$, we obtain
$\min[R_s+1,P] = \min[0,P+1+R_s,1+R_s,2+P]$,
which holds for any $R_s \leq -1$.
Further, from \eqref{eq:one-soliton1}
it follows that $P+\sum_{k=1}^{n-3} R_k \geq 2$. 
\\
(ii) Define $\Delta_p := \sum_{i=p}^{n-3} R_i$ for $p=1,\ldots,n-3$.
We show that the minimal form is obtained  
by setting $t=1$, $i=-1$ and $L = -\Delta_k$, for $k \geq 2$.
The case of $k=1$ is similar.

First, we consider $Y_{i,k+1}^{t=1}$ with $L = -\Delta_k$. By tropicalizing 
\eqref{eq:y-tau-k+1} we have 
\begin{align*}
Y_{i,k+1}^{t=1} 
&= T_k(i+1,0) + T_k(i+2,1) - T_k(i+1,1) - T_k(i+2,0)
\\
&= \min[0,(i+1)P] + \min[0,(i+2)P-1] - \min[0,(i+1)P-1] - \min[0,(i+2)P].
\end{align*}
When $i = -1$, all but the third term are zero in the last line of 
the above formula, and we obtain $Y_{-1,k+1}^{1} = 1$.
It is satisfied that $\min[0,*] = *$ for all terms when $i < -1$, 
and that $\min[0,*] = 0$ for all terms when $i > -1$. 
Thus $Y_{i,k+1}^{t=1} = 0$ when $i \neq -1$.

Next we calculate $Y_{i,1}^{1}$. By tropicalizing \eqref{eq:y-tau-1} we have
\begin{align*}
Y_{i,1}^{1} 
&= T_1(i+1,1) + T_0(i+1,0) - T_0(i,0) - T_1(i+2,1)
\\
&= \min[0,L+(i+1)P-1+\Delta_1] + \min[0,L+(i+1)P]
\\ 
& \qquad - \min[0,L+iP] - \min[0,L+(i+2)P-1+\Delta_1].
\end{align*}
When $i=-1$, the first and the third terms in the above formula are nonzero,
and we obtain 
$Y_{-1,1}^{1} = (L-1+\Delta_1) - (L-P) = P + \Delta_1-1$.
When $i \neq -1$, we obtain $Y_{i,1}^1 = 0$ for the same reason as for $Y_{i,k+1}^{1}$.

In the case of $Y_{i,k}^{1}$, from \eqref{eq:y-tau-p1} we obtain
\begin{align*}
Y_{i,k}^{1} 
&= T_k(i+1,1) + T_{k-1}(i+2,1) - T_{k-1}(i+1,1) - T_k(i+2,1)
\\
&= \min[0,(i+1)P -1] + \min[0,L+(i+2)P-1+\Delta_{k-1}] 
\\
& \qquad - \min[0,L+(i+1)P-1+\Delta_{k-1}] - \min[0,(i+2)P-1].
\end{align*}
Using the conditions $L+\Delta_{k-1} = R_{k-1}$, $R_j \leq -1$ and 
$P+\sum_{k=1}^{n-3} R_k \geq 2$, we see that 
$Y_{i,k}^{1} = 0$ when $i \neq -1$ in the same way as $Y_{i,k+1}^1$.
When $i=-1$, we obtain $Y_{-1,k}^{1} = -R_{k-1}$. 
Similarly, we obtain 
$Y_{-1,p}^{1} = -R_{p-1}$ when $p=2,\ldots,k-1$, 
$Y_{-1,p}^{1} = -R_{p-2}$ when $p=k+2,\ldots,n-1$,
and $Y_{i,p}^{1} = 0$ otherwise.
\end{proof}

\begin{remark}
Eq. \eqref{eq:one-soliton2} requires that the $r_s$ should be the same
for all $s=1,\ldots,n-3$,
but the tropicalization of \eqref{eq:one-soliton2} is weaker so that
the $R_s$ can differ. 
\end{remark}

%%%%%%%%%%%%%%%%%%%%%%%%%%%%%%%%%%%%%%%%
\section{Duality with box-ball system}\label{sec:BBS}
%%%%%%%%%%%%%%%%%%%%%%%%%%%%%%%%%%%%%%%%

\subsection{Basics of the $\mathfrak{sl}_n$ box-ball system}\label{subsec:BBS}

The $\mathfrak{sl}_n$ box-ball system (BBS) is a cellular automaton,
defined as a dynamical system of finitely many balls with $n-1$ `colors'
in an infinite number of boxes arranged along a line. 
(See \cite{IKT12} for a review and a list of comprehensive references 
of the BBS.)
The original BBS mentioned in Introduction is the simplest case of $n=2$.
In this paper
we study the case that $n \geq 3$ and each box can contain one ball at most.
By writing $1$ for an empty box and $p$ for a box containing a $p$-ball 
(a ball of color $p$) for $p=2,\ldots,n$, 
we represent a configuration of the system
as a (infinite) word on $\{1,2,\ldots,n \}$.
The evolution of one time step $t \to t + 1$ is given as follows:
Do the following procedure for $p$ from $n$ down to $2$.
\begin{enumerate}
\item[(i)] 
Exchange the leftmost $p$ with its nearest $1$ to the right.
\item[(ii)]
Exchange the leftmost $p$ among the rest of the $p$ with its $1$ to the right.
\item[(iii)]
Repeat (ii) until all of the $p$ are moved exactly once.
\end{enumerate}
The resulting word corresponds to the configuration at time $t + 1$.

It is known that a soliton of the system is 
a nonincreasing sequence of $2,\ldots,n$, 
and that a collection of such sequences gives a multi-soliton solution, 
in a sense of (i)--(iii) in \S \ref{subsec:soliton}.

\begin{example}\label{ex:bbsn=3}
$n=3$
\\
(i) two solitons:
\begin{align*}
t=0: & 1133321111321111111111111111 \cdots
\\
t=1: & 1111113332113211111111111111 \cdots
\\
t=2: & 1111111111332133211111111111 \cdots
\\
t=3: & 1111111111111321133321111111 \cdots
\\
t=4: & 1111111111111113211113332111 \cdots
\end{align*}
(ii) three solitons:
\begin{align*}
t=0: & 113322211113221112111111111111111111111111 \cdots
\\
t=1: & 111111133222113221211111111111111111111111 \cdots
\\
t=2: & 111111111111332113122222111111111111111111 \cdots
\\
t=3: & 111111111111111332311111222221111111111111 \cdots
\\
t=4: & 111111111111111111233311111112222211111111 \cdots
\\
t=5: & 11111111111111111121133311111111122222111 \cdots
\end{align*}

\end{example}

The symmetry of the $\mathfrak{sl}_n$-BBS is known to be described by 
the $\hat{\mathfrak{sl}}_n$-crystal for the symmetric tensor representation 
of $U_q'(\hat{\mathfrak{sl}}_n)$ \cite{FOY,HHIKTT}. 
Now we present the minimum needed prerequisites concerning the $\hat{\mathfrak{sl}}_n$-crystal for the BBS. 
Recall the $\hat{\mathfrak{sl}}_n$-crystal $B_\ell$ corresponding to 
the $\ell$-fold symmetric tensor representation of 
$U_q'(\hat{\mathfrak{sl}}_n)$, and given by
\eqref{eq:cryetalB} as a set.
%$$
%  B_\ell = \{\bx = (x_1,x_2,\ldots,x_n) \in (\Z_{\geq 0})^n;
%  ~ \sum_{i=1}^n x_i = \ell \}.
%$$
Let $R_{m \ell}: ~B_m \otimes B_\ell \stackrel{\sim}{\to} B_\ell \otimes B_m$
be the combinatorial $R$-matrix defined by 
$$
  R_{m \ell}%: ~B_m \otimes B_\ell \stackrel{\sim}{\to} B_\ell \otimes B_m
  ; \quad  \bw \otimes \bx \mapsto \bx' \otimes \bw',
$$
where 
\begin{align}\label{eq:sln-comb-R}
\begin{split}
&x_i' = x_i + K_{i+1} - K_i, \qquad 
w_i' = w_i + K_{i} - K_{i+1},
\\
&K_i := K_i(\bx,\bw) 
= \min_{j=0,1,\ldots,n-1}
  \left[\sum_{p=1}^j w_{i-p} + \sum_{p=j+2}^n x_{i-p} \right].
\end{split}
\end{align}
Here we assume that the subscripts of $w_i$ and $x_i$ are modulo $n$. 

With the notion of a `carrier' which moves balls, 
the combinatorial $R$-matrix describes the above time evolution in the following way.
The configuration space of the BBS is $B_1^{\otimes L}$ 
for some large number $L$, where
$\bw = (w_i)_{i=1,\ldots,n} \in B_1$ denotes an empty box (resp. a box containing a $p$-ball) 
when $w_1=1$ (resp. $w_p=1$) and the other $w_i$ are zero.
A carrier of capacity $\ell$ is an element in $B_\ell$.
Write $\bw_i^t \in B_1$ for a state at the $i$-th component of $B_1^{\otimes L}$,
at time $t$.
We assume that the initial carrier $\bx_0 = (x_{0,i})_{i=1,\ldots,n} \in B_\ell$ has no ball, i.e., $x_{0,1}=\ell$ and the other $x_{0,i}$ are zero,
and that $\bw_i^{t=0}$ for $i \gg 1$ is an empty box.
Then the time evolution is given by applying the combinatorial $R$-matrix as 
\begin{align}\label{eq:evol-BBS}
\begin{split}
&\bw_L^t \otimes \cdots \otimes \bw_2^t \otimes \bw_1^t \otimes \bw_0^t \otimes \bx_0
\\
&\mapsto 
\bw_L^t \otimes \cdots \otimes \bw_2^t \otimes \bw_1^t \otimes R_{1 \ell}(\bw_0^t \otimes \bx_0) 
=
\bw_L^t \otimes \cdots \otimes \bw_2^t \otimes \bw_1^t \otimes \bx_1^t \otimes \bw_0^{t+1} 
\\
&\mapsto  
\bw_L^t \otimes \cdots \otimes \bw_2^t \otimes R_{1 \ell}(\bw_1^t \otimes \bx_1^t) \otimes \bw_0^{t+1} 
= 
\bw_L^t \otimes \cdots \otimes \bw_2^t \otimes \bx_2^t \otimes \bw_1^{t+1} \otimes \bw_0^{t+1} 
\\
&\mapsto 
\qquad \cdots \qquad 
= 
\bx_0 \otimes \bw_L^{t+1} \otimes \cdots \otimes \bw_2^{t+1} \otimes \bw_1^{t+1} \otimes \bw_0^{t+1} 
\end{split}
%
%&\bx_0 \otimes \bw_0^t \otimes \bw_1^t \otimes \bw_2^t \otimes \cdots 
%\\
%&\mapsto 
%R(\bx_0 \otimes \bw_0^t) \otimes \bw_1^t \otimes \bw_2^t \otimes \cdots 
%=
%\bw_0^{t+1} \otimes \bx_1^t \otimes \bw_1^t \otimes \bw_2^t \otimes \cdots 
%\\
%&\mapsto  
%\bw_0^{t+1} \otimes R(\bx_1^t \otimes \bw_1^t) \otimes \bw_2^t \otimes \cdots
%= 
%\bw_0^{t+1} \otimes \bw_1^{t+1} \otimes \bx_2^t \otimes \bw_2^t \otimes \cdots
%\\
%&\mapsto 
%\qquad \cdots \qquad 
%= 
%\bw_1^{t+1} \otimes \bw_2^{t+1} \otimes \bw_3^{t+1} \otimes \cdots 
\end{align}
where we denote 
$R_{1 \ell}(\bw_i^t \otimes \bx_{i}^t) = \bx_{i+1}^{t} \otimes \bw_{i}^{t+1}$ for 
$i \geq 0$ and set $\bx_0^t = \bx_0$, at any time $t$.
Note that $R_{1 \ell}(\bw \otimes \bx_0) = \bx_0 \otimes \bw$ holds 
if $\bw$ is an empty box,  
thus we have $\bx_i^t = \bx_0$ for $i \gg 1$ by the assumption.  
In the limit $\ell \to \infty$, the original $\mathfrak{sl}_n$-BBS 
is obtained.
We remark that in \eqref{eq:evol-BBS}, 
`right and left' is opposite to that in the original description of the BBS
(i)--(iii).   
We also write the action of the $R$-matrix  using a diagram:

\begin{figure}[H]
\unitlength=1.5mm
\begin{picture}(20,20)(0,0)
\put(5,10){\vector(1,0){10}}
\put(10,15){\vector(0,-1){10}}

\put(1,9){$\bw_{i}^{t}$}
\put(16,9){$\bw_{i}^{t+1}$}
\put(9,17){$\bx_i^t$}
\put(9,2){$\bx_{i+1}^t$}
\end{picture}
\caption{BBS by the combinatorial $R$-matrix}
\label{fig:comb-R}
\end{figure}

We will use the following lemma later.

\begin{lem}\label{lem:shift-R}
Let $\sigma \in S_n$ be 
$$
\sigma = 
\begin{pmatrix}
1 & 2 & 3 & \cdots & n \\
1 & n & n-1 & \cdots & 2 
\end{pmatrix},
$$
and let $\rho$ be a map on $\Z^n \otimes \Z^n$
given by 
$(a_1,a_2,\ldots,a_n) \otimes (b_1,b_2,\ldots,b_n)
\mapsto (b_{\sigma(1)},\ldots,b_{\sigma(n)}) \otimes 
(a_{\sigma(1)},\ldots,a_{\sigma(n)})$. 
Then it holds that 
$R_{m \ell} \circ \rho = \rho \circ R_{\ell m}$.
\end{lem}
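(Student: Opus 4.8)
The plan is to prove the identity by a direct computation with the explicit formula \eqref{eq:sln-comb-R}, the key observation being that $\sigma$ is the reflection $\sigma(k)\equiv 2-k\pmod n$ of the cyclic index set; from this one reads off at once the two congruences
\[
\sigma(i-p)\equiv\sigma(i)+p ,\qquad \sigma(i+1)\equiv\sigma(i)-1 \pmod n ,
\]
which govern how $\sigma$ interacts with the cyclic sums defining the numbers $K_i$.

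First I would unravel both composites. For a tensor $\bw\otimes\bx$, write $R_{\ell m}(\bw\otimes\bx)=\bx^{*}\otimes\bw^{*}$ as in \eqref{eq:sln-comb-R}, so that $w^{*}_a=w_a+K_a-K_{a+1}$ and $x^{*}_a=x_a+K_{a+1}-K_a$ with $K_a:=K_a(\bx,\bw)$; then $\rho\bigl(R_{\ell m}(\bw\otimes\bx)\bigr)$ is the pair whose $i$-th entries are $(w^{*}_{\sigma(i)},\,x^{*}_{\sigma(i)})$. On the other hand $\rho(\bw\otimes\bx)=(x_{\sigma(1)},\dots,x_{\sigma(n)})\otimes(w_{\sigma(1)},\dots,w_{\sigma(n)})$, and applying \eqref{eq:sln-comb-R} to it produces the pair whose $i$-th entries are $\bigl(w_{\sigma(i)}+\widetilde K_{i+1}-\widetilde K_i,\ x_{\sigma(i)}+\widetilde K_i-\widetilde K_{i+1}\bigr)$, where
\[
\widetilde K_i:=\min_{0\le j\le n-1}\Bigl[\sum_{p=1}^{j}x_{\sigma(i-p)}+\sum_{p=j+2}^{n}w_{\sigma(i-p)}\Bigr].
\]
Comparing the two pairs entrywise (the leading $w_{\sigma(i)}$ and $x_{\sigma(i)}$ cancel, and the $w$- and $x$-comparisons give the same condition), the lemma reduces to the scalar identity $\widetilde K_{i+1}-\widetilde K_i=K_{\sigma(i)}-K_{\sigma(i)+1}$ for every $i$.

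The heart of the proof is the closed formula $\widetilde K_i=K_{\sigma(i)+1}$. Using $\sigma(i-p)\equiv\sigma(i)+p$ one rewrites $\widetilde K_i=\min_j\bigl[\sum_{p=1}^{j}x_{\sigma(i)+p}+\sum_{p=j+2}^{n}w_{\sigma(i)+p}\bigr]$; I would then perform two index reversals: the substitution $p\mapsto n+1-p$ inside each sum, which (since $\sigma(i)+n+1-p\equiv\nu-p$ with $\nu:=\sigma(i)+1$) turns the $x$- and $w$-sums into sums of $x_{\nu-p}$ and $w_{\nu-p}$, followed by the reindexing $j\mapsto n-1-j$ of the outer minimum. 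This carries $\widetilde K_i$ into $\min_j\bigl[\sum_{p=1}^{j}w_{\nu-p}+\sum_{p=j+2}^{n}x_{\nu-p}\bigr]=K_\nu=K_{\sigma(i)+1}$, as claimed. Substituting this into the scalar identity and using $\sigma(i+1)\equiv\sigma(i)-1$, so that $\widetilde K_{i+1}=K_{\sigma(i+1)+1}=K_{\sigma(i)}$, the two sides coincide, and the lemma follows.

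The step I expect to need the most care is the cyclic bookkeeping in the two reversals: one must check that the empty-range conventions of $\sum_{p=1}^{j}$ and $\sum_{p=j+2}^{n}$ are respected under $p\mapsto n+1-p$ and $j\mapsto n-1-j$, and that the identifications $x_{a+n}=x_a$, $w_{a+n}=w_a$ are applied consistently; it becomes transparent once one views the two sums jointly as a single cyclic ``cut'' and tracks the one index that is omitted. (Alternatively, $\rho$ realizes on the tensor product the order-two automorphism $k\mapsto -k$ of the affine Dynkin diagram of $\widehat{\mathfrak{sl}}_n$, so the statement also follows from the uniqueness of the combinatorial $R$-matrix as a crystal isomorphism $B_\ell\otimes B_m\to B_m\otimes B_\ell$; the computation above is however self-contained.)
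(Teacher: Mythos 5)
Your proof is correct: the paper omits the argument for Lemma \ref{lem:shift-R} (declaring it easy), and your direct computation supplies exactly the intended verification, with the reduction to the scalar identity $\widetilde K_{i+1}-\widetilde K_i=K_{\sigma(i)}-K_{\sigma(i)+1}$ and the key closed formula $\widetilde K_i=K_{\sigma(i)+1}$ (via $\sigma(i-p)\equiv\sigma(i)+p$, the reversal $p\mapsto n+1-p$, and the reindexing $j\mapsto n-1-j$) all checking out, including the empty-sum conventions at $j=0$ and $j=n-1$. Nothing further is needed.
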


We omit the proof, as it is easy. 
Note that $\rho$ induces a map 
$B_m \otimes B_\ell \to B_\ell \otimes B_m$ for any $m, \ell$,
and that $\rho \circ \rho$ is an identity.

\subsection{Observation}

Our claim is that the positive soliton solutions of $\Phi(n,k)$ is `dual'
to those of the BBS. 
Precisely, the dynamics of carriers (resp. states) in $\Phi(n,k)$ for positive solitons 
coincides with that of states (resp. carriers) of the $\mathfrak{sl}_n$-BBS.

\begin{conjecture}\label{conj:Z-fin}
When we have only the positive solitons in $\Phi(n,k)$,
the carriers $\bZ_i^t$ take values in a finite set 
$$
M := \left\{ 
\begin{matrix}
m_p:=
(\underbrace{1, \ldots, 1}_{p-1},0,\underbrace{1, \ldots, 1}_{n-p-1}); 
~p=1,\ldots,n-1, 
\\
m_n := (1,1,\ldots,1) 
\end{matrix}
\right\} \subset \{0,1\}^{n-1}. 
$$ 
\end{conjecture}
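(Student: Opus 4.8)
The plan is to prove the statement by induction, using that one time step of $\Phi(n,k)$ carries the carrier through the states from left to right: for fixed $t$ the carrier $\bZ_{i+1}^t$ is obtained from $\bZ_i^t$ by a single application of the tropical transformation $(\bZ_i^t,\bY_i^t)\mapsto(\bY_i^{t+1},\bZ_{i+1}^t)$, starting from $\bZ_0^t=\bZ$. The base case $\bZ=m_{k+1}\in M$ holds because $(\bW,\bZ)$ is the tropical commuting pair of Lemma~\ref{lem:trop-pair} and, in the normalization \eqref{eq:Phi-comm}, $\bZ$ is exactly $m_{k+1}$. So everything reduces to the \emph{local statement}: if $\bZ\in M$ and $\bY$ is a state that can occur, in a configuration built from positive solitons, immediately to the right of a carrier equal to $\bZ$, then the output carrier $\bZ'$ of $(\bZ,\bY)\mapsto(\bY',\bZ')$ again lies in $M$. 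This has to be threaded together with a description of which states $\bY$ occur, so in practice one runs a simultaneous induction on $t$: maintain that every state $\bY_i^t$ lies in some admissible class $\mathcal S$, that sweeping the carrier from $m_{k+1}$ through such states stays in $M$, and that the resulting new states again lie in $\mathcal S$.

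The difficulty is exactly in pinning down $\mathcal S$: the set of states that occur is infinite (solitons may be arbitrarily large), and $\bZ'\in M$ is \emph{not} automatic from $\bZ\in M$ and $\bY\ge\bW$ — already for $n=3$ the evolution \eqref{eq:tropevol-3} sends the carrier $(1,1)\in M$ together with a state whose first coordinate vanishes to a new carrier whose first entry equals $2$, hence not in $M$. So the local statement really encodes a compatibility condition between a carrier in $M$ and the state to its right, and the content of the proof is that this condition propagates. The natural way to make it precise is the $R$-matrix picture: by Proposition~\ref{prop:phi-formula} the map $(\bz_i,\by_i)\mapsto(\by_i',\bz_{i+1})$ is the geometric $R$-matrix $\bar{\mathcal{R}}$, whose tropicalization is the combinatorial $R$-matrix on a product of symmetric $\mathfrak{sl}_n$-crystals, and after the relabeling of Lemma~\ref{lem:shift-R} a step of $\Phi(n,k)$ on a positive configuration should match a single application of $R_{1\ell}$ (equivalently $R_{\ell 1}$) for $\ell$ large enough to hold the solitons present. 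Under such an identification the carrier corresponds to a box $\bw\in B_1=\{e_1,\dots,e_n\}$, a set of exactly $n=|M|$ elements carried bijectively onto $M$, the state corresponds to a BBS carrier $\bx\in B_\ell$, and the compatibility condition becomes just ``$\bw\otimes\bx$ is an element of $B_1\otimes B_\ell$''; finiteness of $\{\bZ_i^t\}$ is then immediate.

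The crux — and the reason this is only a conjecture — is that setting up this correspondence in full generality is essentially the duality of Conjecture~\ref{thm:BBS}, known at present only for $n=3,4$, so the real challenge is to \emph{decouple} the finiteness statement from the full duality. In parallel I would pursue a self-contained route: prove by simultaneous induction on $(t,i)$ that in a positive configuration every state $\bY_i^t$ has nonnegative entries, has its $(k+1)$-st entry in $\{0,1\}$, and lies in an explicit polytope determined by the initial solitons, and that whenever $\bZ_i^t=m_p$ the pair $(m_p,\bY_i^t)$ satisfies the finite list of linear inequalities forced by $\bZ_{i+1}^t\in M$; here the symmetry $\phi(n,n-2-k)=\tilde\rho\circ\phi^{-1}(n,k)\circ\tilde\rho$ of Proposition~\ref{prop:phi-symmetry} halves the case analysis in $k$, and the one-soliton tau-functions of Proposition~\ref{prop:n-onesoliton} supply the polytope on a single soliton. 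I expect the linchpin to be the claim that the $(k+1)$-st entry of every state stays in $\{0,1\}$: this is what prevents the hole of $m_p$ from escaping $M$, and proving it seems to require the soliton decomposition (Theorem~\ref{thm:solitons} and the multi-soliton conjecture) as input rather than a purely local argument.
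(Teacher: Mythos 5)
Your proposal is a strategy outline rather than a proof, and for the general statement that is unavoidable (the paper itself leaves Conjecture \ref{conj:Z-fin} open except for $n=3,4$); but even in the cases the paper does settle, the essential content is missing from your plan. You correctly identify the right shape of the argument --- a left-to-right carrier sweep starting from $\bZ=m_{k+1}$, a class $\mathcal S$ of admissible states/configurations closed under one time step, and the use of Proposition \ref{prop:phi-phi-R} to halve the case analysis in $k$ --- and your $n=3$ counterexample showing that $\bZ\in M$, $\bY\ge\bW$ alone does not force $\bZ'\in M$ is exactly the right warning. But the entire technical content of the paper's proof for $n=3,4$ is the explicit construction of $\mathcal S$: the finite lists of carrier-stable sequences in Lemma \ref{lem:n=3-seq} (types (a)--(d) for $\Phi(3,1)$), Lemma \ref{lem:n=4-seq} and Lemma \ref{lem:n=41-seq} (types (a1)--(f)), together with the verification via the diagrams of Figures \ref{fig:c-stable-n=3}, \ref{fig:c-stable-n=4} that a configuration built from these sequences evolves into another such configuration, and the observation that in all local diagrams that occur (Figures \ref{fig:conf-n=3a}, \ref{fig:conf-n=4a}) the carrier values are precisely elements of $M$. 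Your proposal never produces such a class, even for $n=3$, and instead defers it ("the content of the proof is that this condition propagates", "I expect the linchpin to be\ldots"), so no instance of the conjecture is actually established.

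A second, more structural gap is the circularity in your $R$-matrix route. You suggest that once a step of $\Phi(n,k)$ on positive configurations is matched with $\tilde R_{1\ell}$, the carrier "corresponds to a box $\bw\in B_1$" and finiteness of $\{\bZ_i^t\}$ is "immediate". But the identification $\beta_k$ is defined only on $M$, and the paper's logic runs in the opposite direction: one first proves (by the stable-sequence analysis) that all configurations occurring in positive-soliton propagation have the form $m_p\otimes\bw$ with $m_p\in M$, and only then checks the intertwining relation \eqref{eq:phiR-nk} on those finitely many local diagrams to obtain the BBS duality of Conjecture \ref{thm:BBS}. So finiteness of the carrier set is an input to the duality, not a consequence of it; you do acknowledge this tension ("decouple the finiteness statement from the full duality"), but your proposal offers no mechanism for the decoupling beyond the unproven claim that the $(k+1)$-st entry of every state stays in $\{0,1\}$, which itself would have to be extracted from a multi-soliton analysis that is exactly what is missing. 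In short: right skeleton, correct diagnosis of where the difficulty lies, but the admissible-class lemmas that constitute the paper's actual proof for $n=3,4$ --- and any substitute for them for general $n$ --- are absent.
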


Define a map $\beta_k : M \to \tilde{B}_1:=\{(x_2,x_3,\ldots,x_n) \in (\Z_{\geq 0})^{n-1}; ~\sum_{i=2}^n x_i \leq 1 \}$ by 
\begin{align*}
m_p \mapsto m_n - m_{p-1-k} = 
\begin{cases}
(\underbrace{0, \ldots, 0}_{p-2-k},1,\underbrace{0, \ldots, 0}_{n-p+k})
& p > k,
\\
(\underbrace{0, \ldots, 0}_{n+p-2-k},1,\underbrace{0, \ldots, 0}_{-p+k})
& p \leq k,
\end{cases}
\end{align*}
for $p=1,\ldots,n$ and $k=0,1,\ldots,n-2$,
where the subscript $i$ of $m_i$ is modulo $n$.

\begin{conjecture}\label{thm:BBS}
Assume that the initial configuration $(\bY_i^0)_i$ of $\Phi(n,k)$ 
includes only positive solitons.
By interchanging space and time coordinates
the rules of state and carrier are swapped
i.e. $\bY_i^t$ and $\bZ_i^t$ are respectively regarded as a carrier and a state at time $i$ of space $t$.  
Then, via the map $\beta_k$, the dynamics of the $\bZ_i^t$ is identified with 
that of the $\mathfrak{sl}_n$-BBS
where $(0,\ldots,0) \in \tilde{B}_1$ denotes $1$ (an empty box) and 
$(\underbrace{0,\ldots,0}_{p-2},1,\underbrace{0,\ldots,0}_{n-p}) \in \tilde{B}_1$ 
denotes $p$ (a box containing a $p$-ball) for $p=2,\ldots,n$.
If $(\bY_i^t)_i$ includes a soliton of the minimal form 
$(b_1,b_2,\ldots,b_{n-1})$ with $b_{k+1}=1$, 
then the corresponding configuration of BBS includes a soliton as
$$
\underbrace{n \ldots n}_{b_1} \underbrace{n-1 \ldots n-1}_{b_2} 
\underbrace{n-2 \ldots n-2}_{b_3}
\ldots \underbrace{2 \ldots 2}_{b_{n-1}},
$$ 
whose velocity is $\sum_{p=1}^{n-1} b_p$. 
\end{conjecture}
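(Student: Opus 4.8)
The plan is to realise both dynamics through the combinatorial $R$-matrix and to match them, vertex by vertex, on a common two-dimensional array of $R$-matrix crossings. By Proposition~\ref{prop:phi-formula}, one step $(\bz_i,\by_i)\mapsto(\by_i',\bz_{i+1})$ of $\phi(n,k)$ is the geometric $R$-matrix $\bar{\mathcal R}$ on $B_{\bar m}\otimes B_\ell$ after padding the $\bz$- and $\by$-vectors with a zero; tropicalising, the corresponding step of $\Phi(n,k)$ is the combinatorial $R$-matrix of $B_{\bar m}\otimes B_\ell$ written in the min-plus algebra. Moreover the recursion \eqref{eq:def-system} has exactly the ``carrier-sweep'' shape of the BBS evolution \eqref{eq:evol-BBS}, with $\bZ$ in the role of the carrier and $\bY$ in the role of the state, the swept index being $i$. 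The content of the statement is that the orthogonal reading of the same array --- sweeping $t$ rather than $i$, so that $\bZ_i^t$ becomes a state and $\bY_i^t$ a carrier, as formulated above --- turns the local rule, after transport by $\beta_k$, into the $\mathfrak{sl}_n$-BBS rule; the involution $\rho$ of Lemma~\ref{lem:shift-R}, which intertwines $R_{m\ell}$ and $R_{\ell m}$, is exactly what governs this exchange of the two sweep directions together with the state/carrier swap and the reversal $\iota$.

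\textbf{Steps.} First I would prove Conjecture~\ref{conj:Z-fin}: on a positive-soliton input every carrier $\bZ_i^t$ lies in the finite set $M$, and the states $\bY_i^t$ stay non-negative and equal $\bW$ outside a finite window (so in particular positive-soliton inputs remain positive-soliton inputs, which is condition (iii) of \S\ref{subsec:soliton}). This is an invariance statement, which I would prove by induction on $t$ and, within a time step, on $i$, reading the needed inequalities off \eqref{eq:phi-formula}, \eqref{eq:yz-Tau}, the explicit tau-functions of Proposition~\ref{prop:n-onesoliton}, and the tropical bilinear equations \eqref{eq:n-ttau1}--\eqref{eq:n-ttau2}, using $b_{k+1}=1$ and \eqref{eq:A-condition}. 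Next, with $\bZ\in M$ fixed, I would write the local update $(\bZ,\bY)\mapsto(\bY',\bZ')$ of $\Phi(n,k)$ explicitly, set $\beta_k(\bZ)\in\tilde B_1\cong B_1$, identify $\bY$ and $\bY'$ with elements of $B_\ell$ by the zero-padding of Proposition~\ref{prop:phi-formula}, and verify by direct computation that the outcome is $R_{1\ell}$ of \eqref{eq:sln-comb-R}, up to the involution $\rho$ of Lemma~\ref{lem:shift-R}. (The passage from $B_{\bar m}$ with $m=n-2$ to $B_1$ is precisely the crystal isomorphism encoded in $\beta_k$; for $n=3,4$ this is an easy finite check.) Assembling these local identifications over the array, the evolution of $\Phi(n,k)$ on positive solitons is carried to the evolution \eqref{eq:evol-BBS} of the $\mathfrak{sl}_n$-BBS, a single loaded carrier $X=(b_1,\dots,b_{n-1})$ in the input injecting one BBS soliton. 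Finally, the soliton data: a one-soliton of minimal form $(b_1,\dots,b_{n-1})$ has velocity $(\sum_p b_p)^{-1}$ by Theorem~\ref{thm:solitons}; interchanging $i$ and $t$ inverts the velocity to $\sum_p b_p$, and tracing the window of non-vacuum boxes through $\beta_k$ --- which shifts the block labels by $k$ and hence puts a block of length $b_{k+1}=1$ in position $n-(k+1)$ --- produces exactly the word consisting of $b_1$ copies of $n$, then $b_2$ copies of $n-1$, and so on down to $b_{n-1}$ copies of $2$, whose BBS velocity equals its length $\sum_p b_p$.

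\textbf{Main obstacle.} The crux is the invariance step under the hypothesis ``only positive solitons'': one must control the interaction of several positive solitons of arbitrary internal structure and show that the carriers never escape $M$. This is exactly where bounding $n$ helps --- for $n=3$ and $n=4$ the set $M$ is tiny and the case analysis of the tropical bilinear equations is finite and can be carried out explicitly, which is what \S\ref{subsec:n=3} and \S\ref{subsec:n=4} do, whereas for general $n$ this control is missing and the statement stays conjectural. A secondary, purely bookkeeping obstacle is aligning the three index conventions --- the $(n-1)$-tuples of $\Phi(n,k)$, the $n$-tuples of $B_\ell$ and $B_{\bar m}$, and the reversal and shift built into $\rho$ and $\beta_k$ --- so that the local identification with $R_{1\ell}$ is literally an equality; once the indices match, this is routine.
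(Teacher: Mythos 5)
Your global architecture does mirror the paper's strategy in \S\ref{subsec:strategy}: control the configurations arising from positive solitons, then verify the local intertwining relation \eqref{eq:phiR-nk} between $\Phi(n,k)$ and $\tilde R_{1\infty}$ via $\beta_k$ and $\iota$, with a finite case analysis for $n=3,4$ (the paper uses $\tilde\rho$ and Lemma \ref{lem:rho-R-beta} only through Proposition \ref{prop:phi-phi-R}, to transfer the result from $k$ to $n-2-k$, not to implement the space--time swap as your strategy paragraph suggests). But your second step, as stated, would fail: for $\bZ\in M$ fixed the local update is \emph{not} identified with $R_{1\ell}$ for arbitrary nonnegative states, with or without the involution $\rho$ of Lemma \ref{lem:shift-R}. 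Concretely, in $\Phi(3,1)$ take $\bZ=(1,0)\in M$ and $\bY=(2,0)$: by \eqref{eq:tropevol-3} the output is $\bY'=(1,1)$ and $\bZ'=(0,0)\notin M$, whereas on the BBS side the empty box $\beta_1((1,0))=(0,0)$ meeting the carrier $\iota(\bY)=(0,2)$ under $\tilde R_{1\infty}$ (computed from \eqref{eq:sln-comb-R}) receives a $3$-ball; so \eqref{eq:phiR-nk} fails there, and $\beta_k(\bZ')$ is not even defined. The identity holds only on the restricted list of local configurations that actually occur in positive-soliton propagation, so the R-matrix check cannot be decoupled from the invariance analysis: what must be pinned down is the set of admissible pairs $(\bZ,\bY)$, not merely the carriers.

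This is also where your route to Conjecture \ref{conj:Z-fin} is too weak. ``Carriers stay in $M$'' is not by itself a usable induction hypothesis, and the inequalities you propose to read off Proposition \ref{prop:n-onesoliton} and \eqref{eq:n-ttau1}--\eqref{eq:n-ttau2} only cover one-solitons; the paper has no multi-soliton tau-functions for $n\ge 4$ (Section \ref{sec:1soliton} produces one-solitons only), so the tau-function route gives no control over soliton interactions, which are exactly the hard case. The paper's key device, missing from your plan, is a strengthened invariant: the explicit classification of \emph{stable sequences} of states (Lemmas \ref{lem:n=3-seq}, \ref{lem:n=4-seq}, \ref{lem:n=41-seq}), together with the proof that configurations built from the vacuum and these finitely many patterns evolve into configurations of the same type. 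This simultaneously yields Conjecture \ref{conj:Z-fin} and the complete list of local vertex configurations (Figures \ref{fig:conf-n=3a}, \ref{fig:conf-n=4a}) on which \eqref{eq:phiR-nk} is then verified against \eqref{eq:sln-comb-R}, and it identifies which BBS solitons occur. Your final bookkeeping of the soliton word and of the velocity $\sum_p b_p$ under the space--time interchange agrees with the paper.
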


\begin{example}
We show $(\bZ_{i}^t)_i$ for each $t$,
corresponding Example \ref{ex:one-solitonsY} and \ref{ex:n=3Y}.
Non-initial states are coloured in red.
\\
$\Phi(3,1)$ (Example \ref{ex:one-solitonsY} (i)):
\begin{align*}
t = 0: & (10)(10)({\color{red}01})(10)(10)(10)(10)(10)(10)(10)(10)(10)
\\
t = 1: & (10)(10)({\color{red}01})(10)(10)(10)(10)(10)(10)(10)(10)(10)
\\
t = 2: & (10)(10)({\color{red}01})(10)(10)(10)(10)(10)(10)(10)(10)(10)
\\
t = 3: & (10)(10)({\color{red}11})(10)(10)(10)(10)(10)(10)(10)(10)(10)
\\
t = 4: & (10)(10)(10)({\color{red}01})(10)(10)(10)(10)(10)(10)(10)(10)
\\
t = 5: & (10)(10)(10)({\color{red}01})(10)(10)(10)(10)(10)(10)(10)(10)
\\
t = 6: & (10)(10)(10)({\color{red}01})(10)(10)(10)(10)(10)(10)(10)(10)
\\
t = 7: & (10)(10)(10)({\color{red}11})(10)(10)(10)(10)(10)(10)(10)(10)
\end{align*}
$\Phi(4,1)$ (Example \ref{ex:one-solitonsY} (ii)):
\begin{align*}
t = 0: & (101)(101)({\color{red}011})(101)(101) (101) (101) (101) (101)(101)
\\
t = 1: & (101)(101)({\color{red}011})(101)(101) (101) (101) (101) (101)(101)
\\
t = 2: & (101)(101)({\color{red}011})(101)(101) (101) (101) (101) (101)(101)
\\
t = 3: & (101)(101)({\color{red}111})(101) (101) (101) (101) (101) (101) (101) 
\\
t = 4: & (101)(101)({\color{red}110})(101) (101) (101) (101) (101) (101) (101)
\\
t = 5: & (101)(101)({\color{red}110})(101) (101) (101) (101) (101) (101) (101) 
\\
t = 6: & (101) (101)(101) ({\color{red}011}) (101) (101) (101) (101) (101) (101)\\
t = 7: & (101) (101)(101) ({\color{red}011}) (101) (101) (101) (101) (101) (101
\\
t = 8: & (101) (101)(101) ({\color{red}011}) (101) (101) (101) (101) (101) (101 \\
t = 9: & (101) (101)(101)({\color{red}111})(101) (101) (101) (101) (101) (101)
\\
t = 10: & (101) (101)(101)({\color{red}110})(101) (101) (101) (101) (101) (101)
%\\
%t = : & (101) (101)(101)({\color{red}110})(101) (101) (101) (101) (101) (101)
\end{align*}
$\Phi(3,1)$ (Example \ref{ex:n=3Y} (i)):
\begin{align*}
t = 0: & (10)(10)({\color{red}01})(10)({\color{red}01})(10)(10)(10)(10)(10)(10)(10)
\\
t = 1: & (10)(10)({\color{red}11})(10)({\color{red}01})(10)(10)(10)(10)(10)(10)(10)
\\
t = 2: & (10)(10)(10)({\color{red}01})({\color{red}11})(10)(10)(10)(10)(10)(10)(10)
\\
t = 3: & (10)(10)(10)({\color{red}11})(10)({\color{red}01})(10)(10)(10)(10)(10)(10)
\\
t = 4: & (10)(10)(10)(10)({\color{red}01})({\color{red}11})(10)(10)(10)(10)(10)(10)
\\
t = 5: & (10)(10)(10)(10)({\color{red}01})(10)({\color{red}01})(10)(10)(10)(10)(10)
\\
t = 6: & (10)(10)(10)(10)({\color{red}11})(10)({\color{red}11})(10)(10)(10)(10)(10)
\\
t = 7: & (10)(10)(10)(10)(10)({\color{red}01})(10)({\color{red}01})(10)(10)(10)(10)
\\
t = 8: & (10)(10)(10)(10)(10)({\color{red}01})(10)({\color{red}11})(10)(10)(10)(10)
\\
t = 9: & (10)(10)(10)(10)(10)({\color{red}01})(10)(10)({\color{red}01})(10)(10)(10)
\\
t = 10: & (10)(10)(10)(10)(10)({\color{red}11})(10)(10)({\color{red}11})(10)(10)(10)
\end{align*}
The last case is dual with Example \ref{ex:bbsn=3} (i).
\end{example}

\begin{remark}
The positive solitons for $\Phi(n,k)$ do not correspond to all 
$\mathfrak{sl}_n$-BBS solitons; a BBS soliton related to a positive soliton of $\Phi(n,k)$ 
should include
at least one $p$-ball for $p \in \{2,\ldots,n\}\setminus \{n-k\}$, 
and exactly one $(n-k)$-ball. 
\end{remark}

\subsection{Strategy to prove Conjectures \ref{conj:Z-fin} and \ref{thm:BBS}}
\label{subsec:strategy}

Let $\tilde B_\ell$ be a set as
$$
  \tilde B_\ell = \{\bx = (y_1,y_2,\ldots,y_{n-1}) \in (\Z_{\geq 0})^{n-1};
  ~ \sum_{i=1}^n y_i \leq \ell \}.
$$
We have a natural isomorphism $\gamma_\ell: B_\ell \to \tilde B_\ell$ given by
$(x_1,x_2,\ldots,x_n) \mapsto (x_2,x_3,\ldots,x_n)$,
where the inverse map $\gamma_\ell^{-1}$ is given by
$(y_1,y_2,\ldots,y_{n-1}) \mapsto 
(\ell-\sum_{i=1}^{n-1} y_i,y_1,\ldots,y_{n-1})$.
Define $\tilde R_{m \ell}$ as
$$
  \tilde R_{m \ell}:= (\gamma_\ell \otimes \gamma_m) \circ R_{m \ell} \circ 
(\gamma_m^{-1} \otimes \gamma_\ell^{-1}): 
\tilde B_m \otimes \tilde B_\ell \stackrel{\sim}{\to} 
\tilde B_\ell \otimes \tilde B_m.
$$
%Fix $\sigma \in S_n$ to be that of $i = 1$ in Lemma \ref{lem:shift-R},
%and set $\rho := \rho_\sigma$.
Recall the maps $\iota$ on $\R^{n-1}$ and $\tilde \rho$ on $(\R^{n-1})^2$ 
defined in Section \ref{sec:phi-formula}.
We use the same notations  $\iota$  and $\tilde \rho$ 
for their restrictions on $\Z$.
We identify $(\Z^{n-1})^2$ with $\Z^{n-1} \otimes \Z^{n-1}$, 
following the expression of the combinatorial $R$-matrix.

\begin{lem}\label{lem:rho-R-beta}
The followings hold:
\\
(i) 
$(\gamma_\ell \otimes \gamma_m) \circ \rho \circ 
(\gamma_m^{-1} \otimes \gamma_\ell^{-1}) = \tilde \rho$.
\\
(ii)
$\tilde \rho \circ \tilde R_{m \ell} \circ \tilde \rho
= \tilde R_{\ell m}$ on $\tilde B_\ell \otimes \tilde B_m$.
\\
(iii) 
$\tilde \rho \circ (\beta_k \otimes \iota) 
= (\iota \otimes \beta_{n-2-k}) \circ \tilde \rho$  on $M \otimes \tilde B_m$,
for $k=0,1,\ldots,n-2-k$.
\end{lem}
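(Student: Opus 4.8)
The plan is to dispose of the three identities in turn: (i) by a direct computation with the definitions, (ii) formally from (i) together with Lemma~\ref{lem:shift-R}, and (iii) by reducing it to an elementary identity relating $\iota$ and $\beta_k$. For (i) I would take $(\bx,\by)\in\tilde B_m\otimes\tilde B_\ell$, apply $\gamma_m^{-1}\otimes\gamma_\ell^{-1}$ (which prepends one coordinate to each factor, namely the complement $m-\sum_i x_i$, resp.\ $\ell-\sum_i y_i$), then $\rho$, then $\gamma_\ell\otimes\gamma_m$ (which discards the first coordinate of each factor). The key point is that the permutation $\sigma$ of Lemma~\ref{lem:shift-R} fixes position $1$ and reverses positions $2,\dots,n$; hence the composite ``prepend a coordinate, apply $\sigma$, discard the first coordinate'' acts on each of the two length-$(n-1)$ vectors exactly as the reversal $\iota$, while $\rho$ also swaps the two tensor factors, so the whole composite is $(\bx,\by)\mapsto(\iota(\by),\iota(\bx))=\tilde\rho(\bx,\by)$. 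This computation is valid for every pair $m,\ell$, so the version of (i) with $m$ and $\ell$ interchanged is also available.

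For (ii) I would argue purely formally. Lemma~\ref{lem:shift-R} gives $R_{m\ell}\circ\rho=\rho\circ R_{\ell m}$, and $\rho\circ\rho=\mathrm{id}$, so $R_{\ell m}=\rho\circ R_{m\ell}\circ\rho$. Substituting this into $\tilde R_{\ell m}=(\gamma_m\otimes\gamma_\ell)\circ R_{\ell m}\circ(\gamma_\ell^{-1}\otimes\gamma_m^{-1})$ and inserting trivial factors of the form $(\gamma^{-1}\otimes\gamma^{-1})\circ(\gamma\otimes\gamma)=\mathrm{id}$ (with matching indices) on the appropriate $B$-spaces between each outer $\rho$ and $R_{m\ell}$, the two outer $\rho$'s get conjugated by $\gamma$'s into $\tilde\rho$'s by the $(\ell,m)$ version of part (i), while the remaining middle piece is $(\gamma_\ell\otimes\gamma_m)\circ R_{m\ell}\circ(\gamma_m^{-1}\otimes\gamma_\ell^{-1})=\tilde R_{m\ell}$ by definition; reading off the result gives $\tilde R_{\ell m}=\tilde\rho\circ\tilde R_{m\ell}\circ\tilde\rho$ on $\tilde B_\ell\otimes\tilde B_m$.

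For (iii) I would first observe that, applying $\tilde\rho$ and using $\iota\circ\iota=\mathrm{id}$ on the factor untouched by $\beta$, both sides send $(m_p,\bx)\in M\otimes\tilde B_m$ to $(\bx,\ast)$, so the claim reduces to $\iota\circ\beta_k=\beta_{n-2-k}\circ\iota$ as maps $M\to\tilde B_1$, for $0\le k\le n-2$. To verify this, I read all $m$-subscripts modulo $n$: then $\iota(m_p)=m_{-p}$ (reversal carries the $0$ of $m_p$ from position $p$ to position $n-p$, and fixes $m_n$), and from $\beta_k(m_p)=m_n-m_{p-1-k}$ one obtains $\beta_k(m_p)=\be_{(p-1-k)\bmod n}$, where $\be_j$ ($1\le j\le n-1$) denotes the $j$-th standard basis vector of $\Z^{n-1}$ and $\be_0:=0$ (this last case corresponding to $m_{p-1-k}=m_n$). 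Since $\iota(\be_j)=\be_{-j}$, a one-line computation using $-(n-2-k)\equiv k+2\pmod n$ shows that both $\iota(\beta_k(m_p))$ and $\beta_{n-2-k}(\iota(m_p))$ equal $\be_{(k+1-p)\bmod n}$, which proves (iii).

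I expect no part to present a real difficulty: the lemma is entirely bookkeeping, and the only substantive inputs are Lemma~\ref{lem:shift-R} together with $\rho\circ\rho=\mathrm{id}$, and the observation in (i) that $\sigma$ acts by reversal on the relevant $n-1$ coordinates. The main thing to watch is keeping the tensor-factor order and the subscript labels $m,\ell$ consistent through the chain of compositions in (ii), and correctly handling the modular edge cases in (iii) — the all-ones vector $m_n$ (where the pattern ``$0$ in position $p$'' breaks down) and the degenerate value $\be_0=0$.
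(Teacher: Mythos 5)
Your proposal is correct and follows essentially the same route as the paper: (i) by direct unwinding of $\gamma$, $\rho$, $\sigma$, (ii) formally from (i) together with Lemma \ref{lem:shift-R} and $\rho\circ\rho=\mathrm{id}$, and (iii) by evaluating both sides on $m_p\otimes\bw$ using $\iota(m_p)=m_{n-p}$ and $\beta_k(m_p)=m_n-m_{p-1-k}$, which is exactly the paper's computation showing both sides equal $\bw\otimes(m_n-m_{k+1-p})$. You merely supply the details the paper leaves implicit for (i) and (ii), and your handling of the degenerate case $p=k+1$ (where $\beta_k(m_p)=0$) is consistent with the paper's definitions.
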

\begin{proof}
(i) is easy. (ii) follows from (i) and Lemma \ref{lem:shift-R}.
We check (iii). For $m_p \otimes \bw \in M \otimes \tilde B_m$, we have 
\begin{align*}
  &(\iota \otimes \beta_{n-2-k}) \circ \tilde \rho (m_p \otimes \bw)
  =(\iota \otimes \beta_{n-2-k}) \left(\iota(\bw) \otimes m_{n-p}\right)
  =\bw \otimes (m_n - m_{k+1-p}),
  \\
  &\tilde \rho \circ (\beta_k \otimes \iota)(m_p \otimes \bw)
  = \tilde \rho \left((m_n-m_{p-1-k}) \otimes \iota(\bw)\right)
  =\bw \otimes (m_n-m_{-p+1+k}),
\end{align*}
for $p=1,\ldots,n$ and $k=0,1,\ldots,n-2-k$.
\end{proof}

As with $\phi(n,k)$ in Section \ref{sec:phi-formula}, 
we use the same notation $\Phi(n,k)$ to denote 
the map on $\Z^{n-1} \otimes \Z^{n-1}$
which is the building block of the dynamics $\Phi(n,k)$. We write
$\Phi(n,k) : \bZ_i^t \otimes \bY_i^t \mapsto \bY_i^{t+1} \otimes \bZ_{i+1}^t$,
with a diagram,

\begin{figure}[H]
\unitlength=1.5mm
\begin{picture}(20,20)(0,0)
\put(5,10){\vector(1,0){10}}
\put(10,15){\vector(0,-1){10}}

\put(1,9){$\bZ_i^t$}
\put(16,9){$\bZ_{i+1}^t ~~.$}
\put(9,17){$\bY_i^t$}
\put(9,2){$\bY_i^{t+1}$}
\end{picture}
\end{figure}

%We will abuse the notation $\phi(n,k)$ in the same manner,
%to denote the rational map on $(\R_{>0})^{n-1} \otimes (\R_{>0})^{n-1}$,
%as $\phi(n,k) : \bz_i^t \otimes \by_i^t \mapsto \by_i^{t+1} \otimes \bz_{i+1}^t$.

\begin{prop}\label{prop:phi-phi-R}
%Assume that 
%\begin{align}\label{eq:phi-phi}
%  \Phi(n,n-2-k) = \tilde \rho \circ \Phi^{-1}(n,k) \circ \tilde \rho
%\end{align}
%is satisfied on $\Z^{n-1} \otimes \Z^{n-1}$. 
If it holds that
\begin{align}\label{eq:phiR-nk}
  &\tilde R_{1 \infty}\circ (\beta_k \otimes \iota) (m_p \otimes \bw)
  = (\iota \otimes \beta_k) \circ \Phi(n,k) (m_p \otimes \bw)
\end{align}
for some $k \in \{0,1,\ldots,n-2\}$ and 
some $m_p \otimes \bw \in M \otimes \tilde B_\infty$,
then 
\begin{align}
  \tilde R_{1 \infty}\circ (\beta_{n-2-k} \otimes \iota)
  (m_{n-p'} \otimes \iota(\bw'))
  = (\iota \otimes \beta_{n-2-k}) \circ \Phi(n,n-2-k)
    (m_{n-p'} \otimes \iota(\bw')),
\end{align}
where $(\bw' \otimes m_p') := \Phi(n,k)(m_p \otimes \bw)$.
\end{prop}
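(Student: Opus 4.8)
The plan is to deduce the identity for offset $n-2-k$ purely formally from the hypothesis \eqref{eq:phiR-nk} at offset $k$, by conjugating everything with the involution $\tilde\rho$. The two ingredients are the $k\leftrightarrow n-2-k$ symmetry of Proposition~\ref{prop:phi-symmetry}, transported to the tropical level, and the compatibility relations of Lemma~\ref{lem:rho-R-beta}. Write $\xi:=m_p\otimes\bw$ and $\xi':=\Phi(n,k)(\xi)=\bw'\otimes m_{p'}$. A direct check gives $\iota(m_q)=m_{n-q}$, and since $\tilde\rho(\mathbf a\otimes\mathbf b)=\iota(\mathbf b)\otimes\iota(\mathbf a)$ the element occurring in the conclusion is exactly $m_{n-p'}\otimes\iota(\bw')=\tilde\rho(\xi')$. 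So the assertion is that both sides of \eqref{eq:phiR-nk} for $n-2-k$, evaluated at $\tilde\rho(\xi')$, agree.

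For the right-hand side I would invoke the tropicalization of \eqref{eq:phi-phi}, namely $\Phi(n,n-2-k)=\tilde\rho\circ\Phi^{-1}(n,k)\circ\tilde\rho$ as piecewise-linear bijections of $\Z^{n-1}\otimes\Z^{n-1}$, so that
\[
\Phi(n,n-2-k)\bigl(\tilde\rho(\xi')\bigr)=\tilde\rho\bigl(\Phi^{-1}(n,k)(\xi')\bigr)=\tilde\rho(\xi),
\]
using $\tilde\rho\circ\tilde\rho=\mathrm{id}$ and $\Phi^{-1}(n,k)(\xi')=\xi$. Applying $\iota\otimes\beta_{n-2-k}$ and Lemma~\ref{lem:rho-R-beta}(iii) (which gives $(\iota\otimes\beta_{n-2-k})\circ\tilde\rho=\tilde\rho\circ(\beta_k\otimes\iota)$ on $M\otimes\tilde B_\infty$, and $\xi\in M\otimes\tilde B_\infty$), the right-hand side becomes $\tilde\rho\bigl((\beta_k\otimes\iota)(\xi)\bigr)$.

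For the left-hand side I would first push $\beta_{n-2-k}\otimes\iota$ through $\tilde\rho$ using Lemma~\ref{lem:rho-R-beta}(iii) with $k$ and $n-2-k$ interchanged, getting $(\beta_{n-2-k}\otimes\iota)\bigl(\tilde\rho(\xi')\bigr)=\tilde\rho\bigl((\iota\otimes\beta_k)(\xi')\bigr)$ (here one checks $\tilde\rho(\xi')=m_{n-p'}\otimes\iota(\bw')$ does lie in $M\otimes\tilde B_\infty$). Then Lemma~\ref{lem:rho-R-beta}(ii) yields $\tilde R_{1\infty}\circ\tilde\rho=\tilde\rho\circ\tilde R_{\infty 1}$, so the left-hand side equals $\tilde\rho\bigl(\tilde R_{\infty 1}((\iota\otimes\beta_k)(\xi'))\bigr)$. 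Since $\xi'=\Phi(n,k)(\xi)$, the hypothesis \eqref{eq:phiR-nk} reads $(\iota\otimes\beta_k)(\xi')=\tilde R_{1\infty}\bigl((\beta_k\otimes\iota)(\xi)\bigr)$, and because $\tilde R_{\infty 1}$ is the inverse of $\tilde R_{1\infty}$ (a standard property of the combinatorial $R$-matrix), applying $\tilde R_{\infty 1}$ cancels it, leaving $\tilde\rho\bigl((\beta_k\otimes\iota)(\xi)\bigr)$. This matches the right-hand side, completing the proof.

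The bulk of the work is index bookkeeping ($p\mapsto n-p$, the case split in the definition of $\beta_k$, and verifying that the arguments stay in the sets $M\otimes\tilde B_\infty$, $\tilde B_\infty\otimes\tilde B_1$, etc.\ where the lemmas apply). The one point meriting an actual sentence of justification — and the only real content beyond symbol pushing — is the passage from Proposition~\ref{prop:phi-symmetry} to its tropical form: one notes that $\phi(n,k)$ and $\phi^{-1}(n,k)$ are subtraction-free (being composites of Lusztig moves, which are invertible within the subtraction-free category) and that $\tilde\rho$ is monomial, so tropicalization is functorial here and \eqref{eq:phi-phi} tropicalizes term by term, with $\Phi^{-1}(n,k)$ being literally the tropicalization of $\phi^{-1}(n,k)$.
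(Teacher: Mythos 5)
Your proposal is correct and follows essentially the same route as the paper's proof: both rest on the tropicalized symmetry $\Phi(n,n-2-k)=\tilde\rho\circ\Phi^{-1}(n,k)\circ\tilde\rho$ (from Proposition \ref{prop:phi-symmetry}/\eqref{eq:phi-formula}), parts (ii) and (iii) of Lemma \ref{lem:rho-R-beta}, and the inverse property $\tilde R_{\infty 1}\circ\tilde R_{1\infty}=\mathrm{id}$, the only difference being bookkeeping (you evaluate both sides of the conclusion at $\tilde\rho(\bw'\otimes m_{p'})$ and reduce each to $\tilde\rho\circ(\beta_k\otimes\iota)(m_p\otimes\bw)$, whereas the paper rewrites the hypothesis through $\Phi(n,k)^{-1}$ and conjugates by $\tilde\rho$).
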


\begin{proof}
Eq. \eqref{eq:phiR-nk} can be rewritten as
\begin{align*}
  \tilde R_{1 \infty}\circ (\beta_k \otimes \iota) \Phi(n,k)^{-1}
  (\bw' \otimes m_p')
  = (\iota \otimes \beta_k) (\bw' \otimes m_p').
\end{align*}
By tropicalizing \eqref{eq:phi-formula}, it holds that
\begin{align}%\label{eq:phi-phi}
  \Phi(n,n-2-k) = \tilde \rho \circ \Phi^{-1}(n,k) \circ \tilde \rho
\end{align}
on $\Z^{n-1} \otimes \Z^{n-1}$.
By using this and Lemma \ref{lem:rho-R-beta},
the l.h.s. of the first equation becomes 
\begin{align*}
  &\tilde \rho \circ \tilde R_{\infty 1} \circ \tilde \rho \circ 
     (\beta_k \otimes \iota) \circ 
     \tilde \rho \circ \Phi(n,n-2-k) \circ \tilde \rho (\bw' \otimes m_p')
  \\
  & \quad 
  = \tilde \rho \circ \tilde R_{\infty 1} \circ (\iota \otimes \beta_{n-2-k})
    \Phi(n,n-2-k) ( m_{n-p'} \otimes \iota(\bw')).
\end{align*}
On the other hand, the r.h.s. becomes 
$
  \tilde \rho \circ (\beta_{n-2-k} \otimes \iota) 
  ( m_{n-p'} \otimes \iota(\bw')).
$
Thus the claim follows.
\end{proof}

To prove Conjectures \ref{conj:Z-fin} and \ref{thm:BBS}
we have to check that all configurations which appear
in propagating positive solitons have the form  
$m_p \otimes \bw \mapsto \bw' \otimes m_{p'}$
and satisfy \eqref{eq:phiR-nk}.
Proposition \ref{prop:phi-phi-R} means that 
the claims in the conjectures for $\Phi(n,n-2-k)$ follow from those for 
$\Phi(n,k)$. In the next two subsections, we prove these conjectures 
in the cases of $n=3$ and $4$.

\subsection{Proof for $\Phi(3,k)$}\label{subsec:n=3}

We prove the case of $\Phi(3,1)$,
from which the case of $\Phi(3,0)$ follows
due to Proposition \ref{prop:phi-phi-R}. 

We say a finite sequence of states is {\it stable} when
the carrier returns to its initial state after passing through the sequence. For example, in the case of $\Phi(3,1)$ 
the vacuum state $(0,0)$ is stable,
and a sequence $(3,1)(2,0)$ is stable
but $(3.1)(2,1)$ is not, as shown by diagrams:
\begin{figure}[H]
\unitlength=1.0mm
\begin{picture}(110,25)(0,0)

\multiput(10,10)(20,0){2}{\vector(1,0){10}}
\multiput(15,15)(20,0){2}{\vector(0,-1){10}}

\put(0.5,9){$(1,0)$}
\put(20.5,9){$(0,1)$}
\put(40.5,9){$(1,0)$}

\put(10.5,17){$(3,1)$}
\put(30.5,17){$(2,0)$}

\put(10.5,1){$(2,1)$}
\put(30.5,1){$(3,0)$}

%%%%%%%%%%%%%%%

\multiput(75,10)(20,0){2}{\vector(1,0){10}}
\multiput(80,15)(20,0){2}{\vector(0,-1){10}}

\put(65.5,9){$(1,0)$}
\put(85.5,9){$(0,1)$}
\put(105.5,9){$(1,1)$}

\put(75.5,17){$(3,1)$}
\put(95.5,17){$(2,1)$}

\put(75.5,1){$(2,1)$}
\put(95.5,1){$(3,1)$}

\end{picture}
\end{figure}

\begin{lem}\label{lem:n=3-seq}
The following sequences of states are stable:
\begin{enumerate}
 \item[(a)] $(i,1)(0,0); ~ i > 0$,
 \item[(b)] $(i,1)(j,0); ~ i ,j > 0$,
 \item[(c)] $(0,1)(i,0); ~ i > 0$,
 \item[(d)] $(i,1)(j',1)(k,0); ~ i, k > 0, ~j' \geq 0$,
\end{enumerate} 
Assume that the initial configuration of $(\bY_i^{t=0})_i$ for $\Phi(3,1)$
consists of the vacuum state $(0,0)$ and a finite number of the 
above sequences.
Then the configuration for $t =1$ again consist of 
the vacuum state and these sequences. 
\end{lem}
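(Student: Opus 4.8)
The plan is to exploit the \emph{locality} of the update rule. By definition of $\Phi(3,1)$ the carrier fed into position $0$ is always the initial carrier $\bZ=(1,0)$ of \eqref{eq:Phi-comm}, and a finite sequence of states is \emph{stable} precisely when it returns the carrier $(1,0)$ to $(1,0)$; the vacuum $(0,0)$ is trivially stable. Hence, if a configuration $(\bY_i^{t})_i$ is a left-to-right concatenation of blocks each of which is a vacuum or one of (a)--(d), then an induction on the block index shows that every block is entered at its first site with carrier $(1,0)$ and processed using only its own states, so one step of $\Phi(3,1)$ acts \emph{blockwise}. The proof therefore reduces to two finite tasks: (A) check that each of (a)--(d) is stable and record the block of new states it produces; (B) check that any concatenation of these output blocks, interspersed with the $(0,0)$'s produced by vacua, can again be parsed as a concatenation of vacua and sequences (a)--(d).

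For task (A) I would substitute into \eqref{eq:tropevol-3}. Entering with carrier $(1,0)$ and using the hypotheses $i,k>0$ (so that $\min(1,i)=\min(1,k)=1$), one reads off the outputs: (a) $(i,1)(0,0)\mapsto(i-1,1)(1,0)$, (b) $(i,1)(j,0)\mapsto(i-1,1)(j+1,0)$, (c) $(0,1)(i,0)\mapsto(0,0)(i,1)$, (d) $(i,1)(j',1)(k,0)\mapsto(i-1,1)(j'+1,0)(k,1)$, and in every case the exiting carrier is $(1,0)$, so all four are stable. This also makes transparent the role of the hypotheses: e.g.\ $(0,1)(0,0)$ leaks the carrier to $(2,0)$ and is not stable, which is why (c) requires $i>0$.

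For task (B), the outputs of (a) and (b) are again of the form $(\ge 0,1)(\ge 1,0)$, i.e.\ a sequence (b) or (c), so nothing must be glued there; the subtlety is that the outputs of (c) and (d) end with a ``dangling'' state $(\ge 0,1)$ that must be merged with the first state of the following output block. I would enumerate the possibilities for that first state. If it is $(0,0)$ — a vacuum output, or the right end of the configuration — the pair $(\ge 0,1)(0,0)$ is a sequence (a). If it is $(\ge 0,1)$, the leading state of an (a), (b) or (d) output, then within that block it is followed by a state $(\ge 1,0)$, and the three states together form a sequence (d); moreover, if the block was a (d) output, its last state is again a dangling $(\ge 0,1)$, so one recurses. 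If it is the $(0,0)$ starting a (c) output, the pair is again a sequence (a) and the remaining $(\ge 0,1)$ of that (c) output is dangling, so one recurses. Since the recursion continues only through (c) and (d) output blocks, of which there are finitely many, it terminates, completing the re-parsing and hence the lemma.

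I expect task (B), the bookkeeping of dangling states across block boundaries, to be the only real obstacle; everything else is a direct computation with \eqref{eq:tropevol-3}. An alternative that streamlines the boundary analysis is to characterise ``admissible'' configurations intrinsically — the $\{0,1\}$-string of second coordinates has all runs of $1$'s of length $\le 2$, together with suitable inequalities on the first coordinates — and to check that $\Phi(3,1)$ preserves this class; but this still forces one to confront the same merging phenomenon, so I would simply carry out the block-and-merge argument above.
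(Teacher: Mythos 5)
Your proposal is correct and follows essentially the same route as the paper: verify each block's stability by direct computation with \eqref{eq:tropevol-3} (this is the paper's Figure \ref{fig:c-stable-n=3}), then re-parse the $t=1$ configuration by merging the dangling trailing state produced by the (c)- and (d)-outputs (the paper's (a)$^\ast$) with the following block, with termination by finiteness of the number of blocks. The only tidy-up needed is that those dangling states are in fact $(k,1)$ with $k\ge 1$ rather than $(\ge 0,1)$, and it is precisely this positivity that lets the merged pieces qualify as sequences (a) or (d).
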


\begin{proof}
From \eqref{eq:tropevol-3} we see that all sequences (a)--(d) are stable 
by the diagrams in Figure \ref{fig:c-stable-n=3}.
Note that the vacuum state $(0,0)$ is also stable:
$$
  (1,0) \otimes (0,0) \mapsto (0,0) \otimes (1,0).
$$ 
When the configuration at $t=0$ is given by a composition of 
these stable sequences,
the configuration at $t=1$ is obtained by simply combining 
the diagrams in Figure \ref{fig:c-stable-n=3}, due to the stability. 
It turns out that (a) and (b) change to the form of (b) or (c),
(c) changes to the form of $(0,0)$(a)$^\ast$, and
(d) changes to the form of (b)(a)$^\ast$ or (c)(a)$^\ast$.
Here we define (a)$^\ast:= (i,1)$ for $i > 0$. 
Thus, at $t=1$ the sequence immediately to the right of (a)$^\ast$ is 
always $(0,0)$, (b) or (c), but not (a).
If it is $(0,0)$, we obtain the form of (a).
If it is (b) or (c), we obtain the form of (d).
Since the number of (a)--(d) at $t=0$ is finite,
all (a)$^\ast$ which appear at $t=1$ turn out to be a part of 
a new (a) or (d). Then the claim follows.  
\end{proof}

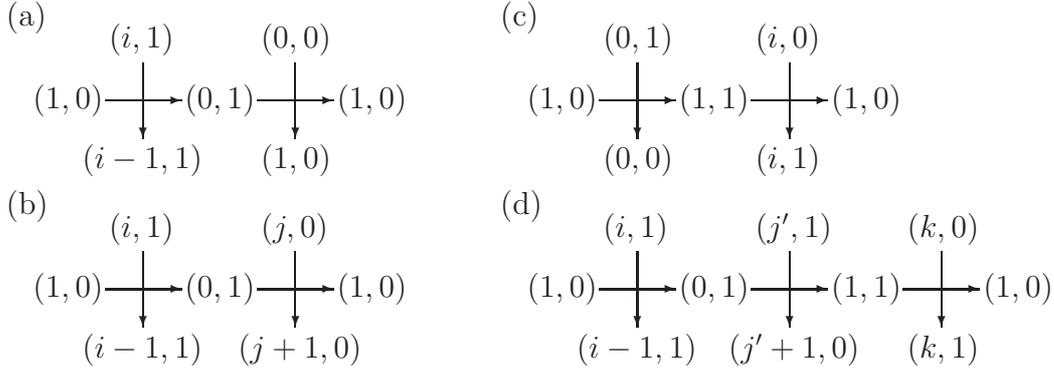
\begin{figure}[ht]
\unitlength=1.0mm
\begin{picture}(130,50)(5,20)

\put(-3,70){(a)}

\multiput(10,60)(20,0){2}{\vector(1,0){10}}
\multiput(15,65)(20,0){2}{\vector(0,-1){10}}

\put(0.5,59){$(1,0)$}
\put(20.5,59){$(0,1)$}
\put(40.5,59){$(1,0)$}

\put(10.5,67){$(i,1)$}
\put(30.5,67){$(0,0)$}

\put(7,51){$(i-1,1)$}
\put(30.5,51){$(1,0)$}

%%%%%%%%%%%%%%%

\put(-3,45){(b)}

\multiput(10,35)(20,0){2}{\vector(1,0){10}}
\multiput(15,40)(20,0){2}{\vector(0,-1){10}}

\put(0.5,34){$(1,0)$}
\put(20.5,34){$(0,1)$}
\put(40.5,34){$(1,0)$}

\put(10.5,42){$(i,1)$}
\put(30.5,42){$(j,0)$}

\put(7,26){$(i-1,1)$}
\put(27.5,26){$(j+1,0)$}

%%%%%%%%%%%%%%%

\put(62,70){(c)}

\multiput(75,60)(20,0){2}{\vector(1,0){10}}
\multiput(80,65)(20,0){2}{\vector(0,-1){10}}

\put(65.5,59){$(1,0)$}
\put(85.5,59){$(1,1)$}
\put(105.5,59){$(1,0)$}

\put(75.5,67){$(0,1)$}
\put(95.5,67){$(i,0)$}

\put(75.5,51){$(0,0)$}
\put(95.5,51){$(i,1)$}

%%%%%%%%%%%%%%%

\put(62,45){(d)}

\multiput(75,35)(20,0){3}{\vector(1,0){10}}
\multiput(80,40)(20,0){3}{\vector(0,-1){10}}

\put(65.5,34){$(1,0)$}
\put(85.5,34){$(0,1)$}
\put(105.5,34){$(1,1)$}
\put(125.5,34){$(1,0)$}

\put(75.5,42){$(i,1)$}
\put(95.5,42){$(j',1)$}
\put(115.5,42){$(k,0)$}

\put(72,26){$(i-1,1)$}
\put(92,26){$(j'+1,0)$}
\put(115.5,26){$(k,1)$}

%%%%%%%%%%%%%%%

%\put(62,45){(e)}

%\multiput(75,35)(20,0){3}{\vector(1,0){10}}
%\multiput(80,40)(20,0){3}{\vector(0,-1){10}}

%\put(65.5,34){$(1,0)$}
%\put(85.5,34){$(0,1)$}
%\put(105.5,34){$(1,1)$}
%\put(125.5,34){$(1,0)$}

%\put(75.5,42){$(i,1)$}
%\put(95.5,42){$(0,1)$}
%\put(115.5,42){$(j,0)$}

%\put(72,26){$(i-1,1)$}
%\put(95.5,26){$(1,0)$}
%\put(115.5,26){$(j,1)$}

%%%%%%%%%%

%\put(62,40){vaccum state}

%\put(90,30){\vector(1,0){10}}
%\put(95,35){\vector(0,-1){10}}

%\put(100.5,29){$(1,0)$}
%\put(80.5,29){$(1,0)$}
%\put(90.5,37){$(0,0)$}
%\put(90.5,21){$(0,0)$}

\end{picture}
\caption{The stable sequences for $\Phi(3,1)$~ ($i,j,k > 0,~ j' \geq 0$).}
\label{fig:c-stable-n=3}
\end{figure}

The following is easily seen from Figure \ref{fig:ex-n=3}.

\begin{lem}\label{lem:phi(31)-2}
A soliton corresponds to a sequence of the form (a)--(c).
%A configuration including only one non-vacuum sequence whose form is 
%one of (a)--(c) is a soliton.
The soliton at its minimal length has the form (a), i.e. it equals $(k,1)$ for some $k>0$, 
The velocity of the soliton is $1/(k+1)$. 
\end{lem}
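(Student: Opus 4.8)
The plan is to read off everything from Theorem~\ref{thm:solitons} together with a single explicit orbit computation based on the building blocks in Figure~\ref{fig:c-stable-n=3} and the stability established in Lemma~\ref{lem:n=3-seq}.

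First I would record that, by Theorem~\ref{thm:solitons} applied to $\Phi(3,1)$ (equivalently by Proposition~\ref{prop:3-onesoliton}), a positive one-soliton for $\Phi(3,1)$ has minimal form $(k,1)$ for some integer $k\geq 1$. Among the stable sequences (a)--(d) of Lemma~\ref{lem:n=3-seq} the forms (b) and (c) occupy two lattice sites, whereas form (a) --- a state $(i,1)$ followed by the vacuum state $(0,0)$ --- occupies only one site; hence the configuration of the soliton at its minimal length is of type (a) and equals $(k,1)$, which is the second assertion.

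Next I would follow the time evolution of the isolated state $(k,1)$ surrounded by vacuum, reading each step off Figure~\ref{fig:c-stable-n=3}; by Lemma~\ref{lem:n=3-seq} the carrier re-enters the vacuum on the right as $\bZ$, so nothing leaks out of the block. Starting from $(k,1)(0,0)$ one obtains in turn $(k-j,1)(j,0)$ for $j=1,\dots,k-1$ (each of type (b)), then $(0,1)(k,0)$ (type (c)), and finally $(0,0)(k,1)$, that is, $(k,1)$ shifted one lattice site to the right. Since every configuration met along the way is one of (a), (b), (c), the first assertion follows. Moreover the leading entry of the leftmost non-vacuum state decreases strictly from $k$ to $0$ over the first $k$ steps, so the initial configuration reappears for the first time after exactly $k+1$ steps, during which it has advanced one site; hence the velocity is $1/(k+1)$, in agreement with Theorem~\ref{thm:solitons}.

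The point that needs care is the appeal to Theorem~\ref{thm:solitons} for the fact that \emph{every} positive soliton of $\Phi(3,1)$ has minimal length one and minimal form $(k,1)$; once this classification is granted the remainder is the short finite orbit computation above, completely determined by the diagrams in Figure~\ref{fig:c-stable-n=3}. If one instead wanted a self-contained argument, one would have to show directly that any finite non-vacuum block with the stability property~(ii) of a soliton is forced into one of the forms (a)--(d) and that a block moving at constant velocity cannot be of the interacting type (d); I expect establishing that dichotomy to be the main obstacle.
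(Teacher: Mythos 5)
Your proof is correct and takes essentially the same route as the paper: the paper simply reads the lemma off Figure~\ref{fig:ex-n=3}, which is exactly the orbit $(k,1)(0,0)\to(k-j,1)(j,0)\to(0,1)(k,0)\to(0,0)(k,1)$ that you compute step by step from the diagrams of Figure~\ref{fig:c-stable-n=3}, giving the forms (a)--(c), the minimal-length form (a), and the period $k+1$ per one-site shift. The caveat you flag about invoking Theorem~\ref{thm:solitons} (via Proposition~\ref{prop:3-onesoliton}) for the claim that every positive soliton has minimal form $(k,1)$ applies equally to the paper, which treats this point as ``easily seen'' and elsewhere only states that Theorem~\ref{thm:solitons} is expected to give all positive solitons.
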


\begin{figure}[ht]
\unitlength=1.0mm
\begin{picture}(70,90)(0,-10)

\multiput(5,70)(20,0){3}{\vector(1,0){10}}
\multiput(10,75)(20,0){3}{\vector(0,-1){10}}

\multiput(5,55)(20,0){3}{\vector(1,0){10}}
\multiput(10,60)(20,0){3}{\vector(0,-1){10}}

\multiput(5,40)(20,0){3}{\vector(1,0){10}}
\multiput(10,45)(20,0){3}{\vector(0,-1){10}}

\multiput(5,25)(20,0){3}{\vector(1,0){10}}
\multiput(10,30)(20,0){3}{\vector(0,-1){10}}

\multiput(5,10)(20,0){3}{\vector(1,0){10}}
\multiput(10,15)(20,0){3}{\vector(0,-1){10}}

\multiput(5,-5)(20,0){3}{\vector(1,0){10}}
\multiput(10,0)(20,0){3}{\vector(0,-1){10}}

\put(-4,69){$(1,0)$}
\put(16,69){$(0,1)$}
\put(36,69){$(1,0)$}
\put(56,69){$(1,0)$}

\put(-4,54){$(1,0)$}
\put(16,54){$(0,1)$}
\put(36,54){$(1,0)$}
\put(56,54){$(1,0)$}

\put(-4,39){$(1,0)$}
\put(16,39){$(0,1)$}
\put(36,39){$(1,0)$}
\put(56,39){$(1,0)$}

\put(-4,24){$(1,0)$}
\put(16,24){$(0,1)$}
\put(36,24){$(1,0)$}
\put(56,24){$(1,0)$}

\put(-4,9){$(1,0)$}
\put(16,9){$(1,1)$}
\put(36,9){$(1,0)$}
\put(56,9){$(1,0)$}

\put(-4,-6){$(1,0)$}
\put(16,-6){$(1,0)$}
\put(36,-6){$(0,1)$}
\put(56,-6){$(1,0)$}

%%%

\put(6,77){$(k,1)$}
\put(3,62){$(k-1,1)$}
\put(9.5,46){$\vdots$}
\put(6,32){$(1,1)$}
\put(6,17){$(0,1)$}
\put(6,2){$(0,0)$}
\put(9.5,-14){$\vdots$}

\put(26,77){$(0,0)$}
\put(26,62){$(1,0)$}
\put(29.5,46){$\vdots$}
\put(23,32){$(k-1,0)$}
\put(26,17){$(k,0)$}
\put(26,2){$(k,1)$}
\put(29.5,-14){$\vdots$}

\put(46,77){$(0,0)$}
\put(46,62){$(0,0)$}
\put(49.5,46){$\vdots$}
\put(46,32){$(0,0)$}
\put(46,17){$(0,0)$}
\put(46,2){$(0,0)$}
\put(49.5,-14){$\vdots$}

\end{picture}
\caption{Propagation of a soliton.}
\label{fig:ex-n=3}
\end{figure}
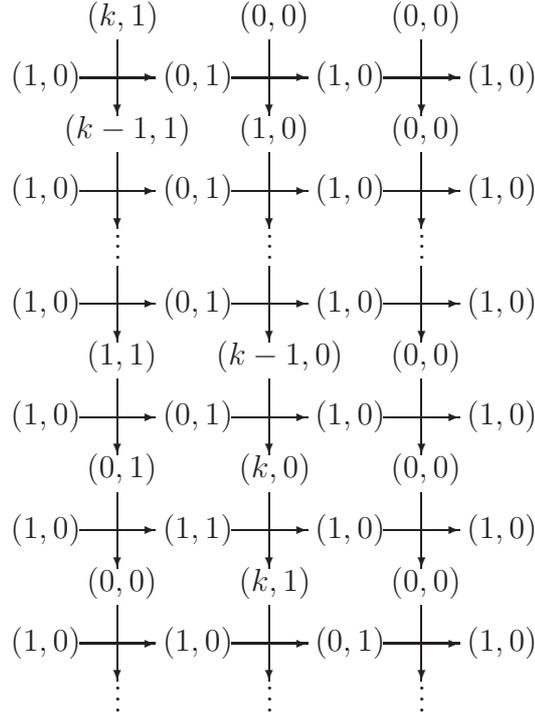

\begin{proof}[Proof of Conjecture \ref{conj:Z-fin} for $\Phi(3,1)$.]
Assume that we start with an initial configuration including $N>1$ 
sequences of forms (a)--(c), which is an $N$-soliton state.
Then a faster soliton catches up with a slower one,
and overtakes it after some scattering states (d).
Finally, the $N$ solitons line up in a way that slower ones are left and 
faster ones are right.
Thus, from Figure \ref{fig:c-stable-n=3}, 
the possible configurations which appear in propagating 
positive solitons are as Figure \ref{fig:conf-n=3a},
where we have only $(1,0)$, $(1,1)$ and $(0,1)$ for the carriers.

\end{proof}

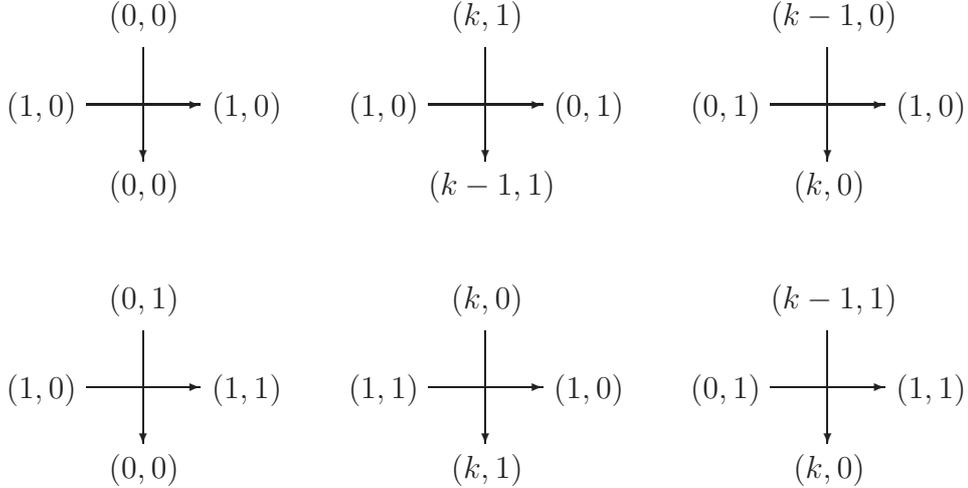
\begin{figure}[ht]
\unitlength=1.5mm
\begin{picture}(90,50)(0,0)
\put(5,35){\vector(1,0){10}}
\put(10,40){\vector(0,-1){10}}
\put(-2,34){$(1,0)$}
\put(16,34){$(1,0)$}
\put(7,42){$(0,0)$}
\put(7,27){$(0,0)$}

\put(35,35){\vector(1,0){10}}
\put(40,40){\vector(0,-1){10}}
\put(28,34){$(1,0)$}
\put(46,34){$(0,1)$}
\put(37,42){$(k,1)$}
\put(35,27){$(k-1,1)$}

\put(65,35){\vector(1,0){10}}
\put(70,40){\vector(0,-1){10}}
\put(58,34){$(0,1)$}
\put(76,34){$(1,0)$}
\put(65,42){$(k-1,0)$}
\put(67,27){$(k,0)$}

\put(5,10){\vector(1,0){10}}
\put(10,15){\vector(0,-1){10}}
\put(-2,9){$(1,0)$}
\put(16,9){$(1,1)$}
\put(7,17){$(0,1)$}
\put(7,2){$(0,0)$}

\put(35,10){\vector(1,0){10}}
\put(40,15){\vector(0,-1){10}}
\put(28,9){$(1,1)$}
\put(46,9){$(1,0)$}
\put(37,17){$(k,0)$}
\put(37,2){$(k,1)$}

\put(65,10){\vector(1,0){10}}
\put(70,15){\vector(0,-1){10}}
\put(58,9){$(0,1)$}
\put(76,9){$(1,1)$}
\put(65,17){$(k-1,1)$}
\put(67,2){$(k,0)$}

\end{picture}
\caption{Possible configurations to propagate solitons in $\Phi(3,1)$
($k \in \Z_{\geq 1}$).}
\label{fig:conf-n=3a}
\end{figure}

\begin{proof}[Proof of Conjecture \ref{thm:BBS} for $\Phi(3,1)$]
In Figure \ref{fig:ex-n=3}, 
we see that during rightward propagation of a soliton $(k,1)$,
a sequence of carriers $(\bZ_1^t)_{t=0,1,\ldots,k} 
=(\underbrace{(0,1),\ldots,(0,1)}_{k},(1,1))$ propagates downward 
with velocity $k+1$. 
By the map $\beta_1$, this sequence is transformed into
$(\underbrace{(0,1),\ldots,(0,1)}_{k},(1,0))$ which corresponds to 
a soliton $\underbrace{33 \cdots 3}_{k}2$ of the $\mathfrak{sl}_3$-BBS.

The configurations in Figure \ref{fig:conf-n=3a} are the local diagrams
appearing in Figure \ref{fig:c-stable-n=3}, which are nothing but those 
that appear in soliton propagations.
We transform them into the diagrams in Figure \ref{fig:n=3-BBS},
by acting by $\beta_1$ on carriers and $\iota$ on states.
Using \eqref{eq:sln-comb-R},
one sees that \eqref{eq:phiR-nk} is fulfilled by $m_p \otimes \bw
\in M \otimes \tilde B_\infty$ appearing in Figure \ref{fig:conf-n=3a}.
It turns out  
that all configurations in Figure \ref{fig:n=3-BBS} are what appear
when $R_{1 \infty}$ propagates solitons  
of the form $\underbrace{33 \cdots 3}_{k > 0}2$, due to the following facts:
in the states $(x_2,x_3)$ on the vertical edges in Figure \ref{fig:n=3-BBS},
$x_2$ takes only $0$ or $1$ which means that 
each soliton includes at most one $2$-ball.
There is neither configuration 
$\tilde R_{1 \infty}((0,0) \otimes (0,1)) = (0,0) \otimes (0,1)$
nor $\tilde R_{1 \infty}((1,0) \otimes (0,0)) = (1,0) \otimes (0,0)$.
This means that there is neither a soliton containing only $3$-balls,
nor a soliton containing only $2$-balls. 
   
%Let $\iota$ be the shift operator on $\R^{2}$ given by
%$\iota : (a_1,a_2) \mapsto (a_2,a_1)$.
%We transform the configurations in Figures \ref{fig:conf-n=3a} to 
%those in Figures \ref{fig:n=3-BBS},
%by acting $\beta_1$ on carriers and $\iota$ on states, and 
%exchanging horizontal and vertical arrows.
%On the other hand, the combinatorial $R$-matrix gives the isomorphism of 
%the affine $A_2$-crystals $B_\ell \otimes B_1 \stackrel{\sim}{\to} 
%B_1 \otimes B_\ell$.
%We reduce $\bx = (x_1,x_2,x_3) \in B_\ell$ to $(x_2,x_3) \in (\Z_{\geq 0})^2$ i%n the limit 
%$\ell \to \infty$, and rewrite $\bw = (w_1,w_2,w_3) \in B_1$ as $(w_2,w_3) \in %\tilde{B}_1$. 
%Using \eqref{eq:sln-comb-R} with these expressions, 
%one sees that Figure~\ref{fig:n=3-BBS} are 
%the required combinatorial $R$-matrix actions.
%Remark the following points: 
%since $x_2$ takes only $0$ or $1$, 
%each soliton includes at least one $2$-ball.
%There is neither configuration corresponding to 
%$R((\ell-1,0,1) \otimes (1,0,0)) = (0,0,1) \otimes (\ell,0,0)$
%which means that there is a soliton containing only $3$-balls,
%nor $R((\ell,0,0) \otimes (0,1,0)) = (0,0,1) \otimes (\ell-1,1,0)$
%which means that there is a soliton containing only $2$-balls.
\end{proof}

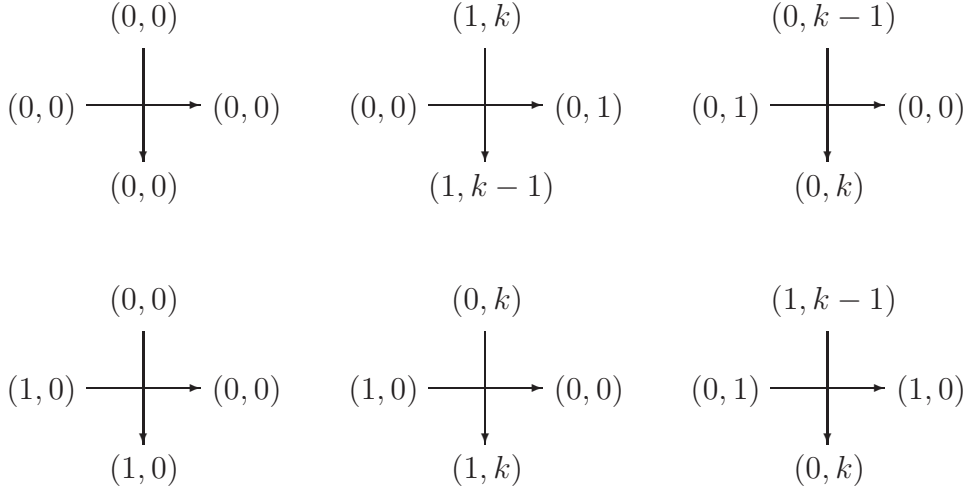
\begin{figure}[H]
\unitlength=1.5mm
\begin{picture}(90,50)(0,0)
\put(5,35){\vector(1,0){10}}
\put(10,40){\vector(0,-1){10}}
\put(-2,34){$(0,0)$}
\put(16,34){$(0,0)$}
\put(7,42){$(0,0)$}
\put(7,27){$(0,0)$}

\put(35,35){\vector(1,0){10}}
\put(40,40){\vector(0,-1){10}}
\put(28,34){$(0,0)$}
\put(46,34){$(0,1)$}
\put(37,42){$(1,k)$}
\put(35,27){$(1,k-1)$}

\put(65,35){\vector(1,0){10}}
\put(70,40){\vector(0,-1){10}}
\put(58,34){$(0,1)$}
\put(76,34){$(0,0)$}
\put(65,42){$(0,k-1)$}
\put(67,27){$(0,k)$}

\put(5,10){\vector(1,0){10}}
\put(10,15){\vector(0,-1){10}}
\put(-2,9){$(1,0)$}
\put(16,9){$(0,0)$}
\put(7,17){$(0,0)$}
\put(7,2){$(1,0)$}

\put(35,10){\vector(1,0){10}}
\put(40,15){\vector(0,-1){10}}
\put(28,9){$(1,0)$}
\put(46,9){$(0,0)$}
\put(37,17){$(0,k)$}
\put(37,2){$(1,k)$}

\put(65,10){\vector(1,0){10}}
\put(70,15){\vector(0,-1){10}}
\put(58,9){$(0,1)$}
\put(76,9){$(1,0)$}
\put(65,17){$(1,k-1)$}
\put(67,2){$(0,k)$}

\end{picture}
\caption{The $\mathfrak{sl}_3$-BBS configurations from $\Phi(3,1)$ 
($k \in \Z_{\geq 1}$).}
\label{fig:n=3-BBS} 
\end{figure}

\subsection{Proof for $\Phi(4,k)$}\label{subsec:n=4}

First we consider $\Phi(4,2)$.
The map $\phi(4,2)$ is given by
\begin{align}\label{eq:phi(42)}
\begin{split}
&(z_{i,1}^t,z_{i,2}^t,z_{i,3}^t) \otimes (y_{i,1}^t,y_{i,2}^t,y_{i,3}^t)
\\
&\qquad \mapsto 
(y_{i,1}^{t+1},y_{i,2}^{t+1},y_{i,3}^{t+1}) \otimes (z_{i+1,1}^t,z_{i+1,2}^t,z_{i+1,3}^t)
\\
& \qquad = 
\left(\frac{z_{i,3}^t y_{i,1}^t}{z_{i,2}^t+y_{i,1}^t},
      \frac{(z_{i,2}^t+y_{i,1}^t) y_{i,2}^t}{z_{i,1}^t+y_{i,2}^t},
      z_{i,1}^t+y_{i,2}^t \right) \otimes
\left(\frac{z_{i,1}^t (z_{i,2}^t+y_{i,1}^t)}{z_{i,1}^t+y_{i,2}^t},
      \frac{z_{i,2}^t z_{i,3}^t}{z_{i,2}^t+y_{i,1}^t},y_{i,3}^t \right).
\end{split}
\end{align}

\begin{lem}\label{lem:n=4-seq}
The following sequences of states are stable:
\begin{enumerate}
 \item[(a1)] $(i,j,1)(0,0,0); ~ i,j > 0$,
 \item[(a2)] $(i_1,j,1)(i_2,0,0); ~ i_1,i_2,j > 0$,
 \item[(b1)] $(0,j,1)(i,0,0); ~ i ,j > 0$,
 \item[(b2)] $(0,j,1)(i,j',0); ~ i ,j_1,j_2 > 0$,
 \item[(c)] $(0,0,1)(i,j,0); ~ i,j > 0$,
 \item[(d)] $(0,j_1,1)(i_1,j',1)(i_2,j_2,0); ~ i_1,i_2,j_1,j_2 > 0, 
            ~j' \geq 0$,
 \item[(e1)] $(i_1,j_1,1)(i',j_2,1)(i_2,j',0); ~ i_1,i_2,j_1,j_2 > 0, 
             ~i',j' \geq 0$,
 \item[(e2)] $(i_1,j_1,1)(i',0,1)(i_2,j_2,0); ~ i_1,i_2,j_1,j_2 > 0, 
             ~i' \geq 0$, 
 \item[(f)] $(i_1,j_1,1)(i',j_2,1)(i_2,j',1)(i_3,j_3,0); ~ i_1,i_2,i_3,j_1,j_2,j_3 > 0, ~i',j' \geq 0$.
\end{enumerate} 
Assume that the initial configuration of $(\bY_i^{t=0})_i$ for $\Phi(4,2)$
consists of 
the vacuum state $(0,0,0)$ and a finite number of these sequences of states.
Then the configuration for $t =1$ again consist of 
the vacuum state and these sequences. 
\end{lem}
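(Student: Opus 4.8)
The proof will follow the same template as Lemma~\ref{lem:n=3-seq}, now with nine sequence types in place of four. The first step is to tropicalize the local rule \eqref{eq:phi(42)}, giving the piecewise-linear map
$$
(Z_{i,1}^t,Z_{i,2}^t,Z_{i,3}^t)\otimes(Y_{i,1}^t,Y_{i,2}^t,Y_{i,3}^t)\ \mapsto\ (Y_{i,1}^{t+1},Y_{i,2}^{t+1},Y_{i,3}^{t+1})\otimes(Z_{i+1,1}^t,Z_{i+1,2}^t,Z_{i+1,3}^t),
$$
with initial carrier $\bZ=(1,1,0)$ from \eqref{eq:Phi-comm}. For each of the sequences (a1)--(f) I would then draw the local diagram obtained by sending the carrier $\bZ$ in from the left, exactly as in Figure~\ref{fig:c-stable-n=3}: in each case one checks that after the block the carrier has returned to $\bZ$, so the sequence is stable. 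The intermediate carriers that occur all lie in the set $M=\{(0,1,1),(1,0,1),(1,1,0),(1,1,1)\}$ of Conjecture~\ref{conj:Z-fin}, and for fixed values of the parameters $i,j,i_1,\dots$ each such diagram has only a handful of sub-cases coming from the $\min$-branches of the rule; these I would record in a figure.

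Once stability is established, the second step is the observation that the map decouples along block boundaries: if $(\bY_i^0)_i$ is a concatenation of the vacuum state and sequences of types (a1)--(f), then one step of $\Phi(4,2)$ acts on each block independently, since the carrier enters and leaves every block in state $\bZ$. Hence the configuration $(\bY_i^1)_i$ at $t=1$ is obtained by stacking the bottom rows of the individual block diagrams. (As noted in the remark following \eqref{eq:def-system}, in the tropical setting we may assume that only finitely many $\bY_i^0$ differ from the vacuum state, so this is well defined.)

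The third step carries the real content. One must check that the bottom row of each block diagram is again a concatenation of vacuum states, sequences of types (a1)--(f), and a small collection of \emph{tail fragments} --- the $n=4$ analogues of the fragment (a)$^\ast=(i,1)$ used in the $n=3$ proof, for instance the minimal-length soliton $(i,j,1)$ or partial two-soliton blocks such as $(i_1,j,1)(i_2,0,0)$. One then tracks which sequence type can follow which at time $t=1$ and verifies that each fragment is always immediately followed by a block that absorbs it into a new sequence of one of the listed types; finiteness of the number of blocks at $t=0$ guarantees that no fragment is left unabsorbed. This shows that the list (a1)--(f) is closed under one step of $\Phi(4,2)$, which is the content of the lemma.

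The main obstacle is precisely the bookkeeping in this third step: with nine sequence types and a rule containing several $\min$-branches, the number of evolution diagrams and fragment-absorption cases to be verified is sizeable. The argument is finite and elementary throughout, but assembling the complete list of fragments and checking closure is where essentially all of the labor lies.
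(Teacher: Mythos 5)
Your proposal follows essentially the same route as the paper's proof: tropicalize \eqref{eq:phi(42)}, verify stability of each block (a1)--(f) by local carrier diagrams (the $n=4$ analogue of Figure \ref{fig:c-stable-n=3}), use the fact that the carrier re-enters each block as $(1,1,0)$ to decouple the evolution, and then check closure by tracking how each block evolves and how the trailing fragments --- the $n=4$ analogue of (a)$^\ast$, namely $(i,j,1)$ with $i,j>0$ --- are absorbed by the block to their right, with finiteness of the number of blocks guaranteeing every fragment is absorbed. This matches the paper's argument, which likewise reduces to a finite case-check of block transitions and fragment absorption.
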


\begin{proof}
It is shown in the same way as in the case of $\Phi(3,1)$:
due to the map $\Phi(4,2)$ given by the tropicalization of \eqref{eq:phi(42)},
we obtain diagrams in Figure \ref{fig:c-stable-n=4},
and see that all sequences (a1)--(f) and the vacuum state $(0,0,0)$ are 
stable. 
Then one sees that 
(a1) changes to (a2), 
(a2) changes to (a2) or (b1), 
(b1) changes to (b2) or (c), 
(c) changes to $(0,0,0)$(a)$^\ast$, 
(d) changes to (b2)(a)$^\ast$ or (c)(a)$^\ast$, 
(e1) changes to (e1) or (d) or (e2),
(e2) changes to (a2)(a)$^\ast$, and 
(f) changes to (e1)(a)$^\ast$ or (d)(a)$^\ast$.
Moreover, (a)$^\ast(0,0,0)$ has a form of (a1),
and (a)$^\ast$(x) has a form of (e1) or (e2) or (f)
where (x) is one of (a2)--(e2).
Then the claim follows.
\end{proof}

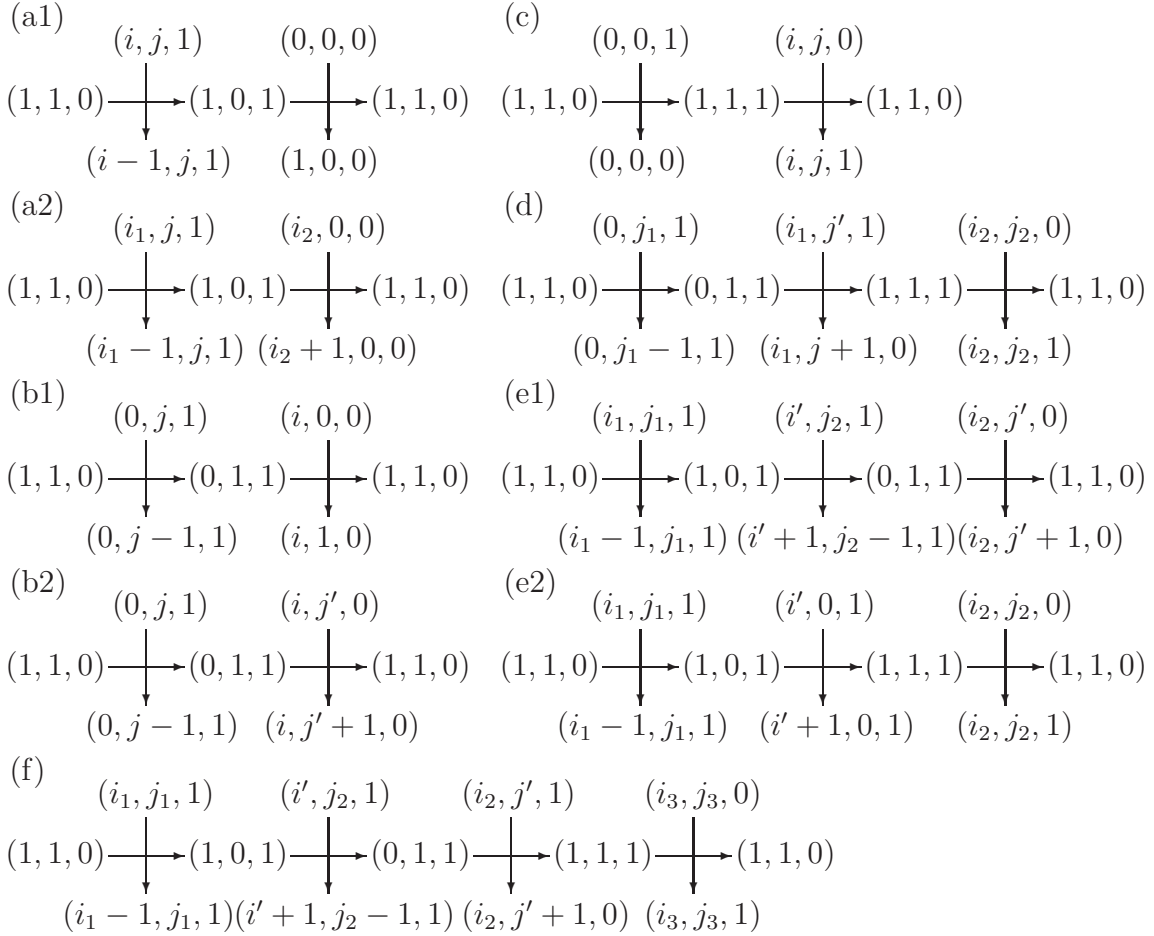
\begin{figure}[ht]
\unitlength=1.0mm
\begin{picture}(130,125)(5,-50)

\put(-3,70){(a1)}

\multiput(10,60)(24,0){2}{\vector(1,0){10}}
\multiput(15,65)(24,0){2}{\vector(0,-1){10}}

\put(-3.5,59){$(1,1,0)$}
\put(20.5,59){$(1,0,1)$}
\put(44.5,59){$(1,1,0)$}

\put(10.5,67){$(i,j,1)$}
\put(32.5,67){$(0,0,0)$}

\put(7,51){$(i-1,j,1)$}
\put(32.5,51){$(1,0,0)$}

%%%%%%%%%%%%%%%

\put(-3,45){(a2)}

\multiput(10,35)(24,0){2}{\vector(1,0){10}}
\multiput(15,40)(24,0){2}{\vector(0,-1){10}}

\put(-3.5,34){$(1,1,0)$}
\put(20.5,34){$(1,0,1)$}
\put(44.5,34){$(1,1,0)$}

\put(10.5,42){$(i_1,j,1)$}
\put(32.5,42){$(i_2,0,0)$}

\put(7,26){$(i_1-1,j,1)$}
\put(29.5,26){$(i_2+1,0,0)$}

%%%%%%%%%%%%%%%

\put(-3,20){(b1)}

\multiput(10,10)(24,0){2}{\vector(1,0){10}}
\multiput(15,15)(24,0){2}{\vector(0,-1){10}}

\put(-3.5,9){$(1,1,0)$}
\put(20.5,9){$(0,1,1)$}
\put(44.5,9){$(1,1,0)$}

\put(10.5,17){$(0,j,1)$}
\put(32.5,17){$(i,0,0)$}

\put(7,1){$(0,j-1,1)$}
\put(32.5,1){$(i,1,0)$}

%%%%%%%%%%%%%%%

\put(-3,-5){(b2)}

\multiput(10,-15)(24,0){2}{\vector(1,0){10}}
\multiput(15,-10)(24,0){2}{\vector(0,-1){10}}

\put(-3.5,-16){$(1,1,0)$}
\put(20.5,-16){$(0,1,1)$}
\put(44.5,-16){$(1,1,0)$}

\put(10.5,-8){$(0,j,1)$}
\put(32.5,-8){$(i,j',0)$}

\put(7,-24){$(0,j-1,1)$}
\put(30.5,-24){$(i,j'+1,0)$}

%%%%%%%%%%%%%%

\put(62,70){(c)}

\multiput(75,60)(24,0){2}{\vector(1,0){10}}
\multiput(80,65)(24,0){2}{\vector(0,-1){10}}

\put(61.5,59){$(1,1,0)$}
\put(85.5,59){$(1,1,1)$}
\put(109.5,59){$(1,1,0)$}

\put(73.5,67){$(0,0,1)$}
\put(97.5,67){$(i,j,0)$}

\put(73,51){$(0,0,0)$}
\put(97.5,51){$(i,j,1)$}

%%%%%%%%%%%%%%%%

\put(62,45){(d)}

\multiput(75,35)(24,0){3}{\vector(1,0){10}}
\multiput(80,40)(24,0){3}{\vector(0,-1){10}}

\put(61.5,34){$(1,1,0)$}
\put(85.5,34){$(0,1,1)$}
\put(109.5,34){$(1,1,1)$}
\put(133.5,34){$(1,1,0)$}

\put(73.5,42){$(0,j_1,1)$}
\put(97.5,42){$(i_1,j',1)$}
\put(121.5,42){$(i_2,j_2,0)$}

\put(71,26){$(0,j_1-1,1)$}
\put(95.5,26){$(i_1,j+1,0)$}
\put(121.5,26){$(i_2,j_2,1)$}

%%%%%%%%%%%%%%

\put(62,20){(e1)}

\multiput(75,10)(24,0){3}{\vector(1,0){10}}
\multiput(80,15)(24,0){3}{\vector(0,-1){10}}

\put(61.5,9){$(1,1,0)$}
\put(85.5,9){$(1,0,1)$}
\put(109.5,9){$(0,1,1)$}
\put(133.5,9){$(1,1,0)$}

\put(73.5,17){$(i_1,j_1,1)$}
\put(97.5,17){$(i',j_2,1)$}
\put(121.5,17){$(i_2,j',0)$}

\put(69,1){$(i_1-1,j_1,1)$}
\put(92.5,1){$(i'+1,j_2-1,1)$}
\put(121.5,1){$(i_2,j'+1,0)$}

%%%%%%%%%%%%%%

\put(62,-5){(e2)}

\multiput(75,-15)(24,0){3}{\vector(1,0){10}}
\multiput(80,-10)(24,0){3}{\vector(0,-1){10}}

\put(61.5,-16){$(1,1,0)$}
\put(85.5,-16){$(1,0,1)$}
\put(109.5,-16){$(1,1,1)$}
\put(133.5,-16){$(1,1,0)$}

\put(73.5,-8){$(i_1,j_1,1)$}
\put(97.5,-8){$(i',0,1)$}
\put(121.5,-8){$(i_2,j_2,0)$}

\put(69,-24){$(i_1-1,j_1,1)$}
\put(95.5,-24){$(i'+1,0,1)$}
\put(121.5,-24){$(i_2,j_2,1)$}

%%%%%%%%%%%%%%

\put(-3,-30){(f)}

\multiput(10,-40)(24,0){4}{\vector(1,0){10}}
\multiput(15,-35)(24,0){4}{\vector(0,-1){10}}

\put(-3.5,-41){$(1,1,0)$}
\put(20.5,-41){$(1,0,1)$}
\put(44.5,-41){$(0,1,1)$}
\put(68.5,-41){$(1,1,1)$}
\put(92.5,-41){$(1,1,0)$}

\put(8.5,-33){$(i_1,j_1,1)$}
\put(32.5,-33){$(i',j_2,1)$}
\put(56.5,-33){$(i_2,j',1)$}
\put(80.5,-33){$(i_3,j_3,0)$}

\put(4,-49){$(i_1-1,j_1,1)$}
\put(26.5,-49){$(i'+1,j_2-1,1)$}
\put(56.5,-49){$(i_2,j'+1,0)$}
\put(80.5,-49){$(i_3,j_3,1)$}

\end{picture}
\caption{Carrier stable sequences for $\Phi(4,2)$ ~($i,j,i_\ast,j_\ast > 0,~i',j' \geq 0$).}
\label{fig:c-stable-n=4}
\end{figure}

\begin{proof}[Proof of Conjecture \ref{conj:Z-fin} and \ref{thm:BBS} for $\Phi(4,2)$]
We give an outline of the proof.
It is easy to show that one-soliton propagation is given by combining
diagrams (a1)--(c) in Figure~\ref{fig:c-stable-n=4}, 
thus a soliton of the minimal form $(i,j,1)$ has velocity $1/(i+j+1)$. 
Further, from Lemma \ref{lem:n=4-seq} it follows that the propagation of 
any multi-soliton state is described by diagrams (a1)--(f) and 
\begin{align}\label{eq:n=(42)vacuum}
(1,1,0) \otimes (0,0,0) \mapsto (0,0,0) \otimes (1,1,0),
\end{align}
so Conjecture  \ref{conj:Z-fin} follows.
The diagrams in Figure \ref{fig:c-stable-n=4} 
consist of `local' diagrams as in Figure \ref{fig:conf-n=4a}
and \eqref{eq:n=(42)vacuum}, and they are shown to satisfy \eqref{eq:phiR-nk}.
Then we see that in the corresponding $\mathfrak{sl}_4$-BBS 
only solitons of forms as $\underbrace{4\cdots4}_{i} \underbrace{3\cdots3}_{j}2$ for some $i,j>0$ appear, which proves Conjecture \ref{thm:BBS}.
\end{proof}

\begin{figure}[ht]
\unitlength=1.5mm
\begin{picture}(90,75)(0,0)

\put(5,60){\vector(1,0){10}}
\put(10,65){\vector(0,-1){10}}
\put(-4,59){$(1,1,0)$}
\put(15,59){$(1,0,1)$}
\put(6,67){$(i,j,1)$}
\put(2,52){$(i-1,j-1,1)$}

\put(35,60){\vector(1,0){10}}
\put(40,65){\vector(0,-1){10}}
\put(26,59){$(1,1,0)$}
\put(45,59){$(0,1,1)$}
\put(36,67){$(0,j,1)$}
\put(34,52){$(0,j-1,1)$}

\put(65,60){\vector(1,0){10}}
\put(70,65){\vector(0,-1){10}}
\put(56,59){$(1,1,0)$}
\put(75,59){$(1,1,1)$}
\put(66,67){$(0,0,1)$}
\put(66,52){$(0,0,0)$}

%%%%%%%%%%%%%%

\put(5,35){\vector(1,0){10}}
\put(10,40){\vector(0,-1){10}}
\put(-4,34){$(1,0,1)$}
\put(15,34){$(1,1,0)$}
\put(6,42){$(i,0,0)$}
\put(4,27){$(i+1,0,0)$}

\put(35,35){\vector(1,0){10}}
\put(40,40){\vector(0,-1){10}}
\put(26,34){$(1,0,1)$}
\put(45,34){$(0,1,1)$}
\put(37,42){$(i,j,1)$}
\put(32,27){$(i+1,j-1,1)$}

\put(65,35){\vector(1,0){10}}
\put(70,40){\vector(0,-1){10}}
\put(56,34){$(1,0,1)$}
\put(75,34){$(1,1,1)$}
\put(67,42){$(i,0,1)$}
\put(64,27){$(i+1,0,0)$}

%%%%%%%%%%%%%%%%

\put(5,10){\vector(1,0){10}}
\put(10,15){\vector(0,-1){10}}
\put(-4,9){$(0,1,1)$}
\put(15,9){$(1,1,0)$}
\put(4,17){$(i,j-1,0)$}
\put(6,2){$(i,j,0)$}

\put(35,10){\vector(1,0){10}}
\put(40,15){\vector(0,-1){10}}
\put(26,9){$(0,1,1)$}
\put(45,9){$(1,1,1)$}
\put(34,17){$(i,j-1,1)$}
\put(36,2){$(i,j,1)$}

\put(65,10){\vector(1,0){10}}
\put(70,15){\vector(0,-1){10}}
\put(56,9){$(1,1,1)$}
\put(75,9){$(1,1,0)$}
\put(66,17){$(i,j,0)$}
\put(66,2){$(i,j,1)$}

\end{picture}
\caption{Possible configurations to propagate solitons for $\Phi(4,2)$
($i,j \in \Z_{\geq 1}$).}
\label{fig:conf-n=4a}
\end{figure}
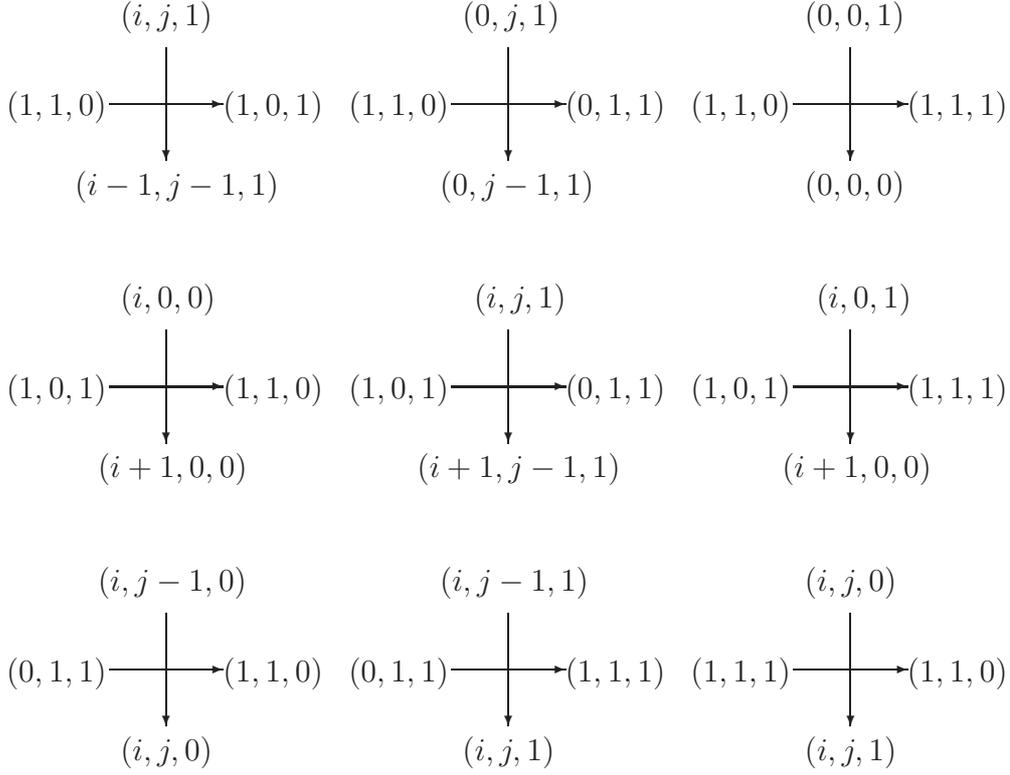

%As for the system $\Phi(4,0)$, 
%the following lemma is easily shown:
%
%\begin{lem}
%Eq. \eqref{eq:phi-phi} holds for $(n,k) = (4,2)$.
%\end{lem}  

Conjectures \ref{conj:Z-fin} and \ref{thm:BBS} for $\Phi(4,0)$
follows from the above result on $\Phi(4,2)$ and 
Proposition \ref{prop:phi-phi-R}, where  
the corresponding $\mathfrak{sl}_4$-BBS only includes solitons of forms as 
$4 \underbrace{3\cdots3}_{i} \underbrace{2\cdots2}_{j}$ for some $i,j>0$.

Next we consider $\Phi(4,1)$. 

\begin{lem}\label{lem:n=41-seq}
The following sequences of states are stable:
\begin{enumerate}
 \item[(a1)] $(i,1,j)(0,0,0); ~ i,j > 0$,
 \item[(a2)] $(i_1,1,j)(i_2,0,0); ~ i_1,i_2,j > 0$,
 \item[(b)] $(0,1,j)(i,0,0); ~ i ,j > 0$,
 \item[(c1)] $(0,0,j)(i,1,0); ~ i ,j > 0$,
 \item[(c2)] $(0,0,j_1)(i,1,j_2); ~ i,j_1,j_2 > 0$,
 \item[(d)] $(0,1,j_1)(i_1,0,j_2)(i_2,1,j'); ~ i_1,i_2,j_1,j_2 > 0, ~j' \geq 0$,
 \item[(e1)] $(i_1,1,j_1)(i',1,j_2)(i_2,0,0); ~ i_1,i_2,j_1,j_2 > 0, ~i'\geq 0$,
 \item[(e2)] $(i_1,1,j_1)(i',0,j_2)(i_2,1,j'); ~ i_1,i_2,j_1,j_2 > 0, ~i',j' \geq 0$, 
 \item[(f)] $(i_1,1,j_1)(i',1,j_2)(i_2,0,j_3)(i_3,1,j'); ~ i_1,i_2,i_3,j_1,j_2,j_3 > 0, ~i',j' \geq 0$.
\end{enumerate} 
Assume that the initial configuration of $(\bY_i^{t=0})_i$ for $\Phi(4,1)$
consists of 
the vacuum state $(0,0,0)$ and a finite number of these sequences.
Then the configuration for $t =1$ again consist of 
the vacuum state and these sequences. 
\end{lem}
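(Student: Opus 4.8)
The plan is to repeat, for $\Phi(4,1)$, the argument already carried out for $\Phi(4,2)$ in Lemma~\ref{lem:n=4-seq}. The first step is to make the evolution rule explicit: by Proposition~\ref{prop:phi-formula} specialised to $n=4$, $k=1$, the rational map $\phi(4,1)$ sends
\[
(z_{i,1},z_{i,2},z_{i,3})\otimes(y_{i,1},y_{i,2},y_{i,3})\longmapsto (y_{i,1}',y_{i,2}',y_{i,3}')\otimes(z_{i+1,1},z_{i+1,2},z_{i+1,3}),
\]
with the $y'_{i,j}$ and $z_{i+1,j}$ read off from \eqref{eq:phi-formula}; tropicalising and using the commuting pair $\bZ=(1,0,1)$, $\bW=(0,0,0)$ from \eqref{eq:Phi-comm} produces the piecewise-linear building block of $\Phi(4,1)$, exactly as \eqref{eq:phi(42)} does for $\Phi(4,2)$.

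Second, I would verify that each of the nine sequence types (a1)--(f) is stable: place the carrier $\bZ=(1,0,1)$ at the left of the sequence, push it through using the tropical map, and check that the carrier emerging on the right is again $(1,0,1)$. Each verification is a finite computation with $\min$ and $+$, and along the way it also records the output states of the sequence. The result is a collection of local diagrams, the $\Phi(4,1)$-analogue of Figure~\ref{fig:c-stable-n=4}.

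Third comes the bookkeeping that closes the family under one time step. Since every listed sequence is stable, the carrier is reset to $\bZ$ at the far left and between consecutive stable blocks, so a configuration that is a finite concatenation of vacua and sequences (a1)--(f) evolves block by block, and the $t=1$ configuration is the concatenation of the block images found in step two. Reading those images off, one should obtain a transition table of the same shape as in the $\Phi(4,2)$ case --- for instance (a1)$\to$(a2), (a2)$\to$(a2) or (b), (b)$\to$(c1) or (c2), the (c1) and (c2) sequences producing $(0,0,0)\,$(a)$^\ast$, and the longer sequences (d),(e1),(e2),(f) shuffling among themselves with trailing (a)$^\ast$ pieces, where (a)$^\ast$ denotes a single state $(i,1,j)$ with $i,j>0$. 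Finally, since the initial configuration has only finitely many blocks, every (a)$^\ast$ tail appearing at $t=1$ is immediately followed either by a vacuum or by a block of type (a2)--(f), and absorbing it into that neighbour turns the pair into a sequence of type (a1) (when the neighbour is vacuum) or of type (e1), (e2), or (f) (otherwise). Hence the $t=1$ configuration is again a finite concatenation of vacua and sequences (a1)--(f), which is the assertion.

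The main obstacle is the size of the enumeration rather than any new idea. There are nine sequence types, each carrying several positive or nonnegative integer parameters, and to make the closure airtight one must handle the degenerate cases $i'=0$ and $j'=0$ as well as the boundary cases in which an adjacent block is the vacuum. The genuinely delicate point, exactly as for $\Phi(4,2)$, is to confirm that the scattering sequences (d), (e1), (e2), (f) already suffice --- that no more complicated interaction pattern arises when two positive solitons overtake one another, i.e. that the (a)$^\ast$-tails never accumulate beyond what the listed sequences can absorb. Following the precedent set for $\Phi(4,2)$, I would present the stability diagrams and the transition table and omit the individual diagram-by-diagram checks.
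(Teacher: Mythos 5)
Your proposal is correct and is essentially the paper's intended argument: the paper states Lemma \ref{lem:n=41-seq} without proof, referring the reader to the same manner as Lemma \ref{lem:n=4-seq}, namely tropicalizing \eqref{eq:phi-formula} with commuting pair $\bZ=(1,0,1)$, $\bW=(0,0,0)$, checking stability of each listed sequence by passing the carrier through, and closing the family under one time step via the block-by-block evolution and absorption of the trailing $(i,1,j)$ pieces. Your outline reproduces exactly that template (including the finiteness argument for the $(\mathrm{a})^\ast$ tails), so it matches the paper's approach.
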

 
Using this lemma, one can show that 
Conjectures \ref{conj:Z-fin} and \ref{thm:BBS} holds for $\Phi(4,1)$,
in the same manner as the case of $\Phi(4,2)$. 
The corresponding $\mathfrak{sl}_4$-BBS only includes solitons of forms as 
$\underbrace{4\cdots4}_{i}3 \underbrace{2\cdots2}_{j}$ for some $i,j>0$.

%%%%%%%%%%%%%%%%%
\section{Numerical phenomena: negative solitons, relaxation solitons, and pulsars}
%%%%%%%%%%%%%%%%%

Besides positive solitons, we numerically observe negative solitons, relaxation solitons and pulsars for $\Phi(n,k)$ on $\Z$ with 
the commuting pair given by \eqref{eq:Phi-comm}. 
It might be an interesting future problem to study these phenomena.

%We present the detail in the case of $\Phi(3,1)$ 
%with the vacuum state $(0,0)$.

\subsection{Negative solitons}

For the definition of a negative soliton, see \S \ref{subsec:soliton}.
We observe that a state $(\underbrace{-p,\ldots,-p}_{n-1})$ for $p \in \Z_{>0}$ is a negative soliton for $\Phi(n,k)$,
whose velocity is $1/(n-1)$ independent of $p$.
The difference among the negative solitons appears in scattering
with positive solitons. 
The following examples show that 
in scatterings of positive and negative solitons
the phase shift of the positive soliton depends on $p$.

\begin{example}
Scatterings of positive and negative solitons in the case of $\Phi(3,1)$.
We write $\bar k$ for $-k$, for $k \in \Z_{>0}$.
\\
(i) $(\b1,\b1) \times (3,1) \mapsto (3,1) \times (\b1, \b1)$: 
\begin{align*}
t = 0: & (00)(\b1 \b1)(00)(31)(00)(00)(00)(00)(00)(00)(00)(00)
\\
t = 1: & (00)(0 \b1)(\b1 0)(21)(10)(00)(00)(00)(00)(00)(00)(00)
\\
t = 2: & (00)(00)(\b1 \b1)(11)(20)(00)(00)(00)(00)(00)(00)(00)
\\
t = 3: & (00)(00)(0 \b1)(\b1 1)(30)(00)(00)(00)(00)(00)(00)(00)
\\
t = 4: & (00)(00)(00)(\b1 \b1)(31)(00)(00)(00)(00)(00)(00)(00)
\\
t = 5: & (00)(00)(00)(0 \b1)(02)(2 \b1)(00)(00)(00)(00)(00)(00)
\\
t = 6: & (00)(00)(00)(00)(\b1 0)(40)(\b1 0)(00)(00)(00)(00)(00)
\\
t = 7: & (00)(00)(00)(00)(0 \b1 )(22)(0 \b1)(00)(00)(00)(00)(00)
\\
t = 8: & (00)(00)(00)(00)(00)(01)(3 \b1)(\b1 0)(00)(00)(00)(00)
\\
t = 9: & (00)(00)(00)(00)(00)(00)(31)(\b1 \b1)(00)(00)(00)(00)
\\
t = 10: & (00)(00)(00)(00)(00)(00)(21)(1 \b1)(\b1 0)(00)(00)(00)
\\
t = 11: & (00)(00)(00)(00)(00)(00)(11)(20)(\b1 \b1)(00)(00)(00)
\\
t = 12: & (00)(00)(00)(00)(00)(00)(01)(30)(0 \b1)(\b1 0)(00)(00)
\\
t = 13: & (00)(00)(00)(00)(00)(00)(00)(31)(00)(\b1 \b1)(00)(00)
\end{align*}
(ii) $(\bt,\bt) \times (3,1) \mapsto (3,1) \times (\bt, \bt)$: 
\begin{align*}
t = 0: & (00)(\bt \bt)(00)(31)(00)(00)(00)(00)(00)(00)(00)(00)
\\
t = 1: & (00)(0 \bt)(\bt 0)(21)(10)(00)(00)(00)(00)(00)(00)(00)
\\
t = 2: & (00)(00)(\bt \bt)(11)(20)(00)(00)(00)(00)(00)(00)(00)
\\
t = 3: & (00)(00)(0 \bt)(\bt 1)(30)(00)(00)(00)(00)(00)(00)(00)
\\
t = 4: & (00)(00)(00)(\bt \bt)(31)(00)(00)(00)(00)(00)(00)(00)
\\
t = 5: & (00)(00)(00)(0 \bt)(\bt 3)(3 \bt)(00)(00)(00)(00)(00)(00)
\\
t = 6: & (00)(00)(00)(00)(\bt \bt)(51)(\bt 0)(00)(00)(00)(00)(00)
\\
t = 7: & (00)(00)(00)(00)(0 \bt )(03)(1 \bt)(00)(00)(00)(00)(00)
\\
t = 8: & (00)(00)(00)(00)(00)(\bt 0)(5 \b1)(\bt 0)(00)(00)(00)(00)
\\
t = 9: & (00)(00)(00)(00)(00)(0 \bt)(23)(\b1 \bt)(00)(00)(00)(00)
\\
t = 10: & (00)(00)(00)(00)(00)(00)(\b1 1)(4 \bt)(\bt 0)(00)(00)(00)
\\
t = 11: & (00)(00)(00)(00)(00)(00)(0 \b1)(32)(\bt \bt)(00)(00)(00)
\\
t = 12: & (00)(00)(00)(00)(00)(00)(00)(11)(2 \bt)(\bt 0)(00)(00)
\\
t = 13: & (00)(00)(00)(00)(00)(00)(00)(01)(30)(\bt \bt)(00)(00)
\\
t = 14: & (00)(00)(00)(00)(00)(00)(00)(00)(31)(0 \bt)(\bt 0)(00)
\\
t = 15: & (00)(00)(00)(00)(00)(00)(00)(00)(21)(1 0)(\bt \bt)(00)
\end{align*}
\end{example}

\begin{remark}
Negative solitons for the $\mathfrak{sl}_2$-BBS were found by Hirota \cite{H},
and studied in \cite{KMT10, WNSRG} and others.
In \cite{KMT10}, it is clarified that the states with negative solitons 
are transformed into the $\mathfrak{sl}_2$-BBS with greater box capacity.
It is not clear for now if some similar mechanism works 
in the general $\mathfrak{sl}_n$-BBS or in $\Phi(n,k)$.    
\end{remark}

\subsection{Relaxation solitons and pulsars} \label{subsec:relax}

Besides solitons, we introduce two phenomena, {\it relaxation solitons} 
and {\it pulsars},
which may satisfy the condition (i) for solitons 
presented in \S \ref{subsec:soliton}, but not (ii).
 
We define a {\it relaxation soliton} as 
a finite sequence of non-vacuum states at $t=0$ such that  
the carrier gets back to the initial one for $t > t_0$ 
for some $t_0 \in \Z_{\geq 0}$, 
but not for $0 \leq t \leq t_0$.
In the other words, it is a finite sequence of non-vacuum states 
which reduces to solitons at $t=t_0 + 1 > 0$. 
In the following examples, we have $t_0 = 0$ in (i) and $t_0=1$ in (ii).

\begin{example} 
Relaxation solitons. 
\\
(i) $\Phi(3,1)$:
\begin{align*}
& (\bY_i^t)_i & & (\bZ_i^t)_i
\\[1mm]
t = 0: & (00)(23)(00)(00)(00)(00)(00)(00) & & (10)(10)(03)(30)(30)(30)(30)(30) 
\\
t = 1: & (00)(11)(30)(00)(00)(00)(00)(00) & & (10)(10)(01)(10)(10)(10)(10)(10)
\\
t = 2: & (00)(01)(40)(00)(00)(00)(00)(00) & & (10)(10)(11)(10)(10)(10)(10)(10)
\\
t = 3: & (00)(00)(41)(00)(00)(00)(00)(00) & & (10)(10)(10)(01)(10)(10)(10)(10)
\\
t = 4: & (00)(00)(31)(10)(00)(00)(00)(00) & & (10)(10)(10)(01)(10)(10)(10)(10)
\\
t = 5: & (00)(00)(21)(20)(00)(00)(00)(00) & & (10)(10)(10)(01)(10)(10)(10)(10)
\\
t = 6: & (00)(00)(11)(30)(00)(00)(00)(00) & & (10)(10)(10)(01)(10)(10)(10)(10)
\end{align*}
(ii) $\Phi(4,0)$:
\begin{align*}
& (\bY_i^t)_i & & (\bZ_i^t)_i
\\[1mm]
t = 0: & (000)(320)(000)(000)(000)(000) & & (011)(011)(023)(023)(023)(023) 
\\
t = 1: & (000)(041)(000)(000)(000)(000) & & (011)(011)(110)(010)(010)(010) 
\\
t = 2: & (000)(031)(100)(000)(000)(000) & & (011)(011)(110)(011)(011)(011) 
\\
t = 3: & (000)(021)(110)(000)(000)(000) & & (011)(011)(110)(011)(011)(011) 
\\
t = 4: & (000)(011)(120)(000)(000)(000) & & (011)(011)(110)(011)(011)(011) 
\\
t = 5: & (000)(001)(130)(000)(000)(000) & & (011)(011)(101)(011)(011)(011) 
\\
t = 6: & (000)(000)(131)(000)(000)(000) & & (011)(011)(011)(111)(011)(011)
\\
t = 7: & (000)(000)(031)(100)(000)(000) & & (011)(011)(011)(110)(011)(011)
\end{align*}
\\
\end{example}

We define a {\em pulsar} as a finite sequence of non-vacuum states satisfying 
\begin{itemize} 
\item[(i)] 
the sequence moves to the right with a constant velocity, 

\item[(ii')]
the final carriers $\bZ_i^t$ for $i \gg 1$ are periodic in $t$.
\end{itemize} 
See the following examples.

\begin{example}
Pulsars:
\\
(i) $\Phi(3,1)$:
\begin{align*}
& (\bY_i^t)_i & & (\bZ_i^t)_i
\\[1mm]
t = 0: & (00)(10)(00)(00)(00)(00)(00)(00) & & (10)(10)(00)(00)(00)(00)(00)(00) 
\\
t = 1: & (00)(01)(00)(00)(00)(00)(00)(00) & & (10)(10)(11)(20)(20)(20)(20)(20) 
\\
t = 2: & (00)(00)(10)(00)(00)(00)(00)(00) & & (10)(10)(10)(00)(00)(00)(00)(00) 
\\
t = 3: & (00)(00)(01)(00)(00)(00)(00)(00) & & (10)(10)(10)(11)(20)(20)(20)(20) 
\\
t = 4: & (00)(00)(00)(10)(00)(00)(00)(00) & & (10)(10)(10)(10)(00)(00)(00)(00)
\\
t = 5: & (00)(00)(00)(01)(00)(00)(00)(00) & & (10)(10)(10)(10)(11)(20)(20)(20) 
\end{align*}

(ii) $\Phi(4,0)$:
\begin{align*}
& (\bY_i^t)_i & & (\bZ_i^t)_i
\\[1mm]
t = 0: & (000)(110)(000)(000)(000)(000) & & (011)(011)(021)(021)(021)(021) 
\\
t = 1: & (000)(011)(000)(000)(000)(000) & & (011)(011)(110)(010)(010)(010) 
\\
t = 2: & (000)(001)(100)(000)(000)(000) & & (011)(011)(101)(002)(002)(002) 
\\
t = 3: & (000)(000)(110)(000)(000)(000) & & (011)(011)(011)(021)(021)(021) 
\\
t = 4: & (000)(000)(011)(000)(000)(000) & & (011)(011)(011)(110)(010)(010) 
\\
t = 5: & (000)(000)(001)(100)(000)(000) & & (011)(011)(011)(101)(002)(002) 
\\
t = 6: & (000)(000)(000)(110)(000)(000) & & (011)(011)(011)(011)(021)(021)
\end{align*}
\end{example}

\subsection{Phase diagram for $\Phi(3,1)$}
\label{subsec:phase}

We close this section with the phase diagram of solitons and pulsars
in the case of $\Phi(3,1)$.
We numerically observe that when an initial state includes 
only one non-vacuum state $(x,y) \in (\Z_{\geq 0})^2$,
it is either a positive soliton, a relaxation soliton, or a pulser as follows:
\begin{enumerate}
\item $(x,1)$ with $x \geq 1$: a positive soliton,

\item $(0,1)$ or $(1,0)$: a pulsar,

\item the other $(x,y)$: a relaxation soliton which reduces to a positive 
soliton of the minimal form $(x+y-1,1)$. 
\end{enumerate}

If we consider the dynamical system on $\Q$, the situation
is more complicated since there are solitons and pulsars 
whose minimal lengths are more than one.
Nevertheless, we numerically find an interesting structure as shown 
in Figure \ref{fig:phase-diagram}.

\begin{figure}[H]
\unitlength=1mm
\begin{picture}(80,60)(0,0)

\put(0,5){\vector(1,0){55}}
\put(5,0){\vector(0,1){55}}

\put(57,3){\scriptsize $x$}
\put(3,57){\scriptsize $y$}

\multiput(20,5)(15,0){3}{\line(0,-1){1.5}}
\multiput(5,20)(0,15){3}{\line(-1,0){1.5}}

\multiput(10,10)(5,5){3}{\circle*{1}}
\multiput(25,20)(5,0){6}{\circle*{1}}

\multiput(5,20)(5,0){10}{\circle{2}}
\multiput(5,15)(5,0){4}{\circle{2}}
\multiput(5,10)(5,0){4}{\circle{2}}
\multiput(10,5)(5,0){3}{\circle{2}}

\multiput(25,15)(5,0){6}{\circle*{1}}
\multiput(30,10)(5,0){5}{\circle*{1}}
\multiput(35,5)(5,0){4}{\circle*{1}}

\multiput(15,25)(5,0){8}{\circle*{1}}
\multiput(10,30)(5,0){9}{\circle*{1}}
\multiput(5,35)(5,0){10}{\circle*{1}}
\multiput(5,40)(5,0){10}{\circle*{1}}
\multiput(5,45)(5,0){10}{\circle*{1}}
\multiput(5,50)(5,0){10}{\circle*{1}}

\multiput(5,20)(5,0){3}{\circle{1}}
\multiput(5,15)(5,0){4}{\circle{1}}
\multiput(5,10)(5,0){5}{\circle{1}}
\multiput(10,5)(5,0){5}{\circle{1}}

\multiput(5,25)(5,0){2}{\circle{1}}
\multiput(5,30)(5,0){2}{\circle{1}}

\put(2,1){\scriptsize $0$}
\put(19,0){\scriptsize $1$}
\put(34,0){\scriptsize $2$}
\put(49,0){\scriptsize $3$}
\put(1,19){\scriptsize $1$}
\put(1,34){\scriptsize $2$}
\put(1,49){\scriptsize $3$}

%%%%%%%%%%%%%%%%%%%%%%%%%

\put(65,45){\circle*{1}} \put(65,45){\circle{2} \scriptsize{: a soliton}}
\put(65,39){\circle{1}} \put(65,39){\circle{2} \scriptsize{: a pulsar}}
\put(65,33){\circle*{1} \scriptsize{~: a relaxation soliton}}
\put(65,27){\circle{1} \scriptsize{~: a relaxation pulsar}}

\end{picture}
\caption{Phase diagram for $\Phi(3,1)$ on $\Z/3$.}
\label{fig:phase-diagram}
\end{figure}
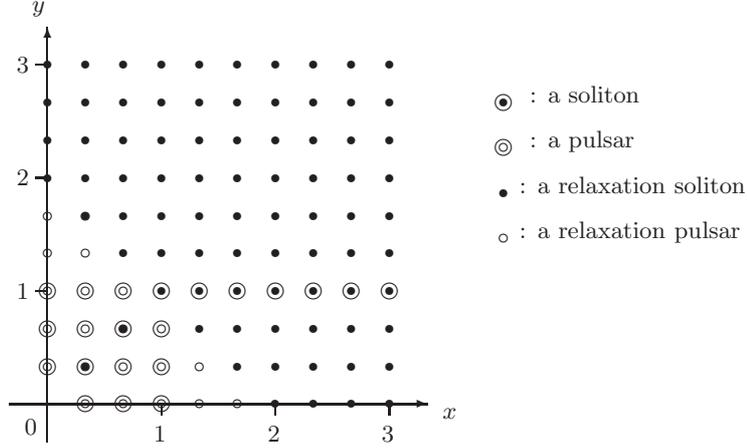

In the phase diagram, the relaxation pulsar is defined to be
a finite sequence of non-vacuum states 
which reduces to a pulsar after a few time steps.
The points $(x,y)$ corresponding to solitons are on a line 
$x=y$ when $x \leq 1$ and on a line $y=1$ when $x \geq 1$.  
We observe that some relaxation solitons $(x,y)$ are reduced to
solitons $(x+y-1,1)$, and the others are reduced to solitons 
out of the diagram, whose minimal lengths are more than one.
This is the same for relaxation pulsars.
See Example \ref{ex:rel-sol} for a relaxation soliton given by
$(x,y)=(\frac{7}{3}, \frac{2}{3})$,
where one sees that the length of the resulting soliton is always two.

\begin{example}\label{ex:rel-sol}
A relaxation soliton of $\Phi(3,1)$ on $\Z/3$:
\begin{align*}
& (\bY_i^t)_i & & (\bZ_i^t)_i
\\[1mm]
t = 0: & (00)(\dds \ddt)(00)(00)(00)(00)(00)(00) & & (10)(10)(0 \ddt)(\ddt 0)(\ddt 0)(\ddt 0)(\ddt 0)(\ddt 0) 
\\
t = 1: & (00)(\ddf 1)(\ddt 0)(00)(00)(00)(00)(00) & & (10)(10)(01)(10)(10)(10)(10)(10) 
\\
t = 2: & (00)(\ddo 1)(\ddfi 0)(00)(00)(00)(00)(00) & & (10)(10)(\ddt 1)(10)(10)(10)(10)(10) 
\\
t = 3: & (00)(0 \ddo)(2 \ddt)(00)(00)(00)(00)(00) & & (10)(10)(1 \ddo)(\ddo \ddt)(10)(10)(10)(10) 
\\
t = 4: & (00)(00)(\ddf 1)(\ddt 0)(00)(00)(00)(00) & & (10)(10)(10)(01)(10)(10)(10)(10)
\\
t = 5: & (00)(00)(\ddo 1)(\ddfi 0)(00)(00)(00)(00) & & (10)(10)(10)(\ddt 1)(10)(10)(10)(10)
\end{align*}
\end{example}

%%%%%%%%%%%%%%%%%%%%%%%
\appendix

\section{Tropical semifield}

\subsection{Tropical limit}
\label{sec:app1}

To a substruction-free rational map, we associate a 
piecewise-linear map via a limiting procedure called {\it tropicalization}. 

The algebra $(\R \cup \{ \infty \}, \oplus, \odot)$ is called the {\it min-plus algebra} (or the {\it tropical semifield}), 
where an addition $\oplus$ and a multiplication $\odot$ are defined by
$$
a \oplus b := \min[a, b], 
\qquad 
a \odot b := a + b.
$$
Note that $\infty$ corresponds to zero in the algebra:
we have $\infty \oplus a = a$ and 
$\infty \odot a = \infty$ for any $a \in \R$.  Moreover
we have an inverse of $\odot$, $a \odot (-a) = 0$, 
but not an inverse of $\oplus$.
In the following we also write $\min$ and $+$ for  
$\oplus$ and $\odot$.

The substruction-free algebra
$(\R_{>0},+,\times)$ is formally linked to the min-plus algebra in the following way.
We define a map $\mathrm{Log}_\ve : \R_{>0} \to \R$ with an infinitesimal 
parameter $\ve > 0$ by
\begin{align}
  \label{i:loge-map}
  \mathrm{Log}_\ve : a \mapsto - \ve \log a.
\end{align}
For $a > 0$, define $A \in \R$ by $a = e^{-\frac{A}{\ve}}$.
Then we have $\mathrm{Log}_\ve (a) = A$. 
Moreover, for $a, b > 0$ define $A,B \in \R$ by $a = e^{-\frac{A}{\ve}}$ and 
$b = e^{-\frac{B}{\ve}}$. 
Then we have  
\begin{align*}
&\mathrm{Log}_\ve (a + b) = 
  -\ve \log (e^{-\frac{A}{\ve}} + e^{-\frac{B}{\ve}})
\stackrel{\ve \to 0}{\mapsto} \min(A,B),
\\ 
&\mathrm{Log}_\ve (a \times b) = A + B.
\end{align*}
In summary, {\it tropicalization} is a procedure which reduce 
the algebra $(\R_{>0},+,\times)$ to the min-plus algebra
by the procedure $\lim_{\ve \to 0} \mathrm{Log}_\ve$ 
with the scale transformation as $a = e^{-\frac{A}{\ve}}$.

Via tropicalization, 
substruction-free rational maps on $\R_{>0}$ formally reduce to 
piecewise-linear maps on $\R$.
We may be able to restrict the resulted piecewise-linear map on $\R$ to that on $\Z$,
which is sometimes called the {\it ultradiscretization} of the original
rational map.

\subsection{Valuation field}
\label{sec:app2}

Let $K = \C\{\{t\}\} := \cup_{n \geq 1}\C((t^{1/n}))$ be the field of 
Puiseux series over $\C$. The field $K$ is an algebraically closed field with non-trivial valuation, where the valuation map $\val: K \to \R \cup \{\infty\}$ is given by
$$
\val: b_1 t^{a_1} + b_2 t^{a_2} + \cdots \mapsto a_1 \quad 
\text{ if $b_1 \neq 0$, $a_1 < a_2 < \cdots \in \Z/n$ for some $n \geq 1$.}
$$

We recall the axioms for the valuation map on $K$: 
\begin{itemize}
\item[(i)] 
$\val(a) = \infty$ iff $a=0$,

\item[(ii)]
$\val(a b) = \val(a) + \val(b)$ for any $a,b \in K$,

\item[(iii)]
$\val(a+b) \geq \min[ \val(a), \val(b)]$ for any $a,b \in K$.  
\end{itemize}
As for the last axiom, we have an important lemma:
\begin{lem}\label{lem:non-arch-eq}
For $a,b \in K$, if $\val(a) \neq \val(b)$, then an equality holds in
the above (iii), i.e.  $\val(a+b) = \min[ \val(a), \val(b)]$.
\end{lem}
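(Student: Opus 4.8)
The plan is to reduce the claimed equality to the given inequality (iii), applied in the reverse direction via the standard non-archimedean trick of rewriting the smaller summand in terms of the sum. Without loss of generality I would assume $\val(a) < \val(b)$, so that $\min[\val(a),\val(b)] = \val(a)$; the case $\val(b) < \val(a)$ is symmetric, and the case $\val(a) = \val(b)$ is excluded by hypothesis. By axiom (iii) we already have $\val(a+b) \geq \val(a)$, so it suffices to establish the reverse inequality $\val(a+b) \leq \val(a)$.

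Before the main step I would record the auxiliary fact that $\val(-x) = \val(x)$ for every $x \in K$. Applying (ii) to $1 = 1 \cdot 1$ gives $\val(1) = 2\val(1)$, and since $1 \neq 0$ axiom (i) forces $\val(1) \neq \infty$, whence $\val(1) = 0$; applying (ii) to $1 = (-1)(-1)$ then yields $2\val(-1) = 0$, so $\val(-1) = 0$, and finally $\val(-x) = \val(-1) + \val(x) = \val(x)$. With this in hand I would write $a = (a+b) + (-b)$ and apply (iii) to obtain
$$
\val(a) \geq \min[\val(a+b), \val(-b)] = \min[\val(a+b), \val(b)].
$$
Suppose, toward a contradiction, that $\val(a+b) > \val(a)$. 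Since also $\val(b) > \val(a)$ by assumption, both arguments of the minimum strictly exceed $\val(a)$, giving $\val(a) \geq \min[\val(a+b),\val(b)] > \val(a)$, which is absurd. Hence $\val(a+b) \leq \val(a)$, and combined with the inequality from (iii) we conclude $\val(a+b) = \val(a) = \min[\val(a),\val(b)]$.

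The argument is entirely routine and there is no substantial obstacle. The only point requiring slight care is the decision to apply (iii) to the rearrangement $a = (a+b) + (-b)$, rather than to $a+b$ directly, together with the bookkeeping identity $\val(-b) = \val(b)$; both are standard features of valuation theory, and the remainder is a short contradiction argument.
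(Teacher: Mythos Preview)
Your proof is correct and complete; the standard non-archimedean trick of rewriting $a = (a+b) + (-b)$ and applying axiom (iii) in reverse is exactly the right idea, and your derivation of $\val(-x) = \val(x)$ from the axioms is clean.

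The paper itself does not supply a proof at all: it simply refers the reader to \cite[Lemma 2.1.1]{MacSturm-book}. So rather than taking a different route, you have actually filled in what the paper leaves as a citation. The argument you give is the canonical one and is essentially what one finds in Maclagan--Sturmfels or any standard treatment of valuations.
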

For the proof, see \cite[Lemma 2.1.1]{MacSturm-book} for example.

The tropicalization can be regarded as the following composition map:
$$
\begin{matrix}
\R_{>0} & \longrightarrow & K & \stackrel{\val}{\longrightarrow} 
& \R \cup \{\infty \}
\\
a = e^{-\frac{A}{\ve}} & \mapsto & t^A & \mapsto & A
\end{matrix}.
$$
We write $\trop(a) = A$ for the image $A$ of $a$ under this map.

\end{document}